\newtheoremstyle{plain}{15pt}{15pt}{\itshape}{}{\bfseries}{.}{.5em}{}
\newtheoremstyle{definition}{15pt}{20pt}{}{}{\bfseries}{.}{.5em}{}
\theoremstyle{plain}
\theoremstyle{definition}
\newtheorem{notation}[theorem]{Notation}
\theoremstyle{remark}
\colorlet{linkcolor}{red!60!black} %black
\newcommand{\customqed}{\hfill $\blacksquare$}
\def\orcidID#1{\href{http://orcid.org/#1}{\protect\raisebox{-1.25pt}{\protect\includegraphics{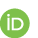}}}}
\DeclareTextFontCommand{\myemph}{\bfseries\em}
\newcommand{\qmarks}[1]{``#1''}
\DeclarePairedDelimiter\set{\{}{\}}
\DeclarePairedDelimiterX\setvbar[2]{\{}{\}}{#1 \nonscript\;\delimsize \vert \nonscript\; #2}
\DeclarePairedDelimiterX\setcolon[2]{\{}{\}}{#1 : #2}
\newcommand{\longequal}{{=\joinrel=}}
\newcommand*{\defeq}{\mathrel{\vcenter{\baselineskip0.5ex \lineskiplimit0pt
                    \hbox{\scriptsize.}\hbox{\scriptsize.}}}%
                    =}
\DeclareMathOperator{\ima}{im}
\newcommand{\inv}{^{\raisebox{.2ex}{$\scriptscriptstyle-1$}}}
\newcommand\quot[2]{{^{\textstyle #1}\big/_{\textstyle #2}}}
\newcommand\restrict[1]{\raisebox{-.6ex}{$|$}_{\scriptscriptstyle #1}}
\newcommand{\N}{\ensuremath{\mathbb{N}}\xspace}
\newcommand{\Q}{\mathbb{Q}}
\newcommand{\R}{\mathbb{R}}
\newcommand{\id}[1]{\mathrm{id}_{#1}}
\let\emptyset\varnothing
\newcommand{\mysetminusD}{\hbox{\tikz[baseline=0]{\draw[-,line width=0.6pt, line cap=round] (3pt, 0) -- (0, 6pt);}}}
\newcommand{\mysetminusT}{\mysetminusD}
\newcommand{\mysetminusS}{\hbox{\tikz[baseline=0]{\draw[-,line width=0.45pt, line cap=round] (2pt, 0) -- (0, 4pt);}}}
\newcommand{\mysetminusSS}{\hbox{\tikz[baseline=0]{\draw[-,line width=0.4pt, line cap=round] (1.5pt, 0) -- (0, 3pt);}}}
\renewcommand{\setminus}{\mathbin{\mathchoice{\mysetminusD}{\mysetminusT}{\mysetminusS}{\mysetminusSS}}}
\renewcommand{\leq}{\leqslant}
\renewcommand{\geq}{\geqslant}
\newcommand*\meet{\wedge}
\newcommand*\join{\vee}
\newcommand*\bigmeet{\bigwedge}
\newcommand{\true}{\ensuremath{\mathsf{True}}\xspace}
\let\lnottemp\lnot
\renewcommand{\lnot}{\lnottemp \hspace*{0.1em}}
\let\oldexists\exists
\let\exists\relax
\newcommand{\exists}{\hspace*{0em}\oldexists\hspace*{0.07em}}
\let\oldforall\forall
\let\forall\relax 
\newcommand{\forall}{\hspace*{0em}\oldforall\hspace*{0.07em}}
\newcommand{\cat}[1]{\mathsf{#1}} %{{\normalfont\textbf{#1}}\xspace}
\newcommand{\Set}{\cat{Set}}
\newcommand{\FinSet}{\cat{FinSet}}
\newcommand{\Graph}{\cat{Graph}}
\newcommand{\labels}{\mathcal{L}}
\newcommand{\elements}[1]{\mathrm{el}(#1)}
\newcommand{\inlsym}{{\mathsf{inl}}}
\newcommand{\inl}{\ensuremath\operatorname{\inlsym}\xspace}
\newcommand{\inclusion}{{\mathsf{incl}}}
\newcommand{\opcat}[1]{{#1}^{\mathrm{op}}}
\newcommand{\FuzzyGraph}{\cat{FuzzyGraph}}
\newcommand{\FuzzySet}[1]{\cat{FuzzySet(#1)}}
\newcommand{\FuzzySetDefault}{\FuzzySet{\labels}}
\newcommand{\FunctorCat}[2]{{#1}^{#2}}
\newcommand{\Presheaf}[1]{\FunctorCat{\Set}{\opcat{#1}}}
\newcommand{\PresheafHat}[1]{\hat{#1}}
\newcommand{\PresheafDefault}{\Presheaf{I}}
\newcommand{\PresheafHatDefault}{\PresheafHat{I}}
\newcommand{\FuzzyPresheaf}[2]{\cat{FuzzyPresheaf}(#1,#2)}
\newcommand{\FuzzyPresheafDefault}{\FuzzyPresheaf{I}{\labels}}
\DeclareMathOperator{\Hom}{Hom}
\DeclareMathOperator{\MonoClass}{Mono}
\newcommand{\mono}{\rightarrowtail}
\newcommand{\regmono}{\hookrightarrow}
\newcommand{\epimono}{
    \rightarrowtail\mathrel{\mspace{-15mu}}\rightarrow
}
\newcommand{\equivcat}{\simeq}
\newcommand{\isom}{\cong}
\newcommand{\Sub}{\mathsf{Sub}}
\newcommand{\pullbackby}[1]{{#1}^\leftarrow}
\newcommand{\refl}{\mathsf{refl}}
\newcommand{\sym}{\mathsf{sym}}
\newcommand{\powerset}{{\mathcal{P}}}
\newcommand{\topology}{\tau}
\DeclareSymbolFont{largesymbolsstix}{LS2}{stixex}{m}{n}
\DeclareMathDelimiter{\lParen}{\mathopen}{largesymbolsstix}{"DE}{largesymbolsstix}{"02}
\DeclareMathDelimiter{\rParen}{\mathclose}{largesymbolsstix}{"DF}{largesymbolsstix}{"03}
\DeclareMathDelimiter{\lBrack}{\mathopen}{largesymbolsstix}{"E0}{largesymbolsstix}{"06}
\DeclareMathDelimiter{\rBrack}{\mathclose}{largesymbolsstix}{"E1}{largesymbolsstix}{"07}
\DeclareMathDelimiter{\lBrace}{\mathopen}{largesymbolsstix}{"E8}{largesymbolsstix}{"0E}
\DeclareMathDelimiter{\rBrace}{\mathclose}{largesymbolsstix}{"E9}{largesymbolsstix}{"0F}
\DeclareMathDelimiter{\lbrbrak}{\mathopen} {largesymbolsstix}{"EE}{largesymbolsstix}{"14}
\DeclareMathDelimiter{\rbrbrak}{\mathclose}{largesymbolsstix}{"EF}{largesymbolsstix}{"15}
\DeclareFontFamily{OMX}{MnSymbolE}{}
\DeclareFontShape{OMX}{MnSymbolE}{m}{n}{
    <-6>  MnSymbolE5
   <6-7>  MnSymbolE6
   <7-8>  MnSymbolE7
   <8-9>  MnSymbolE8
   <9-10> MnSymbolE9
  <10-12> MnSymbolE10
  <12->   MnSymbolE12}{}
\DeclareFontShape{OMX}{MnSymbolE}{b}{n}{
    <-6>  MnSymbolE-Bold5
   <6-7>  MnSymbolE-Bold6
   <7-8>  MnSymbolE-Bold7
   <8-9>  MnSymbolE-Bold8
   <9-10> MnSymbolE-Bold9
  <10-12> MnSymbolE-Bold10
  <12->   MnSymbolE-Bold12}{}
\DeclareSymbolFont{largesymbolsmnsymbol}  {OMX}{MnSymbolE}{m}{n}
\DeclareMathDelimiter{\langlebar}{\mathopen}{largesymbolsmnsymbol}{'152}{largesymbolsmnsymbol}{'152}
\DeclareMathDelimiter{\ranglebar}{\mathclose}{largesymbolsmnsymbol}{'157}{largesymbolsmnsymbol}{'157}
\newcommand{\doublearrowtail@inner}[2]{%
  \vcenter{\offinterlineskip
    \halign{%
      ##\cr
      $\m@th#1\rightarrowtail$\cr
      \makebox[\widthof{$\m@th#1\rightarrowtail$}][s]{%
        $\m@th#1\leftarrowtail$%
      }\cr
    }%
  }%
}
\DeclareRobustCommand{\doublearrowtail}{%
  \mathrel{\mathpalette\doublearrowtail@inner\relax}%
}
\newcommand{\rightleftmapsto@inner}[2]{
  \vcenter{\offinterlineskip
    \halign{
      ##\cr
      $\m@th#1\mapsto$\cr
      \makebox[\widthof{$\m@th#1\mapsto$}][s]{
        $\m@th#1\mapsfrom$
      }\cr
    }
  }
}
\DeclareRobustCommand{\rightleftmapsto}{
  \mathrel{\mathpalette\rightleftmapsto@inner\relax}
}
\newcommand{\mapstofrom}{\mathpalette\@mapstofrom\relax}
\newcommand*{\@mapstofrom}[2]{%
   \dimen@\fontdimen8
       \ifx#1\displaystyle\textfont\else
       \ifx#1\textstyle\textfont\else
       \ifx#1\scriptstyle\scriptfont\else
       \scriptscriptfont\fi\fi\fi 3
   \mathrel{%
      \vcenter{%
         \vbox{%
            \baselineskip\z@skip
            \lineskip\z@
            \ialign{##\cr$#1\mapstochar\varrightarrow$\cr
            \noalign{\kern\dimen@}%
            $#1\varleftarrow\mapsfromchar$\cr}%
         }%
      }%
   }%
}
\newcommand{\pbpostrong}{PBPO$^{+}$\xspace}
\newenvironment{mysidepicture}[4]{
\nointerlineskip\noindent
\begin{tikzpicture}[overlay]%
  \node at (\textwidth,0) [anchor=north east,rectangle,inner sep=0,outer sep=0,xshift=#2,yshift=#3] {#4};%
\end{tikzpicture}%
\begin{adjustwidth}{0cm}{#1}%
}{%
\end{adjustwidth}%
}
\begin{document}

\title{Fuzzy Presheaves are Quasitoposes}

\subtitle{}

\author{
    Alo\"is Rosset  \orcidID{0000-0002-7841-2318} \textsuperscript{\faEnvelopeO} 
    \and
    Roy Overbeek \orcidID{0000-0003-0569-0947}
    \and 
    J\"{o}rg Endrullis
}

\authorrunning{Alo\"is Rosset, Roy Overbeek and J\"{o}rg Endrullis} % 

\institute{
    Vrije Universiteit Amsterdam, Amsterdam, The Netherlands \\
    \email{\{a.rosset, r.overbeek, j.endrullis\}@vu.nl} %
}

\maketitle

\begin{abstract}
    Quasitoposes encompass a wide range of structures, including various categories of graphs.
    They have proven to be a natural setting for reasoning about the metatheory of algebraic graph rewriting.
    In this paper we propose and motivate the notion of \emph{fuzzy presheaves}, which generalises fuzzy sets and fuzzy graphs. 
    We prove that fuzzy presheaves are rm-adhesive quasitoposes, proving our recent conjecture for fuzzy graphs.
    Furthermore, we show that simple fuzzy graphs categories are quasitoposes.
    \keywords{Quasitopos  \and Presheaf \and Fuzzy set \and Graph rewriting}
\end{abstract}

\noindent 
% Background: Algebraic graph rewriting.
The algebraic graph transformation tradition uses category theory to define graph rewriting formalisms.
The categorical approach enables rewriting a large variety of structures,
and allows meta-properties to be studied in uniform ways, such as concurrency, parallelism, termination, and confluence.
Different formalisms have been developed and studied since 1973, such as
DPO~\cite{Ehrig_Pfender_Schneider_1973_DPO},
SPO~\cite{Lowe_1993_SPO},
DPU~\cite{Bauderon_1995_DPU},
SqPO~\cite{Corradini_Heindel_Hermann_Konig_2006_SqPO},
AGREE~\cite{Corradini_Duval_Echahed_Prost_Ribeiro_2015_AGREE},
PBPO~\cite{Corradini_Duval_Echahed_Prost_Ribeiro_2019_PBPO}, and 
\pbpostrong{}~\cite{Overbeek_Endrullis_Rosset_2020_PBPO+}.

% Motivations: Having graph categoris as quasitoposes is nice:
% - Quasitoposes provide a unifying setting (PBPO+ journal paper)
% - Termination method developed by Roy and Joerg
The study of meta-properties on a categorical level has given new motivation to study existing concepts, such as quasitoposes~\cite{Wyler_1991}, and has given rise to new ones, such as different notions of adhesivity~\cite{Lack_Sobocinski_Walukiewicz_2004_Adhesive_categories, Lack_Sobocinski_2005_Adhesive_quasiadhesive_categories}.
Quasitoposes are categories with rich structure.
They encompass many examples of particular interest in computer science~\cite{Johnstone_Lack_Sobocinski_2007_Quasitoposes_quasiadhesive_artin_glueing}.
They have moreover been proposed as a natural setting for non-linear rewriting~\cite{Behr_Harmer_Krivine_2021_Concurrency_theorems_quasitoposes}.
In addition, they provide a unifying setting: we have shown that \pbpostrong{} subsumes DPO, SqPO, AGREE and PBPO in quasitoposes~\cite[Theorem 73]{Overbeek_Endrullis_Rosset_2023_PBPO+_Quasitopos}.\footnote{We moreover conjecture that \pbpostrong{} subsumes SPO in quasitoposes~\cite[Rem.~27]{Overbeek_Endrullis_Rosset_2023_PBPO+_Quasitopos}.}

% Second background paragraph: serves to introduce the research question
% Research question: Are fuzzy graphs a quasitopos?
Fuzzy sets generalise the usual notion of sets by allowing elements to have membership values in the unit interval \cite{Zadeh_1965_Fuzzy_sets} or more generally in a poset \cite{Goguen_1967_L_Fuzzy_sets}.
Similarly, fuzzy graphs~\cite{Mori_1995_Fuzzy_graph_rewriting} and fuzzy Petri nets~\cite{Cardoso_Valette_Dubois_1996_Fuzzy_Petri_nets, Srivastava_Tiwari_2011_Fuzzy_petri_nets} have been defined by allowing elements to have membership values.
Having a range of possible membership is used in artificial intelligence for example to express a degree of certainty or strength of a connection.
In graph rewriting, it can be used to restrict matching~\cite{Mori_1995_Fuzzy_graph_rewriting,Parasyuk_2006_Categorical_approach_fuzzy_graph_grammars,Parasyuk_2007_Transformations_of_fuzzy_graphs,Parasyuk_2008_Transformational_approach_fuzzy_graph_models}.
We have recently shown that for fuzzy graphs, where we see the membership value as a label, the order structure on the labelling set lends itself well for implementing relabelling mechanics and type hierarchies~\cite{Overbeek_Endrullis_Rosset_2020_PBPO+,Overbeek_Endrullis_Rosset_2023_PBPO+_Quasitopos}.
These features have proved useful, for example, for  faithfully modelling linear term rewriting with graph rewriting~\cite{Overbeek_2021_Linear_term_rewriting_and_Termination}.

% Recent conjecture: fuzzy graphs form a quasitopos
% The chief aim of this paper was to prove our conjecture that the category of fuzzy graphs is a quasitopos when the membership is taken from a Heyting algebra \cite[Conjecture ?]{Overbeek_Endrullis_Rosset_2023_PBPO+_Quasitopos}.
% Our main contributions are in fact much more general, and can be summarised as follows:
% Contribution:
In this paper, we do the following.
\begin{enumerate}
    \item 
    We propose the notion of a \emph{fuzzy presheaf}, which is a presheaf endowed pointwise with a membership function, generalising fuzzy sets and fuzzy graphs.
    
    \item
    We show that fuzzy presheaf categories are rm-adhesive quasitoposes, when the membership values are taken in a complete Heyting algebra (\cref{thm:fuzzy_presheaf_is_a_quasitopos,thm:fuzzy_presheaf_is_rm_adhesive}).
    We obtain as a corollary the known result that fuzzy sets form a quasitopos, prove our conjecture that fuzzy graph form a quasitopos~\cite{Overbeek_Endrullis_Rosset_2023_PBPO+_Quasitopos}, and as a new results that fuzzy sets, graphs, hypergraphs and other fuzzy presheaves are rm-adhesive quasitoposes.
    
    \item
    We examine the related question whether simple fuzzy graphs form a quasitopos. 
    Because simple graphs are not presheaves (nor rm-adhesive), our main theorem cannot be applied.
    However, we show that (directed and undirected) simple fuzzy graphs are quasitoposes by using the notion of separated elements.
\end{enumerate}

An immediate practical application of our result is that the termination technique~\cite{Overbeek_2023_Termination_of_graph_transformation}, recently developed for rm-adhesive quasitoposes is applicable to rewriting fuzzy presheaves. In addition, it follows that our already mentioned subsumption result for \pbpostrong{} holds for fuzzy presheaf categories~\cite[Theorem 73]{Overbeek_Endrullis_Rosset_2023_PBPO+_Quasitopos}.

The paper is structured as follows.
\cref{sec:preliminaries} introduces preliminary definitions of presheaves, Heyting algebra and quasitoposes.
\cref{sec:main_theorem} proves that fuzzy presheaves are quasitoposes.
\cref{sec:rm_adhesive} proves that fuzzy presheaves are rm-adhesive.
\cref{sec:examples} looks at examples and applications.
\cref{sec:simple_fuzzy_graphs} establishes that simple fuzzy graph categories are also quasitoposes. 
\cref{sec:conclusion} discusses future work and concludes.

% Version of the last paragraph with active voice:
% The paper is structured as follows.
% In \cref{sec:preliminaries}, we introduce preliminary definitions of presheaves, quasitoposes and Heyting algebras.
% We then prove our main result that fuzzy presheaves are quasitoposes (\cref{sec:main_theorem}), followed by a proof that fuzzy presheaves are rm-adhesive (\cref{sec:rm_adhesive}).
% After, we look at examples and applications of our result (\cref{sec:examples}).
% In \cref{sec:simple_fuzzy_graphs}, we prove the closely related result that simple fuzzy graph categories are also quasitoposes.
% Finally, we close with some concluding remarks and a discussion of future work (\cref{sec:conclusion}).
% }

\section{Preliminaries}
\label{sec:preliminaries}

We assume the reader is familiar with basic notions of category theory
\cite{Awodey_2006, MacLane_1971_Categories_for_the_working_mathematician}.
In this section, we recall basic definitions about fuzzy sets, presheaves, graphs, Heyting algebra and quasitoposes and fix notation concerning the Yoneda embedding, regular monomorphisms and subobjects.

\begin{definition}%[Contravariant $\Hom$-functor]
    \label{def:contravariant_hom_functor}
    Let $I$ be a locally small category.
    The \myemph{contravariant $\Hom$-functor}
    ${I(-,i) : \opcat{I} \to \Set}$ is defined 
    on objects $j \in \opcat{I}$ as the set of $I$-morphisms $I(j,i)$, and
    on $\opcat{I}$-morphisms $\opcat{\kappa}: j \rightarrow k$ (i.e. $\kappa: k \rightarrow j$ in $I$) as the precomposition
    \(
         I(j,i) \xrightarrow{- \circ \kappa} I(k,i):
         \;
         (j \xrightarrow{\iota} i)
         \ \xmapsto \ \ 
         (k \xrightarrow{\kappa} j \xrightarrow{\iota} i).
    \)
\end{definition}

\begin{definition}%[Yoneda Embedding]
    \label{def:yoneda_embedding}
    Let $I$ be a locally small category.
    The (contravariant) \myemph{Yoneda embedding} $y : I \to \PresheafDefault$ is defined 
    on objects $y(i \in I) = I(-,i)$ as the contravariant $\Hom$-functor, and
    on morphisms $y(j \xrightarrow{\iota} i)$ as the natural transformation that postcomposes by $\iota$,
    i.e., for every $k \in I$:
    \[
        I(k,j) \xrightarrow{\iota \circ -} I(k,i):
        \;
        (k \xrightarrow{\kappa} j)
        \ \xmapsto \ \ 
        (k \xrightarrow{\kappa} j \xrightarrow{\iota} i).
    \]
    % In short, $y(i)(\kappa)$ is precomposition with $\kappa$ and $y(\iota)(k)$ is postcomposition with $\iota$.
\end{definition}

% \begin{mysidepicture}{2.4cm}{0.5ex}{-2.5ex}{}
% \end{mysidepicture}
\noindent\begin{minipage}{.8\linewidth}
\begin{definition}[{\cite{Zadeh_1965_Fuzzy_sets}}]%[Fuzzy Set]
    \label{def:fuzzy_set}
    Given a poset $(\labels,\leq)$, an \myemph{$\labels$-fuzzy set} is a pair $(A,\alpha)$ consisting of a set $A$ and a \myemph{membership function} $\alpha:A \to \labels$.
    A morphism $f:(A,\alpha) \to (B,\beta)$ of fuzzy sets is a function $f:A \to B$ such that $\alpha \leq \beta f$, i.e., $\alpha(a) \leq \beta f (a)$ for all $a \in A$.
    They form the category $\FuzzySetDefault$.
    Membership functions are represented in grey.
    We express that $\alpha \leq \beta f$ by the diagram on the right, which we call \myemph{$\leq$-commuting}.
\end{definition}
\end{minipage}
\hfill
\begin{minipage}{.2\linewidth}
\begin{center}
\begin{tikzcd}[column sep=2mm, row sep=5mm,ampersand replacement=\&]
    A \& {} \& B \\
    \& {\color{gray}\labels}
    \arrow["\alpha"', color=gray, from=1-1, to=2-2]
    \arrow["\beta", color=gray, from=1-3, to=2-2]
    \arrow["f", from=1-1, to=1-3]
    \arrow["\leq"{description}, color=gray, draw=none, from=1-2, to=2-2,pos=0.4]
\end{tikzcd}
\end{center}
\end{minipage}

\begin{definition}%[Presheaf]
    \label{def:presheaf}
    A \myemph{presheaf} on a category $I$ is a functor $F : \opcat{I} \to \Set$.
    A morphism of presheaves is a natural transformation between functors.
    The category of presheaves on $I$ is denoted by $\PresheafDefault$ or $\PresheafHatDefault$. 
\end{definition}

\begin{definition}
    \label{def:graph}
    A \myemph{graph} $A$ consists of a set of vertices $A(V)$, a set of edges $A(E)$, and source and target functions $A(s),A(t) : A(E) \to A(V)$.
    A \myemph{graph homomorphism} $f : A \to B$ is a pair of functions $f_V : A(V) \to B(V)$ and $f_E : A(E) \to B(E)$ respecting sources and targets, i.e., $(B(s),B(t)) \cdot f_E = (f_V \times f_V) \cdot (A(s),A(t))$.
    They form the category $\Graph$, which is in fact the presheaf category on 
    \(
        \opcat{I} = 
        \smash{\begin{tikzcd}[ampersand replacement=\&]
            E
                \ar[r, "s"{description}, shift left=3pt]
                \ar[r, "t"{description}, shift right=3pt]
            \& V
        \end{tikzcd}}
    \).
\end{definition}

\begin{definition}[{\cite{Overbeek_Endrullis_Rosset_2020_PBPO+}}]
    \label{def:fuzzy_graph}
    A \myemph{fuzzy graph} $(A,\alpha)$ consists of a graph $A$ and two membership functions $\alpha_V : A(V) \to \labels(V)$, and $\alpha_E : A(E) \to \labels(E)$, where $\labels(V)$ and $\labels(E)$ are posets.
    A \myemph{fuzzy graph homomorphism} $f:(A,\alpha) \to (B,\beta)$ is a graph homomorphism $f : A \to B$ such that $\alpha_V \leq \beta_V f_V$ and $\alpha_E \leq \beta_E f_E$.
    They form the category $\FuzzyGraph$.
\end{definition}

\begin{remark}
    \label{rem:alternate_definition_of_fuzzy_graph}
    Some authors require that fuzzy graphs satisfy $\labels(V) = \labels(E)$ and $\alpha_E(e) \leq \alpha_V(s(e)) \meet \alpha_V (t(e))$ for all edges $e \in A(E)$ \cite{Rosenfeld_1975_Fuzzy_graphs}.
    For survey works on fuzzy graphs, see e.g.~\cite{Mathew_2018_Fuzzy_graph_theory,Pal_2020_Modern_trends_in_fuzzy_graph_theory}.
\end{remark}

\begin{definition}%[Regular monomorphisms]
    \label{def:regular_monomorphism}
    A monomorphism $m : A \mono B$ is called \myemph{regular}, and denoted $m : A \regmono B$, if it is the equaliser of some  pair of parallel morphisms $B \rightrightarrows C$.
\end{definition}

% \begin{mysidepicture}{2.4cm}{0.5ex}{-5ex}{}
% \end{mysidepicture}
\noindent\begin{minipage}{.8\linewidth}
\begin{definition}
    \label{def:subobject}
    Monomorphisms into an object $C$ form a class denoted $\MonoClass(C)$.
    There is a preorder on this class: $m \leq m'$ if there exists $u$ such that $m = m'u$. 
    The equivalence relation $m \simeq m' \defeq (m \leq m' \; \meet \; m \geq m')$ is equivalent to having $m = m'u$ for some isomorphism $u$.
    The equivalence classes are the \myemph{subobjects} of $C$. 
    If each representative of a subobject is a special kind of monomorphism, for instance a regular monomorphism, then we talk of \myemph{regular subobjects}. 
\end{definition}
\end{minipage}
\hfill
\begin{minipage}{.2\linewidth}
\begin{center}
\begin{tikzcd}[column sep=2mm, row sep=5mm,ampersand replacement=\&]
    A \&\& B \\
    \& C \& 
    \ar[from=1-1, to=2-2, tail, "m"']
    \ar[from=1-1, to=1-3, dotted, "u"]
    \ar[from=1-3, to=2-2, tail, "m'"]
\end{tikzcd}
\end{center}
\end{minipage}

\medskip

% \begin{mysidepicture}{2.4cm}{0.5ex}{-3.5ex}{}
% \end{mysidepicture}
\noindent\begin{minipage}{.8\linewidth}
\begin{definition}[{\cite[Def.~14.1]{Wyler_1991}}]%[Subobject-Classifier]
    \label{def:subobject_classifier}
    Let $\cat{C}$ be a category with finite limits, and $\mathcal{M}$ be a class of monomorphisms.
    An \myemph{$\mathcal{M}$-subobject classifier} in $\cat{C}$ is a monomorphism $\true : 1 \mono \Omega$ such that for every monomorphism $m : A \mono B$ in $\mathcal{M}$, there exists a unique $\chi_A : B \to \Omega$ such that
    the diagram on the right is a pullback square.
    When unspecified, $\mathcal{M}$ is all monomorphisms.
\end{definition}
\end{minipage}
\hfill
\begin{minipage}{.2\linewidth}
\begin{center}
\begin{tikzcd}[column sep=5mm, row sep=5mm,ampersand replacement=\&]
    A \& 1 \\
    B \& \Omega
    \ar[from=1-1, to=1-2, "!"]
    \ar[from=1-1, to=2-1, tail, "m"']
    \ar[from=1-2, to=2-2, "\true"]
    \ar[from=2-1, to=2-2, "\chi_A"']
    \ar[from=1-1, to=2-2, "\lrcorner"{anchor=center, pos=0.125}, draw=none]
\end{tikzcd}
\end{center}
\end{minipage}

\begin{definition}%[Cartesian closed and locally cartesian closed]
    \label{def:cartesian_closed_and_locally_cartesian_closed}
    A category $\cat{C}$ is \myemph{cartesian closed} if it has finite products and any two objects $A,B \in \cat{C}$ have an exponential object $B^A \in \cat{C}$.
    In more details, there must exist a functor $-^A : \cat{C} \to \cat{C}$ that is right adjoint to the product functor:
    \(  
        - \times A \dashv -^A.
        % \begin{tikzcd}
        %     - \times A: \cat{C}  & \cat{C} : -^A .
        %     \ar[from=1-1, to=1-2, shift right=1.3]
        %     \ar[from=1-1, to=1-2, phantom, "\scriptscriptstyle\bot"]
        %     \ar[from=1-2, to=1-1, shift right=1.3]
        % \end{tikzcd}
    \)
    A category $\cat{C}$ is \myemph{locally cartesian closed} if all its slice categories $\cat{C}/C$ are cartesian closed.
\end{definition}

\begin{definition}%[Heyting Algebra]
    \label{def:Heyting_Algebra}
    We recall some order theory definitions.
    \begin{itemize}[$\bullet$]
        \item 
        A \myemph{lattice} is a poset $(\labels, \leq)$ that has a binary join $\join$ and binary meet $\meet$.

        \item
        A lattice is \myemph{bounded} if it has a greatest element $\top$ and a least element $\bot$.

        \item
        A lattice is \myemph{complete} if all its subsets have both a join and a meet. 

        \item
        A \myemph{Heyting algebra} is a bounded lattice with, for all elements $a$ and $b$, an element $a \Rightarrow b$ such that for all $c$
        \begin{equation} \label{eq:heyting_algebra_cartesian_closed}
            a \meet b \leq c \iff a \leq b \Rightarrow c
        \end{equation}
        % a greatest element $x$ such that $a \meet x \leq b$.
        % This greatest element is denoted $a \Rightarrow b$ and called the relative pseudo-complement of $a$ with respect to $b$.
    \end{itemize}
\end{definition}

We can view a poset/lattice/Heyting algebra as a category where for any two elements $a,b$ there is at most one morphism from $a$ to $b$, denoted $a \leq b$.\footnote{
    See e.g.\ \cite[Def.~6.7]{Awodey_2006}, \cite[Def.~24.1]{Wyler_1991}, or \cite[8.1]{MacLane_Moerdijk_1994}.
}
Meet and join correspond to binary product and coproduct.
Least and greatest elements correspond to initial and a terminal objects.
A bounded lattice is thus finitely bicomplete.
A lattice is complete in the usual sense if and only if it is complete as a category.\footnote{
    Note that completeness and cocompleteness are equivalent for lattices \cite[Def.~6.8]{Awodey_2006}.
}
Heyting algebras are cartesian closed categories because (\ref{eq:heyting_algebra_cartesian_closed}) is the $\Hom$-set adjunction requirement asking $a \Rightarrow b$ to be an exponential object.
The counit of the adjunction tells us that we have \textit{modus ponens} \cite[I.8, Prop.~3]{MacLane_Moerdijk_1994}: 
\begin{equation} \label{eq:heyting_algebra_modus_ponens}
    a \meet (a \Rightarrow b) = a \meet b.
\end{equation}

\begin{definition}%[Quasitopos]
    \label{def:quasitopos}
    A category is a \myemph{quasitopos} if 
    \begin{itemize}
        \item every finite limit and finite colimit exists,
        \item it has a regular-subobject classifier, and
        \item it is locally cartesian closed.
    \end{itemize}
\end{definition}

\begin{comment}%To put in the arxiv version
\begin{remark}
    \label{rem:alternate_definition_of_quasitopos}
    The definition of quasitopos is sometimes stated with the requirement of having a \textit{strong}-subobject classifier, see e.g.~\cite[p.~37]{Wyler_1991} or \cite[A2.6]{Johnstone_elephant_2002}. % or a extremal-subobject classifier
    Both definitions are equivalent, since in any category, a regular monomorphism is strong, and conversely, asking for strong-subobjects to be classified makes them necessarily regular \cite[A.~Corollary 1.6.2]{Johnstone_elephant_2002}.
\end{remark}
\end{comment}

\section{Fuzzy presheaves are quasitoposes}
\label{sec:main_theorem}

We introduce the notion of a \emph{fuzzy presheaf}, which generalises fuzzy sets and fuzzy graphs.
This allows to obtain in only one definition the additional concepts of fuzzy undirected graphs, fuzzy hypergraphs, fuzzy k-uniform hypergraphs, and plenty more.
For examples see \cref{ex:graph_categories_as_presheaf_categories}.

\begin{definition}[Fuzzy Presheaf]
    \label{def:fuzzy_presheaf}
    Given a category $I$ and a family of posets $(\labels(i),\leq)_{i \in I}$, an \myemph{$\labels$-fuzzy presheaf} is a pair $(A,\alpha)$ consisting of a presheaf $A : \opcat{I} \to \Set$ and a membership function $\alpha : A \to \labels$, or to be more precise, a family of functions $\alpha_i : A(i) \to \labels(i)$ for each $i \in I$.
    A morphism of $\labels$-fuzzy presheaves $f : (A,\alpha) \to (B,\beta)$ is a natural transformation $f :A \to B$ such that $\alpha \leq \beta f$, i.e., $\alpha_i \leq \beta_i f_i$ for all $i \in I$.
    They form a category $\FuzzyPresheafDefault$.
\end{definition}

%This category has proven to be of great use in the graph rewriting formalism \pbpostrong  \cite{Overbeek_Endrullis_Rosset_2020_PBPO+,Overbeek_Endrullis_Rosset_2023_PBPO+_Quasitopos}.
%In this example, the membership functions $\alpha_V : A(V) \to \labels(V)$ and $\alpha_E : A(E) \to \labels(E)$ are seen as two labeling functions, and allow edges and vertices to be labelled with possibly different labeling sets.

Note that by taking each $\labels(i)$ to be a singleton set, we retrieve the usual definition of presheaf.

\medskip

% \begin{mysidepicture}{2.8cm}{1ex}{-.5ex}{}
% \end{mysidepicture}
\noindent\begin{minipage}{.77\linewidth}
\begin{remark}
    \label{rem:alternate_definition_of_fuzzy_presheaf}
    One can strenghten \cref{def:fuzzy_presheaf} by requiring, for each $j \xrightarrow{\iota} i$ in $\I$, a poset morphism $\labels(\iota) : \labels(i) \to \labels(j)$ such that the diagram on the right $=$-commutes or $\leq$-commutes.
    For
    \(
        \opcat{I} = 
        \smash{\begin{tikzcd}[ampersand replacement=\&]
            E
                \ar[r, "s"{description}, shift left=3pt]
                \ar[r, "t"{description}, shift right=3pt]
            \& V
        \end{tikzcd}}
    \), asking for $\labels(E) = \labels(V)$, for $\labels(s)=\labels(t)=\id{}$, and for $\leq$-commutation gives the fuzzy graphs as in \cref{rem:alternate_definition_of_fuzzy_graph}.
\end{remark}
\end{minipage}
\hfill
\begin{minipage}{.23\linewidth}
\begin{center}
\begin{tikzcd}[ampersand replacement=\&, sep=normal]
    A(i) \& A(j) \\
    {\color{gray}\labels(i)} \& {\color{gray}\labels(j)}
    \ar[from=1-1, to=1-2, "A(\iota)"]
    \ar[from=1-1, to=2-1, color=gray, "\alpha_i"']
    \ar[from=1-2, to=2-2, color=gray, "\alpha_j"]
    \ar[from=2-1, to=2-2, color=gray, "\labels(\iota)"']
    \ar[from=1-1, to=2-2, color=gray, "= \text{ or } \leq"{description}, draw=none]
\end{tikzcd}
\end{center}
\end{minipage}

\medskip

The rest of this section is dedicated to proving the next theorem.

\begin{theorem}
    \label{thm:fuzzy_presheaf_is_a_quasitopos}
    If $I$ is a small category and $(\labels(i))_{i \in I}$ is a family of complete Heyting algebras, then $\FuzzyPresheafDefault$ is a quasitopos.
\end{theorem}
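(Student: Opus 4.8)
The plan is to verify the three requirements of \cref{def:quasitopos} one by one, exploiting the forgetful functor $U : \FuzzyPresheafDefault \to \PresheafDefault$ and the fact that $\PresheafDefault$ is a topos, so it is locally cartesian closed, (co)complete, and carries a subobject classifier $\Omega_I$. Since $U$ has a left adjoint (equip a presheaf with the constant $\bot$ membership) and a right adjoint (constant $\top$), it preserves all limits and colimits, and one checks that it creates them: a (co)limit is computed on underlying presheaves, the membership of a limit being the meet $\bigmeet$ of the memberships pulled back along the projections, and the membership of a colimit the join $\bigjoin$ of those pushed forward along the coprojections. Completeness of each $\labels(i)$ is exactly what legitimises the latter joins, since a coprojection — e.g.\ the quotient map of a coequaliser — may have infinite fibres. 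Hence $\FuzzyPresheafDefault$ has all finite limits and colimits.

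For the regular-subobject classifier I would first observe that a morphism of $\FuzzyPresheafDefault$ is monic iff its underlying natural transformation is, and that equalisers are inherited from $\PresheafDefault$ with restricted membership; consequently the regular monomorphisms $(A,\alpha)\regmono(B,\beta)$ are precisely the sub-presheaf inclusions $A\subseteq B$ with $\alpha=\beta|_A$. I claim these are classified by $\true : 1 \mono (\Omega_I,\top)$, where $1 = (1,\top)$ and $\top$ denotes the everywhere-top membership: a morphism $(B,\beta)\to(\Omega_I,\top)$ is merely a natural transformation $B\to\Omega_I$ (the inequality $\beta\le\top\circ\chi$ being vacuous), i.e.\ a sub-presheaf of $B$, and pulling $\true$ back along the map classifying a given $A\subseteq B$ recovers that sub-presheaf with membership $\beta|_A\meet\top=\beta|_A$, uniqueness being inherited from $\PresheafDefault$.

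It remains to establish local cartesian closure. First I would show $\FuzzyPresheafDefault$ itself is cartesian closed, taking $(B,\beta)^{(A,\alpha)}$ to have underlying presheaf the exponential $B^A$ of $\PresheafDefault$ (so $B^A(i)=\Nat(\Hom(-,i)\times A,B)$) and membership
\[
  \theta_i(\phi)\ =\ \bigmeet_{a\in A(i)}\bigl(\alpha_i(a)\Rightarrow\beta_i(\phi_i(\id{i},a))\bigr)\ \in\ \labels(i);
\]
the point is that this meet only uses $a\in A(i)$ and the value of $\phi$ at $\id{i}$, so it stays inside the single complete Heyting algebra $\labels(i)$, and the Heyting adjunction \eqref{eq:heyting_algebra_cartesian_closed} makes currying restrict to a natural bijection $\FuzzyPresheafDefault((C,\gamma)\times(A,\alpha),(B,\beta))\isom\FuzzyPresheafDefault((C,\gamma),(B,\beta)^{(A,\alpha)})$. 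To promote this to local cartesian closure, I would prove a fuzzy analogue of the standard equivalence $\PresheafDefault/C\equivcat\Presheaf{\elements{C}}$, namely
\[
  \FuzzyPresheafDefault/(C,\gamma)\ \equivcat\ \FuzzyPresheaf{\elements{C}}{\labels'},\qquad \labels'(i,c):={\downarrow}\gamma_i(c)\subseteq\labels(i),
\]
with $\elements{C}$ the (small) category of elements of $C$ and each down-set ${\downarrow}\gamma_i(c)$ again a complete Heyting algebra (implication taken in $\labels(i)$, then met with $\gamma_i(c)$). As the right-hand side is a fuzzy presheaf category of the kind covered by the theorem, it is cartesian closed by the previous step, so every slice of $\FuzzyPresheafDefault$ is cartesian closed — which is local cartesian closure.

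I expect the finite (co)limits and the regular-subobject classifier to be easy once the above descriptions are in hand; the real work lies in the cartesian-closure step and the slice equivalence — getting the exponential's membership formula exactly right and being sure that the seemingly naive choice is correct despite the $\labels(j)$ being mutually unrelated, and then checking that the slice equivalence transports this exponential to the dependent products demanded by local cartesian closure.
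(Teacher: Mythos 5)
Your proposal is correct and follows essentially the same route as the paper: finite (co)limits computed on underlying presheaves with meet/join memberships, the presheaf subobject classifier given full membership to classify regular subobjects, the exponential $\hat{I}(y(-)\times A,B)$ with the membership $\bigmeet_{a\in A(i)}(\alpha_i(a)\Rightarrow\beta_i(\phi_i(\id{i},a)))$, and the slice equivalence $\FuzzyPresheafDefault/(D,\delta)\equivcat\FuzzyPresheaf{\elements{D}}{\tilde{\labels}}$ with down-set Heyting algebras. The only cosmetic difference is that you justify the (co)limit constructions via the adjoints to the forgetful functor rather than by directly exhibiting the terminal object and pullbacks, which is a pleasant shortcut but changes nothing substantive.
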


In the rest of this section $I$ is a small category and $\mathcal{L} = \big( \mathcal{L}(i) \big)_{i \in I}$ is a family of complete Heyting algebras.
We give a brief high-level overview of the proof. 
We tackle the properties of being a quasitopos one by one (\cref{def:quasitopos}):
\begin{itemize}
    \item
    \cref{subsec:finite_limits_and_colimits}: 
    Finite limits and colimits exist for presheaves, and it suffices to combine them with a terminal (respectively initial) structure to have finite limits (respectively finite colimits) for fuzzy presheaves.

    \item
    \cref{subsec:regular_subobject_classifier}:
    We modify the subobject classifier for presheaves by giving all its elements full membership.
    This gives us the regular-subobject classifier for fuzzy presheaves.

    \item
    \cref{subsec:cartesian_closed}:
    We combine how exponential objects look like for fuzzy sets and presheaves to obtain exponential objects for fuzzy presheaves, resulting in cartesian closedness.

    \item
    \cref{subsec:locally_cartesian_closed}:
    Lastly, to show that we have a locally cartesian closed category, we show that every slice of a fuzzy presheaf category gives another fuzzy presheaf category, extending on a result from the literature.
\end{itemize}

\subsection{Finite limits and colimits}
\label{subsec:finite_limits_and_colimits}

The goal of this subsection is to prove the following

\begin{proposition}
    \label{prop:fuzzy_presheaf_has_finite_limits_and_colimits}
    $\FuzzyPresheafDefault$ has all finite limits and colimits.
\end{proposition}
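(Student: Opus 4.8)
The plan is to build finite (co)limits of fuzzy presheaves on top of those of ordinary presheaves via the forgetful functor $U : \FuzzyPresheafDefault \to \PresheafDefault$, $(A,\alpha) \mapsto A$. First I would recall the standard fact that $\PresheafDefault$ is complete and cocomplete, with limits and colimits computed pointwise in $\Set$; in particular it has all finite limits and colimits. The key observation is that a finite diagram in $\FuzzyPresheafDefault$ should have a (co)limit whose underlying presheaf is the (co)limit of the underlying diagram, equipped with the extremal membership function compatible with the (co)cone legs.

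For finite limits, given a diagram $D : J \to \FuzzyPresheafDefault$ with $J$ finite, $D(j) = (A_j, \alpha_j)$, I would take the limit cone $(\pi_j : L \to A_j)_{j \in J}$ of $U \circ D$ in $\PresheafDefault$ and define $\lambda_i : L(i) \to \labels(i)$ by $\lambda_i(x) \defeq \bigmeet_{j \in J} (\alpha_j)_i\big((\pi_j)_i(x)\big)$ --- the largest membership making every $\pi_j$ a fuzzy morphism (a finite, hence available, meet in the bounded lattice $\labels(i)$). Then $(L, \lambda)$ with the $\pi_j$ is a cone over $D$, and for any competing cone $(g_j : (B,\beta) \to D(j))_j$ the unique presheaf morphism $g : B \to L$ with $\pi_j g = g_j$ is automatically fuzzy: for $i \in I$ and $y \in B(i)$, $\beta_i(y) \leq (\alpha_j)_i\big((g_j)_i(y)\big) = (\alpha_j)_i\big((\pi_j)_i(g_i(y))\big)$ for every $j$, hence $\beta_i(y) \leq \lambda_i(g_i(y))$. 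So $(L,\lambda)$ is a limit of $D$; taking $J = \emptyset$ gives the terminal object $(1,\top)$ as a special case.

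Dually, for finite colimits I would take the colimit cocone $(\iota_j : A_j \to C)_{j \in J}$ of $U \circ D$ in $\PresheafDefault$ and define $\gamma_i : C(i) \to \labels(i)$ by $\gamma_i(z) \defeq \bigjoin \{\, (\alpha_j)_i(x) : j \in J,\ x \in A_j(i),\ (\iota_j)_i(x) = z \,\}$, the smallest membership making every $\iota_j$ a fuzzy morphism. Each $\iota_j$ is then fuzzy, and for any competing cocone $(h_j : D(j) \to (B,\beta))_j$ the unique $h : C \to B$ with $h \iota_j = h_j$ satisfies $\gamma_i(z) = \bigjoin_{(\iota_j)_i(x) = z} (\alpha_j)_i(x) \leq \bigjoin_{(\iota_j)_i(x) = z} \beta_i\big((h_j)_i(x)\big) = \beta_i\big(h_i(z)\big)$, using $h_i \circ (\iota_j)_i = (h_j)_i$; thus $h$ is fuzzy and $(C,\gamma)$ is a colimit of $D$, with $J = \emptyset$ yielding the initial object (the empty presheaf carrying the empty membership functions).

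The bulk of the work is routine bookkeeping. The one genuine subtlety I expect is that the index set of the join defining $\gamma_i$ can be infinite --- the sets $A_j(i)$ need not be finite even though $J$ is --- so this is precisely where completeness, and not merely boundedness, of the $\labels(i)$ is used. (Equivalently: $U$ is a topological functor, since every $U$-structured source admits an initial lift given by the pointwise meet of the pulled-back membership functions, which exists by completeness; a topological functor lifts all limits and colimits from its base.)
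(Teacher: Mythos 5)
Your proof is correct and follows essentially the same route as the paper: take the underlying (co)limit in $\PresheafDefault$ and equip it with the extremal membership function compatible with the (co)cone legs (the paper phrases this via terminal object plus pullbacks and duality, citing the ``final structure'' of Stout, rather than handling arbitrary finite diagrams directly). Your observation that the joins defining the colimit membership range over possibly infinite preimages, so that completeness of the $\labels(i)$ and not mere boundedness is what makes the ``dual'' half work, is a point the paper leaves implicit.
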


\noindent\begin{minipage}{.65\linewidth}
\begin{proof}
    To show existence of finite limits, it suffices to show the existence of a terminal object and of pullbacks~\cite[Prop.\ 2.8.2]{Borceux_1994_vol1}.
    Those are the same as in $\PresheafDefault$, paired with a final structure for the membership function \cite[\textsection3.1.1]{Stout_1993}.
    The terminal object is $1(i) = \set{\; \cdot^\top }$ for each $i \in I$, with the single element having full membership $\top \in \labels(i)$.
    The pullback of two functions $f$ and $g$ is depicted on the right.
    Its membership function must be maximal satisfying $\delta \leq \alpha g'$ and $\delta \leq \beta f'$, and is therefore simply the meet of both conditions.
    The proof that $\FuzzyPresheafDefault$ has all finite colimits is dual.
\end{proof}
\end{minipage}
\hfill
\begin{minipage}{.35\linewidth}
\centering
\begin{tikzcd}[scale cd =.8, column sep=1mm,row sep=2mm]
    && {A \times_C B} \\
    A &&& {} &&&& B \\
    &&&&& C \\
    \\
    \\
    &&&& {\color{gray}\labels}
    \arrow["\alpha"', color=gray, from=2-1, to=6-5]
    \arrow["\gamma"', color=gray, from=3-6, to=6-5]
    \arrow["\beta", color=gray, from=2-8, to=6-5]
    \arrow["\delta"'{pos=0.7}, color=gray, dotted, from=1-3, to=6-5]
    \arrow["{g'}"', dotted, from=1-3, to=2-1]
    \arrow["{f'}", dotted, from=1-3, to=2-8]
    \arrow["f"{pos=0.7}, from=2-1, to=3-6]
    \arrow["g"', from=2-8, to=3-6]
\end{tikzcd}
\\[2ex]
$\delta_i (d) \defeq \alpha_i g'_i (d) \meet \beta_i f'_i (d)$
\end{minipage}

% \begin{wrapfigure}{r}{4cm}
%     \centering
%     \begin{tikzcd}[scale cd =.8, column sep=1mm,row sep=2mm]
%     	&& {A \times_C B} \\
%     	A &&& {} &&&& B \\
%     	&&&&& C \\
%     	\\
%     	\\
%     	&&&& {\color{gray}\labels}
%     	\arrow["\alpha"', color=gray, from=2-1, to=6-5]
%     	\arrow["\gamma"', color=gray, from=3-6, to=6-5]
%     	\arrow["\beta", color=gray, from=2-8, to=6-5]
%     	\arrow["\delta"'{pos=0.7}, color=gray, dotted, from=1-3, to=6-5]
%     	\arrow["{g'}"', dotted, from=1-3, to=2-1]
%     	\arrow["{f'}", dotted, from=1-3, to=2-8]
%     	\arrow["f"{pos=0.7}, from=2-1, to=3-6]
%     	\arrow["g"', from=2-8, to=3-6]
%     \end{tikzcd}
%     \\[2ex]
%     $\delta_i (d) \defeq \alpha_i g'_i (d) \meet \beta_i f'_i (d)$
% \end{wrapfigure}

\subsection{Regular-subobject classifier}
\label{subsec:regular_subobject_classifier}

The goal of this subsection is to prove the following.

\begin{proposition}
    \label{prop:fuzzy_presheaf_has_regular_subobject_classifier}
    $\FuzzyPresheafDefault$ has a regular-subobject classifier.
\end{proposition}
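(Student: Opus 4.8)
The plan is to reuse the subobject classifier $\Omega$ of the presheaf topos $\PresheafDefault$, equipped with \emph{full membership}. Concretely, let $\Omega(i)$ be the set of sieves on $i$ (so the underlying presheaf of $\Omega$ is the subobject classifier of $\PresheafDefault$), let $\omega_i : \Omega(i) \to \labels(i)$ be constantly $\top$, and let $\true : 1 \mono (\Omega,\omega)$ be the subobject classifier morphism of $\PresheafDefault$, where $1 = (1,\top)$ is the terminal fuzzy presheaf of \cref{prop:fuzzy_presheaf_has_finite_limits_and_colimits}. This is a regular monomorphism since $\top = \omega \circ \true$. Throughout I would use that the underlying presheaf of a finite limit in $\FuzzyPresheafDefault$ is the corresponding limit in $\PresheafDefault$ (\cref{prop:fuzzy_presheaf_has_finite_limits_and_colimits}), so the forgetful functor $U : \FuzzyPresheafDefault \to \PresheafDefault$ preserves finite limits, hence monomorphisms and pullbacks.

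The first and main step is to characterise the regular monomorphisms: $m : (A,\alpha) \to (B,\beta)$ is regular if and only if its underlying natural transformation is componentwise injective and $\alpha = \beta m$ (i.e.\ $\alpha_i = \beta_i m_i$ for all $i$). For the "if" direction, $m : A \to B$ is a regular mono in the topos $\PresheafDefault$, say the equaliser of $u, v : B \rightrightarrows C$; equipping $C$ with full membership turns $u$ and $v$ into fuzzy-presheaf morphisms, and the equaliser of $u, v$ in $\FuzzyPresheafDefault$ has underlying presheaf $A$ with membership $\beta m \meet \top = \beta m = \alpha$, so $m$ is this equaliser. For the "only if" direction, if $m$ is the equaliser of $u, v : (B,\beta) \rightrightarrows (C,\gamma)$, then $U$ sends it to the equaliser of the underlying maps, and a computation as in \cref{prop:fuzzy_presheaf_has_finite_limits_and_colimits} gives that its membership is $\beta m \meet \gamma(um)$; since $u$ is a morphism, $\beta m \leq \gamma(um)$, so this meet equals $\beta m$, whence $\alpha = \beta m$. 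Consequently the regular subobjects of $(B,\beta)$ are exactly the subpresheaves $A \subseteq B$ carrying the restricted membership $\beta|_A$.

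Given this, classification is immediate. For a regular mono $m : (A,\alpha) \regmono (B,\beta)$, so $A \subseteq B$ is a subpresheaf and $\alpha = \beta|_A$, let $\chi_A : B \to \Omega$ be the unique classifying map of $A$ in $\PresheafDefault$. It is automatically a morphism $(B,\beta) \to (\Omega,\omega)$ because $\beta \leq \top = \omega \circ \chi_A$. The pullback of $\true$ along $\chi_A$ in $\FuzzyPresheafDefault$ has underlying presheaf the corresponding pullback in $\PresheafDefault$, namely $A$, and membership $\top \meet \beta|_A = \alpha$; so it is $(A,\alpha)$ and the square is a pullback. Uniqueness of $\chi_A$ follows since $U$ preserves pullbacks: any competing classifying map is a classifying map in the topos $\PresheafDefault$, hence equals $\chi_A$.

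I expect the characterisation of regular monomorphisms to be the only real obstacle — in particular the point that an equaliser forces the equality $\alpha = \beta m$ rather than merely $\alpha \leq \beta m$, which hinges on the second equaliser constraint $\gamma(um)$ dominating $\beta m$ precisely because $u$ is a fuzzy-presheaf morphism. After that, the design choice of full membership on $\Omega$ makes the membership side of both the pullback square and the uniqueness requirement vacuous, reducing everything to the presheaf topos case.
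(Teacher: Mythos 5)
Your proposal is correct and takes essentially the same route as the paper: characterise regular monomorphisms as the componentwise-injective, membership-preserving morphisms ($\alpha = \beta m$), then reuse the presheaf subobject classifier with full membership $\omega \equiv \top$ so that the membership constraints in the pullback square and in the uniqueness argument become vacuous. The only difference is presentational --- you treat the subobject classifier of $\PresheafDefault$ and its classifying maps as a known black box, whereas the paper spells out the construction of $\chi_A$ and verifies naturality, the pullback property, and uniqueness in its appendix.
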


\begin{example}
    \label{ex:regular_monomorphisms_in_set_presheaf_fuzzyset}
    We first look at some examples of regular monomorphisms
    \begin{itemize}
        \item
        In $\Set$, a monomorphism is a subset inclusion (up to isomorphism), and all monomorphisms are regular \cite[7.58(1)]{Adamek_1990_JoyOfCat}.
    
        \item
        In $\PresheafDefault$, a morphism $m : A \to B$ is (regular) monic if and only if every $m_i : A(i) \to B(i)$, where $i \in I$, is (regular) monic.
        All monomorphisms are therefore regular for the same reason as in $\Set$.

        \item
        In $\FuzzySetDefault$, a regular monomorphism $m : (A,\alpha) \regmono (B,\beta)$ is an equaliser of some functions $f_1,f_2 : B \rightrightarrows C$.
        Because $m: (A,\beta m) \rightarrow (B,\beta)$ also equalises $f_1$ and $f_2$, the universal property of the equaliser implies $\beta m \leq \alpha \leq \beta m$ and thus equality $a = \beta m$.
        So when $A \subseteq B$, $\alpha$ is the restriction $\beta\restrict{A}$ \cite[3.1.3]{Stout_1993}.
        \[\begin{tikzcd}[column sep = 2em, row sep = .4em] %scale cd=.5
        	A && B && C \\
        	A \\
        	&& {\color{gray}\labels}
        	\arrow["{f_1}", shift left=1, from=1-3, to=1-5]
        	\arrow["{f_2}"', shift right=1, from=1-3, to=1-5]
        	\arrow["m", hook, from=1-1, to=1-3]
        	\arrow[equal, from=2-1, to=1-1]
        	\arrow["\beta", color=gray, from=1-3, to=3-3]
        	\arrow["\gamma", color=gray, from=1-5, to=3-3]
        	\arrow["\beta m"', color=gray, from=2-1, to=3-3]
        	\arrow["\alpha", color=gray, pos=.6, from=1-1, to=3-3]
                \arrow["m", hook, crossing over, from=2-1, to=1-3]
        \end{tikzcd}\]
    \end{itemize}
\end{example}

Fuzzy presheaves being pointwise fuzzy sets, we have the following.

\medskip

% \begin{mysidepicture}{3.1cm}{0.5ex}{-1ex}{}
% \end{mysidepicture}
\noindent\begin{minipage}{.75\linewidth}
\begin{lemma}
    A regular monomorphism ${m : (A,\alpha) \regmono (B,\beta)}$ between fuzzy presheaves is a natural transformation ${m : A \to B}$ such that each $m_i : (A(i),\alpha_i) \regmono (B(i),\beta_i)$ is regular, i.e., $\alpha_i = \beta_i m_i$.
\end{lemma}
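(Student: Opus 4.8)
The plan is to reduce the statement to the pointwise computation of limits from \cref{subsec:finite_limits_and_colimits} together with the description of regular monomorphisms of fuzzy sets in \cref{ex:regular_monomorphisms_in_set_presheaf_fuzzyset}. First I would treat the forward direction. Suppose $m : (A,\alpha) \regmono (B,\beta)$ is regular, so it is the equaliser of some pair $f_1,f_2 : (B,\beta) \rightrightarrows (C,\gamma)$ in $\FuzzyPresheafDefault$. An equaliser is a particular finite limit, hence by \cref{prop:fuzzy_presheaf_has_finite_limits_and_colimits} (realising equalisers from the terminal object and pullbacks, or directly) it is computed by taking the equaliser of $f_1,f_2$ in $\PresheafDefault$ --- which is the pointwise equaliser of sets, so $m$ is pointwise monic --- equipped with the \emph{maximal} membership function making the inclusion a fuzzy presheaf morphism. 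Unwinding the meet that appears in the pullback formula of \cref{subsec:finite_limits_and_colimits} (using that $f_1$ is a fuzzy morphism, so $\beta_i \leq \gamma_i (f_1)_i$) shows this maximal membership is $\beta$ restricted along $m$, i.e. $\alpha_i = \beta_i m_i$ for every $i \in I$. Since $\FuzzySetDefault$ computes equalisers in exactly the same way, each $m_i : (A(i),\alpha_i) \to (B(i),\beta_i)$ is then the equaliser of $(f_1)_i,(f_2)_i$ in $\FuzzySetDefault$, hence a regular monomorphism there, recovering \cref{ex:regular_monomorphisms_in_set_presheaf_fuzzyset}.

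A small wrinkle to address: strictly speaking $m$ is only isomorphic to this canonical equaliser over $B$, not literally equal to it. But the comparison isomorphism has underlying identity on $A$ (as $m$ is monic), so the universal property of the equaliser applied in both directions yields $\alpha \leq \beta m$ and $\beta m \leq \alpha$, i.e. the equality $\alpha_i = \beta_i m_i$. For the converse direction (which is what feeds the regular-subobject classifier of \cref{prop:fuzzy_presheaf_has_regular_subobject_classifier}), I would start from a natural transformation $m : A \to B$ with each $m_i$ monic and $\alpha_i = \beta_i m_i$, form the pushout $C = B +_A B$ in $\PresheafDefault$ with the evident membership (each element keeps the $\beta$-value of its preimage in $B$, well defined along the glued copy of $A$ precisely because $\alpha = \beta m$), and take $f_1,f_2 : (B,\beta) \to (C,\gamma)$ to be the two coprojections. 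These are fuzzy presheaf morphisms since they preserve membership on the nose, and a pointwise check via \cref{ex:regular_monomorphisms_in_set_presheaf_fuzzyset} identifies their equaliser in $\FuzzyPresheafDefault$ with $m$.

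The main obstacle is the bookkeeping behind the forward direction: verifying that the membership function of the equaliser in $\FuzzyPresheafDefault$ is genuinely the restriction $\beta\restrict{}$ and not something strictly smaller. This is the maximality argument underlying the meet in the pullback description of \cref{subsec:finite_limits_and_colimits}, and it is exactly what upgrades $\alpha_i \leq \beta_i m_i$ (which holds for any fuzzy presheaf morphism) to the equality $\alpha_i = \beta_i m_i$. The converse direction is conceptually routine but requires exhibiting the cokernel-pair pushout and confirming that its membership is well defined.
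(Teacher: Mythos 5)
Your proof is correct and takes essentially the same route as the paper: the forward direction is precisely the sandwich argument $\beta m \leq \alpha \leq \beta m$ obtained by applying the equaliser's universal property to $(A,\beta m)\to(B,\beta)$, done pointwise exactly as in \cref{ex:regular_monomorphisms_in_set_presheaf_fuzzyset}, which is all the paper records for this lemma. Your explicit converse via the cokernel pair is left implicit in the paper (it defers to the fuzzy-set case in Stout); the only nitpick there is that the hypothesis $\alpha=\beta m$ is not what makes the pushout's membership well defined (that is the join over preimages in any case, and both copies of $B$ carry the same $\beta$), but rather what identifies the membership of the equaliser of the two coprojections, namely $\beta$ restricted to the image of $m$, with $\alpha$.
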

\end{minipage}
\hfill
\begin{minipage}{.25\linewidth}
\centering
\begin{tikzcd}[column sep=0mm, row sep=3mm, ampersand replacement=\&]
    A(i) \& {} \& B(i) \\
    \& {\color{gray}\labels(i)}
    \arrow["{\alpha_i}"', color=gray, from=1-1, to=2-2]
    \arrow["{\beta_i}", color=gray, from=1-3, to=2-2]
    \arrow["{m_i}", from=1-1, to=1-3]
        \arrow["=" description, draw=none, color=gray, from=1-2, to=2-2]
\end{tikzcd}
\end{minipage}

\begin{notation}
    \label{not:pullback_along}
    We denote the pullback of a morphism $m : A \to C$ along another morphism $f : B \to C$ by $\pullbackby{f}m$.
\end{notation}

\begin{example}
    \label{ex:subobject_classifier_in_set_fuzzyset_graph_presheaf}
    We continue \cref{ex:regular_monomorphisms_in_set_presheaf_fuzzyset} and look at the (regular-)subobject classifiers of the aforementioned categories. 
    \begin{itemize}
        \item 
        In $\Set$, we have $\true : \set{\cdot} \to \Omega \defeq \set{0,1} : \cdot \mapsto 1$.
        For $m : A \mono B$, the characteristic function $\chi_A : B \to \set{0,1}$ is $1$ on $m(A) \subseteq B$ and $0$ on the rest.
    \end{itemize}
    % \vspace{-1ex}
    
    \begin{mysidepicture}{4cm}{0.5ex}{0ex}{%        
    \begin{tikzcd}[row sep=2mm, column sep=5mm,ampersand replacement=\&]
        A \&\& {\set{\cdot}} \\
        \& {\color{gray} \labels} \\
        B \&\& {\set{0,1}}
        \arrow["m"', tail, from=1-1, to=3-1]
        \arrow["\true", from=1-3, to=3-3]
        \arrow["{!}", from=1-1, to=1-3]
        \arrow["{\chi_A}"', from=3-1, to=3-3]
        \arrow["{\alpha=\beta m}"{description,outer sep=0,inner sep=0,pos=0.4}, color=gray, from=1-1, to=2-2]
        \arrow["\top"{description}, color=gray, from=1-3, to=2-2]
        \arrow["\top"{description}, color=gray, from=3-3, to=2-2]
        \arrow["\beta"{description}, color=gray, from=3-1, to=2-2]
    \end{tikzcd}
    }
    \begin{itemize}[topsep=-2ex]
        \item
        In $\FuzzySetDefault$ \cite[3.1.3]{Stout_1993}, we have $\Omega \defeq \set{\; 0^\top,1^\top}$ with both elements having full membership.
        The square being a pullback forces $\alpha = \beta m$.
        Hence $\Omega$ classifies only regular subobjects.
    \end{itemize}
    \end{mysidepicture}
    \smallskip

    \begin{mysidepicture}{4cm}{-4ex}{-2ex}{%  
        \begin{tikzcd}[ampersand replacement=\&]
            0
                \arrow[loop, distance=1em, in=205, out=155, "0"']
                \arrow[r, "t"', bend right=15]
            \& 1
                \arrow[loop, distance=1em, in=295, out=245, "{(s, t)}"']
                \arrow[loop, distance=1em, in=115, out=65, "s \to t"']
                \arrow[l, "s"', bend right=15]
        \end{tikzcd}
    }
    \begin{itemize}
        \item
        In $\Graph$%\footnote{By $\Graph$, we mean directed multigraphs in this paper.}  
        ~\cite{vigna2003graphistopos}, the terminal object is
        \(
            \set{ 
            \begin{tikzcd}
                \cdot \arrow[loop, distance=.8em, in=200, out=160]
            \end{tikzcd}
            }
        \), 
        the classifying object $\Omega$ is shown on the right, and
        \smash{\(
            \true :
            \begin{tikzcd}
                \cdot 
                \arrow[loop, distance=.8em, in=200, out=160]
            \end{tikzcd}
            \mapsto 
            \begin{tikzcd}
                1
                \arrow[loop, distance=.8em, in=20, out=-20, "s \to t"'] 
            \end{tikzcd}
        \)}.
        Given a graph $H$ and a subgraph $G \subseteq H$,  the characteristic function $\chi_G$ sends vertices in $G$ to $1$ and edges in $G$ to $s \to t$. \pagebreak
    \end{itemize}
    \end{mysidepicture}
    % \vspace*{-2ex}
    \begin{itemize}[topsep=-2ex]
        \item []
        Vertices not in $G$ are sent to $0$; and edges not in $G$ are sent to $(s,t)$ if both endpoints are in $G$, to $0$ if neither endpoint is in $G$, and to $s$ or $t$ if only the source or target is in $G$, respectively.
        %Given a graph $H$ and a subgraph $G \subseteq H$, the characteristic function $\chi_G$ sends vertices in $G$ to $\cdot$ and the others to $\emptyset$, sends edges fully in $G$ to $s \to t$, those with both endpoints in $G$ but the edge not to $(s,t)$, those with only the source in $G$ to $s$, and those with only the target in $G$ to $t$.
        
        % \item
        % In $\PresheafHatDefault = \PresheafDefault$ \cite{MacLane_Moerdijk_1994}, with $I$ small, 
        % %a subpresheaf $A$ of a presheaf $B$ is a monomorphism $A \mono B$, i.e., a family of injective functions $A(i) \mono B(i)$. 
        % the classifying object $\Omega : \opcat{I} \to \Set$ is defined on $i \in \opcat{I}$ as the set of all subpresheaves of $y(i)=I(-,i) : \opcat{I} \to \Set$.
        % The generic object $\true : 1 \to \Omega$ sends $*$ to $y(i)$.
    \end{itemize}
\end{example}

When $I$ is small, $\PresheafDefault$ has a subobject classifier \cite{MacLane_Moerdijk_1994}.
By smallness of $I$, the class of the subobjects of a presheaf $B \in \PresheafHatDefault$ (also called subpresheaves) is necessarily a set.
%A subpresheaf $A$ of a presheaf $B$ is a monomorphism $A \mono B$, i.e., a family of injective functions $A(i) \mono B(i)$. 
This allows us to define a functor $\Sub_{\PresheafHatDefault} : \opcat{\PresheafHatDefault} \to \Set$:
\begin{itemize}[topsep=2pt]
    \item
    $\Sub_{\PresheafHatDefault}(B) \defeq \set{\text{subobjects } m : A \mono B}$,

    \item
    $\Sub_{\PresheafHatDefault}(f:B \to C)$ sends a subobject $(A\stackrel{n}{\mono} C)$ of $C$ to $\pullbackby{f}n$, the pullback of $n$ along $f$, which is a subobject of $B$.
\end{itemize}
The classifying object is the presheaf $\Omega \defeq \Sub_{\PresheafHatDefault} (y(-)) : \opcat{I} \to \Set$.
For $i \in I$, $\Omega(i)$ is thus the set of all subpresheaves of $y(i)$, which are sometimes called \textit{sieves}.
Let $\true_i : \set{\cdot} \to \Omega(i) : \cdot \mapsto y(i)$, because $y(i)$ is indeed a subpresheaf of itself.
We prove that $\true$ classifies subobjects.
Pick a subpresheaf $m : A \mono B$, and suppose w.l.o.g.~$A(i) \subseteq B(i)$ for each $i \in I$.
We want to define the natural transformation $\chi_A : B \to \Omega$.
For $i \in I$ and $b \in B(i)$, then $(\chi_A)_i(b) \in \Omega(i)$ must be a subpresheaf of $y(i) = I(-,i)$.
On $j \in I$, let
% \begin{equation}
%     \label{eq:proof_subobject_classifier_pullback_square}
%     \begin{tikzcd}%[column sep=small]
%         A & 1 \\
%         B & \Omega
%         \ar[from=1-1, to=1-2, "!"]
%         \ar[from=1-1, to=2-1, tail, "m"']
%         \ar[from=1-2, to=2-2, "\true"]
%         \ar[from=2-1, to=2-2, "\chi_A"']
%         \ar[from=1-1, to=2-2, "\lrcorner"{anchor=center, pos=0.125}, draw=none]
%     \end{tikzcd}
% \end{equation}
\begin{equation}
    \label{eq:def_chi_A}
    (\chi_A)_i(b)(j)
    \defeq \setvbar{\iota : j \to i}{B(\iota)(b) \in A_j}.
\end{equation}
% This expression can for instance be obtained as a pullback of $m_j$ and the natural transformation $B(-)(b) : I(j,i) \to B(j)$ (which is, via the Yoneda Lemma, the natural transformation corresponding to $b$, see e.g.~\cite[p.196]{Awodey_2006}).
% By Yoneda lemma, we have $B(i) \isom \Nat(y(i),B)$.
% This isomorphism sends $b$ to $\phi^b : I(-,i) \to B$ which is defined on $j \in I$ and on $\iota : j \to i$ as $\phi^b_j (\iota) \defeq B(\iota)(b) \in B(j)$ (see e.g.~\cite[p.196]{Awodey_2006}).
% Then let $(\chi_A)_i(b)(j)$ be the following pullback.
% \[
%     \begin{tikzcd}%[column sep=small]
%         (\chi_A)_i(b)(j) & A(j) \\
%         y(i)(j) = I(j,i) & B(j)
%         \ar[from=1-1, to=1-2]
%         \ar[from=1-1, to=2-1, tail]
%         \ar[from=1-2, to=2-2, tail, "m_j"]
%         \ar[from=2-1, to=2-2, "B(-)(b)"'] %\phi^b_j
%         \ar[from=1-1, to=2-2, "\lrcorner"{anchor=center, pos=0.125}, draw=none]
%         \end{tikzcd}
% \]
% This is a pullback in $\Set$, so more explicitly:
% \begin{align*}
%     (\chi_A)_i(b)(j)
%     &= \setvbar{(\iota:j \to i, a \in A(j))}{\phi^b_j(\iota) = B(\iota)(b) = m_j(a)} \\
%     &\isom \setvbar{\iota : j \to i}{B(\iota)(b) \in m_j(A_j)}
% \end{align*}
% And on the morphisms, $(\chi_A)_i(b)$ being a subfunctor of the hom-functor $y(i) = I(-,i)$, so it does precomposition too.

\begin{lemma} \label{lem:chi_A_1)is_natural_2)makes_a_pullback_square_3)is_unique}
    We have the following properties:
    \begin{enumerate}
        \item 
        $\chi_A$ is natural,
        
        \item
        \label{item:chi_A_makes_pullback_square}
        $\chi_A$ characterises the subpresheaf $A \subseteq B$, i.e., gives a pullback square, and
        
        \item
        $\chi_A$ is unique in satisfying \cref{item:chi_A_makes_pullback_square}.
        \customqed\footnote{The symbol $\blacksquare$ denotes that the proof is in the appendix.
        A sketch of the proof is in \cite[p.37-38]{MacLane_Moerdijk_1994}.
        }
    \end{enumerate}
\end{lemma}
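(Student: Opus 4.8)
The plan is to verify the three claimed properties of $\chi_A$ defined in \cref{eq:def_chi_A}, working pointwise and reducing everything to elementary manipulations with sieves.

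First, for naturality, I would take a morphism $\kappa : i' \to i$ in $I$ and check that $(\chi_A)_{i'} \circ B(\kappa) = \Omega(\kappa) \circ (\chi_A)_i$ as functions $B(i) \to \Omega(i')$. Here $\Omega(\kappa) = \Sub_{\PresheafHatDefault}(y(\kappa))$ acts by pullback of sieves, which concretely sends a sieve $S \subseteq y(i)$ to $\setvbar{\iota : j \to i'}{\kappa \circ \iota \in S}$. Unfolding both sides on an element $b \in B(i)$ and an index $j$: the left side gives $\setvbar{\iota : j \to i'}{B(\iota)(B(\kappa)(b)) \in A_j}$, and the right side gives $\setvbar{\iota : j \to i'}{B(\kappa \circ \iota)(b) \in A_j}$. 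These agree by functoriality of $B$, so naturality is immediate.

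Second, for the pullback property, I would show that the square with corners $A, 1, B, \Omega$ is a pullback in $\PresheafDefault$. Since limits in presheaf categories are computed pointwise and $\true$ is monic, it suffices to check, for each $i \in I$ and each $b \in B(i)$, that $(\chi_A)_i(b) = \true_i(\cdot) = y(i)$ if and only if $b \in A(i)$. If $b \in A(i)$, then for every $\iota : j \to i$ we have $B(\iota)(b) = A(\iota)(b) \in A(j)$ since $A$ is a subpresheaf closed under restriction, so $(\chi_A)_i(b)(j)$ is all of $I(j,i)$, i.e.\ equals $y(i)(j)$; hence $(\chi_A)_i(b) = y(i)$. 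Conversely, if $(\chi_A)_i(b) = y(i)$, then in particular $\id{i} \in (\chi_A)_i(b)(i)$, which by definition means $B(\id{i})(b) = b \in A(i)$. This establishes that the fibre of $\true$ over $\chi_A$ is exactly $A$, so the square is a pullback.

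Third, for uniqueness, suppose $\chi' : B \to \Omega$ also makes the square a pullback. The pullback condition forces, for each $i$ and $b \in B(i)$, that $\chi'_i(b) = y(i)$ exactly when $b \in A(i)$. To pin down $\chi'_i(b)$ for arbitrary $b$, I would use naturality of $\chi'$ together with the action of $\Omega$ on morphisms: for any $\iota : j \to i$, naturality gives $\chi'_j(B(\iota)(b)) = \Omega(\iota)(\chi'_i(b)) = \setvbar{\lambda : k \to j}{\iota \circ \lambda \in \chi'_i(b)}$. Evaluating the membership criterion $\id{j} \in \chi'_j(B(\iota)(b))(j)$ — which holds iff $B(\iota)(b) \in A(j)$ by the pullback condition applied at $j$ — and observing that $\id{j} \in \Omega(\iota)(\chi'_i(b))(j)$ iff $\iota \in \chi'_i(b)(i)$, we conclude $\iota \in \chi'_i(b)(i) \iff B(\iota)(b) \in A(j)$, which is precisely \cref{eq:def_chi_A}. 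Hence $\chi' = \chi_A$.

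I expect the main obstacle to be bookkeeping rather than conceptual: correctly unwinding the action $\Omega(\kappa)$ on sieves and keeping straight which identity morphism serves as the "test element" at each stage. Once the convention $S \in \Omega(i)$, $\Omega(\kappa)(S) = \setvbar{\iota}{\kappa\iota \in S}$ is fixed, every step is a one-line functoriality check, and the proof is essentially the standard verification that a presheaf topos has subobject classifier $\Sub(y(-))$, as noted in the reference to \cite{MacLane_Moerdijk_1994}.
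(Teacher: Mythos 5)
Your proposal is correct and follows essentially the same route as the paper's appendix proof: pointwise unwinding of $\Omega(\kappa)$ on sieves for naturality, the computation that $(\chi_A)_i(b)=y(i)$ iff $b\in A(i)$ (using closure under restriction one way and $\id{i}$ the other) for the pullback, and for uniqueness the same test of whether $\iota$ lies in $\nu_i(b)(j)$ via naturality and the fact that a sieve containing the identity is maximal, which the paper isolates as its auxiliary lemma. The only blemish is a cosmetic index slip: since $\iota : j \to i$ lives in $y(i)(j)=I(j,i)$, the membership condition should read $\iota \in \chi'_i(b)(j)$ rather than $\iota \in \chi'_i(b)(i)$.
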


In $\FuzzyPresheafDefault$, we consider the same construction and give full membership value to $\Omega$, analogously to the fuzzy set regular-subobject classifier.
Explicitly, for $i \in I$, let $\omega_i : \Omega(i) \to \labels(i) : \big(A \mono y(i)\big) \mapsto \top$.
By \cref{lem:chi_A_1)is_natural_2)makes_a_pullback_square_3)is_unique}, $(\Omega,\omega)$ is the regular-subobject classifier that we seek and \cref{prop:fuzzy_presheaf_has_regular_subobject_classifier} is hence proved.

\subsection{Cartesian closed}
\label{subsec:cartesian_closed}

The goal of this section is to prove the following.

\begin{proposition} \label{prop:cartesian_closed}
    $\FuzzyPresheafDefault$ is cartesian closed.
\end{proposition}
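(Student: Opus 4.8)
The plan is to produce finite products and exponentials by grafting together the cartesian-closed structure of the underlying presheaf category $\PresheafDefault$ (which, for $I$ small, is a topos, so that $B^A(i) = \Hom_{\PresheafHatDefault}(y(i) \times A, B)$ with the usual currying adjunction \cite{MacLane_Moerdijk_1994}) and the exponential of fuzzy sets \cite{Stout_1993}, which exploits the Heyting implication. Finite products are already in hand from \cref{prop:fuzzy_presheaf_has_finite_limits_and_colimits}: $(A,\alpha) \times (B,\beta)$ has underlying presheaf $A \times B$ with membership $(a,b) \mapsto \alpha_i(a) \meet \beta_i(b)$ at each $i \in I$. Hence it remains, for a fixed $(A,\alpha)$, to construct a right adjoint to $- \times (A,\alpha)$.

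First I would take as candidate exponential $(B,\beta)^{(A,\alpha)}$ the presheaf $B^A$ computed in $\PresheafDefault$, equipped with the membership function
\[
    \delta_i(\phi) \defeq \bigmeet_{a \in A(i)} \bigl( \alpha_i(a) \Rightarrow \beta_i(\phi_i(\id{i}, a)) \bigr)
\]
for $\phi \in B^A(i) = \Hom_{\PresheafHatDefault}(y(i) \times A, B)$, where we use $\id{i} \in I(i,i) = y(i)(i)$. Completeness of each $\labels(i)$ is essential here, since $A(i)$ may be infinite. On morphisms $(B,\beta)^{(A,\alpha)}$ acts as $B^A$ does in $\PresheafDefault$, i.e.\ $g \mapsto g^A$; monotonicity of $\Rightarrow$ in its second argument and of $\bigmeet$ shows $g^A$ is again a fuzzy morphism, so this is a well-defined functor.

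The heart of the argument is to check that the presheaf currying bijection $\Hom_{\PresheafHatDefault}(C \times A, B) \cong \Hom_{\PresheafHatDefault}(C, B^A)$, $h \mapsto \tilde h$, restricts to a bijection between fuzzy morphisms $(C,\gamma) \times (A,\alpha) \to (B,\beta)$ and fuzzy morphisms $(C,\gamma) \to (B,\beta)^{(A,\alpha)}$. A natural transformation $h : C \times A \to B$ is a morphism of fuzzy presheaves iff $\gamma_i(c) \meet \alpha_i(a) \leq \beta_i(h_i(c,a))$ for all $i$, $c \in C(i)$, $a \in A(i)$. Since $(\tilde h_i(c))_i(\id{i}, a) = h_i(c,a)$, we get $\delta_i(\tilde h_i(c)) = \bigmeet_{a \in A(i)}(\alpha_i(a) \Rightarrow \beta_i(h_i(c,a)))$, and the Heyting adjunction \eqref{eq:heyting_algebra_cartesian_closed}, applied term by term, shows that $\gamma_i(c) \leq \delta_i(\tilde h_i(c))$ holds for all $i$ and $c$ exactly when $h$ is fuzzy. (Note that $\delta$ depends only on the \qmarks{diagonal} values $\phi_i(\id{i}, -)$, not on all of $\phi$; this is precisely what makes the biconditional work, and no compatibility of $\delta$ with the restriction maps of $B^A$ is needed, since fuzzy presheaves impose none.) Naturality of the restricted bijection in $C$ and $B$ is inherited verbatim from $\PresheafDefault$, yielding the adjunction $- \times (A,\alpha) \dashv (-)^{(A,\alpha)}$ required by \cref{def:cartesian_closed_and_locally_cartesian_closed}. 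As a sanity check one verifies directly that the evaluation map $\mathrm{ev}_i(\phi, a) = \phi_i(\id{i}, a)$ is a fuzzy morphism $(B,\beta)^{(A,\alpha)} \times (A,\alpha) \to (B,\beta)$ via modus ponens \eqref{eq:heyting_algebra_modus_ponens}: $\delta_i(\phi) \meet \alpha_i(a) \leq (\alpha_i(a) \Rightarrow \beta_i(\phi_i(\id{i},a))) \meet \alpha_i(a) = \alpha_i(a) \meet \beta_i(\phi_i(\id{i},a)) \leq \beta_i(\phi_i(\id{i},a))$.

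The step I expect to be the main obstacle is getting the membership function on $B^A$ right: the naive guess of meeting $\alpha_j(a) \Rightarrow \beta_j(\phi_j(\kappa,a))$ over all $j$, all $\kappa : j \to i$ and all $a \in A(j)$ fails, because a fuzzy presheaf carries no coherence between $\gamma_i(c)$ and $\gamma_j(C(\kappa)(c))$, so one is forced to restrict to $\kappa = \id{i}$; checking that this truncated meet still produces the adjunction is the one genuinely delicate point, everything else being a routine merge of the presheaf and fuzzy-set computations.
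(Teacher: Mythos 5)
Your proposal is correct and follows essentially the same route as the paper: the same product structure, the same exponential $\bigl(\PresheafHatDefault(y(-)\times A,B),\,\theta\bigr)$ with $\theta_i(m)=\bigmeet_{a\in A(i)}\bigl(\alpha_i(a)\Rightarrow\beta_i m_i(\id{i},a)\bigr)$, and the same observation that the presheaf currying bijection restricts to fuzzy morphisms via the Heyting adjunction \eqref{eq:heyting_algebra_cartesian_closed}. Your closing remark about why the meet must be truncated to $\kappa=\id{i}$ is a correct and useful gloss on a point the paper leaves implicit.
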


Let $(A,\alpha)$ be a fuzzy presheaf.
We construct a product functor, an exponentiation functor, and an adjunction between them:
$- \times (A,\alpha) \dashv -^{(A,\alpha)}$.

% \begin{equation} \label{eq:product_exponent_adjunction}
%     \begin{tikzcd}
%         - \times (A,\alpha) : \FuzzyPresheafDefault
%             \ar[r, shift left=1.3, ""]
%             \ar[r, phantom, "\scriptscriptstyle\bot"]
%         & \FuzzyPresheafDefault : -^{(A,\alpha)}  
%             \ar[l, shift left=1.3,""]
%     \end{tikzcd}
% \end{equation}

\medskip

% \begin{mysidepicture}{4.1cm}{0.5ex}{-2ex}{}
% \end{mysidepicture}
\noindent\begin{minipage}{.65\linewidth}
\begin{definition}%[Functor $- \times (A,\alpha)$]
    \label{def:product_functor}
    The product of two fuzzy presheaves $(C,\gamma)$ and $(A,\alpha)$ is $(C \times A, \gamma\pi_1 \meet \alpha\pi_2)$.
    The membership function is written $\gamma \meet \alpha$ for short.
    Given ${f:(C',\gamma') \to (C,\gamma)}$, then $f \times (A,\alpha) \defeq f \times \id{A}$.
\end{definition}
\end{minipage}
\hfill
\begin{minipage}{.35\linewidth}
\centering
\begin{tikzcd}[sep=tiny, ampersand replacement=\&]
    C' \times A
        \ar[rr, "f \times \id{A}"]
        \ar[dr, color=gray, "\gamma' \meet \alpha"']
    \& {}
        \ar[d, "\leq" description, draw=none, color=gray]
    \& C \times A
        \ar[dl, color=gray, "\gamma \meet \alpha"] 
    \\
    \& {\color{gray}\labels} 
\end{tikzcd}
\end{minipage}

\smallskip

\begin{lemma}
    The $- \times (A,\alpha)$ functor is well-defined on morphisms.
\end{lemma}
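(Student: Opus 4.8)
The plan is to check the single nontrivial condition: that $f \times \id{A}$, which is already a natural transformation $C' \times A \to C \times A$ of the underlying presheaves, respects the membership functions, i.e.\ is $\leq$-commuting in the sense of \cref{def:fuzzy_presheaf}. Functoriality (preservation of identities and composition) will then be inherited verbatim from the product functor $- \times A$ on $\PresheafDefault$, since the action on the membership components is completely determined and the forgetful functor to $\PresheafDefault$ is faithful.

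Concretely, I would argue pointwise. Fix $i \in I$ and an element $(c',a) \in (C' \times A)(i) = C'(i) \times A(i)$. By \cref{def:product_functor}, the source membership value is $(\gamma' \meet \alpha)_i(c',a) = \gamma'_i(c') \meet \alpha_i(a)$, while
\[
    \bigl((\gamma \meet \alpha)\,(f \times \id{A})\bigr)_i(c',a)
    = (\gamma \meet \alpha)_i\bigl(f_i(c'),a\bigr)
    = \gamma_i\bigl(f_i(c')\bigr) \meet \alpha_i(a).
\]
Since $f : (C',\gamma') \to (C,\gamma)$ is a morphism of fuzzy presheaves, we have $\gamma'_i(c') \leq \gamma_i(f_i(c'))$. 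Meet is monotone in each argument in any lattice (in particular in the complete Heyting algebra $\labels(i)$), so $\gamma'_i(c') \meet \alpha_i(a) \leq \gamma_i(f_i(c')) \meet \alpha_i(a)$. As $i$ and $(c',a)$ were arbitrary, this gives $\gamma' \meet \alpha \leq (\gamma \meet \alpha)(f \times \id{A})$, which is exactly the $\leq$-commutation displayed in the figure accompanying the lemma. Hence $f \times (A,\alpha) = f \times \id{A}$ is a well-defined morphism in $\FuzzyPresheafDefault$.

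Finally, for functoriality one observes that $\id{C} \times \id{A} = \id{C \times A}$ and $(g \times \id{A}) \circ (f \times \id{A}) = (g \circ f) \times \id{A}$ hold on the level of underlying presheaves, and since the membership parts play no role in these identities, the same equations hold in $\FuzzyPresheafDefault$. There is no real obstacle here: the only content is the monotonicity of $\meet$, and the rest is bookkeeping that transfers from $\PresheafDefault$.
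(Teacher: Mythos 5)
Your proof is correct and follows the same route as the paper's one-line argument: from $\gamma' \leq \gamma f$ and monotonicity of $\meet$ one gets $\gamma' \meet \alpha \leq (\gamma \meet \alpha) \circ (f \times \id{A})$. You merely spell out the pointwise computation and add the (trivial) functoriality check, which the paper leaves implicit.
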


\begin{proof}
    Because $\gamma' \leq \gamma f$, it follows that $\gamma' \meet \alpha \leq (\gamma \meet \alpha) \circ (f \times \id{A})$.
\end{proof}

\begin{example} \label{ex:exponential_objects_in_set_fuzzyset_presheaf}
    To construct exponential objects    of fuzzy presheaves, we first look at simpler examples.
    \begin{itemize}
        \item 
        In $\Set$, given two sets $A,B$, then $B^A \defeq \Set(A,B)$.

        \item
        In $\FuzzySet{\labels}$ \cite[p.~81]{Stout_1993}, given fuzzy sets $(A,\alpha: A \to \labels), (B, \beta: B \to \labels)$, then $(B,\beta)^{(A,\alpha)}$ has carrier $B^A$ as in $\Set$ and membership function
        \[
            \theta(f: A \to B) \defeq \bigmeet_{a \in A} \left( \alpha(a) \Rightarrow \beta f (a) \right).
        \]
        Logically speaking, the truth value of $f$ is the minimum truth-value of \qmarks{truth preservation by $f$}, i.e., \qmarks{if $a$ then $f(a)$}.
        
        \item
        In $\Graph$ \cite{vigna2003graphistopos} (i.e., directed multigraphs), given two graphs $A,B$ then $B^A$ is the graph
        \begin{itemize}[-]
            \item
            with vertex set $B^A(V) \defeq \Graph \big(\set{\cdot} \times A, B\big)$,

            \item
            with edge set $B^A(E) \defeq \Graph \big(\set{s \to t} \times A,B \big)$,

            \item
            the source of $m: \set{s \to t} \times A \to B$ is $m(s,-) : \set{\cdot} \times A \to B$, and
            
            \item
            the target function is analogous.
        \end{itemize}

        \item
        In $\PresheafHatDefault$ \cite[Section 8.7]{Awodey_2006}, given $A,B : \opcat{I} \to \Set$, then $B^A=\hat{I}(y(-)\times A, B)$ is: 
        \begin{itemize}[-,topsep=2pt]
            \item 
            on $i \in I$, the set of all natural transformations $y(i) \times A \Rightarrow B$.

            \item
            on $\iota : j \to i$, the precomposition by $y(\iota) \times \id{A}$.
            \begin{equation}
                \label{eq:exponent_def_objects_iota}
                y(i) \times A \xrightarrow{m} B
                \quad \longmapsto \quad 
                y(j) \times A \xrightarrow{y(\iota) \times \id{A}} y(i) \times A \xrightarrow{m} B
            \end{equation}
        \end{itemize}
        Notice how this generalises the $\Graph$ case: for 
        \(
            \opcat{I} = 
            \smash{\begin{tikzcd}[ampersand replacement=\&]
                E
                    \ar[r, "s"{description}, shift left=3pt]
                    \ar[r, "t"{description}, shift right=3pt]
                \& V
            \end{tikzcd}}
        \),
        we indeed have $y(V) \isom \set{\cdot}$ and $y(E) \isom \set{s \to t}$.
    \end{itemize}
\end{example}

\begin{definition}%[Functor $-^{(A,\alpha)}$] 
    \label{def:exponent_functor}
    Given fuzzy presheaves $(A,\alpha)$ and $(B,\beta)$, let
    \( 
        (B,\beta)^{(A,\alpha)}
        \defeq
        \big( \PresheafHatDefault(y(-) \times A,B), \theta \big).
    \)
    The carrier is the same as in $\PresheafHatDefault$ above.
    On $i \in I$, it is the set of natural transformations $y(i) \times A \Rightarrow B$ and on $\iota:j \to i$ it is the precomposition by $y(\iota) \times \id{A}$.
    The membership function is, for each $i \in I$:
    \begin{equation}
        \label{eq:exponent_def_objects_i}
        \theta_i 
        \Big(
            y(i) \times A \xrightarrow{m} B
        \Big)
        \defeq
        \bigmeet_{a \in A(i)}
        \big(
            \alpha_i a \Rightarrow \beta_i m_i(\id{i},a)
        \big).
    \end{equation}
    Given $(B,\beta) \stackrel{g}{\to} (B', \beta')$, the morphism $g^{(A,\alpha)}$ is postcomposition by $g$:
    \begin{equation} \label{eq:exponent_def_morphisms}
        \begin{tikzcd}[row sep=tiny, column sep = small]
            \PresheafHatDefault(y(i) \times A,B)
                \ar[rr, "g^{(A,\alpha)}_i \defeq g \circ -"]
                \ar[dr, color=gray, "\theta_i"']
            && \PresheafHatDefault(y(i) \times A,B')
                \ar[dl, color=gray, "\theta'_i"] \\
            & {\color{gray}\labels(i)} & 
        \end{tikzcd}
    \end{equation}
\end{definition}

\pagebreak

\begin{lemma}  
    The exponentiation functor is well-defined on morphisms.
\end{lemma}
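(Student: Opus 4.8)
Recall that the underlying map of $g^{(A,\alpha)}$ — postcomposition by $g$ — is already known to be a morphism in $\PresheafHatDefault$ from the exponential object $\PresheafHatDefault(y(-)\times A,B)$ to $\PresheafHatDefault(y(-)\times A,B')$, by the presheaf case recalled in \cref{ex:exponential_objects_in_set_fuzzyset_presheaf}. Hence the only new thing to verify is that it is a morphism of \emph{fuzzy} presheaves, i.e.\ that it $\leq$-commutes with the membership functions: for every $i \in I$ and every natural transformation $m : y(i)\times A \to B$ we must show $\theta_i(m) \leq \theta'_i(g\circ m)$, which is exactly the triangle in \eqref{eq:exponent_def_morphisms}. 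So the plan is to reduce to a termwise inequality under the big meet defining $\theta_i$ and $\theta'_i$.

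The key tool I would isolate first is that Heyting implication is monotone in its second (consequent) argument: if $c \leq c'$ in a Heyting algebra, then $a \Rightarrow c \leq a \Rightarrow c'$ for every $a$. This follows from modus ponens \eqref{eq:heyting_algebra_modus_ponens} and the adjunction \eqref{eq:heyting_algebra_cartesian_closed}: from $a \meet (a \Rightarrow c) = a \meet c \leq c \leq c'$ we get $a \meet (a \Rightarrow c) \leq c'$, hence $a \Rightarrow c \leq a \Rightarrow c'$. With this in hand, fix $i \in I$ and $m : y(i)\times A \to B$. Since $g : (B,\beta) \to (B',\beta')$ is a fuzzy-presheaf morphism we have $\beta_i \leq \beta'_i g_i$ pointwise; instantiating at the element $m_i(\id{i},a) \in B(i)$ for each $a \in A(i)$ gives
\[
    \beta_i\, m_i(\id{i},a) \ \leq \ \beta'_i\, g_i\, m_i(\id{i},a) \ = \ \beta'_i\,(g\circ m)_i(\id{i},a),
\]
using that composition of natural transformations is componentwise, $(g\circ m)_i = g_i \circ m_i$. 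Applying the monotonicity of $\alpha_i a \Rightarrow (-)$ termwise and then taking the meet over all $a \in A(i)$ yields
\[
    \theta_i(m) = \bigmeet_{a \in A(i)}\big(\alpha_i a \Rightarrow \beta_i m_i(\id{i},a)\big)
    \ \leq\
    \bigmeet_{a \in A(i)}\big(\alpha_i a \Rightarrow \beta'_i (g\circ m)_i(\id{i},a)\big) = \theta'_i(g\circ m),
\]
which is the desired $\leq$-commutation.

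I do not expect a real obstacle here: the argument is a one-line application of monotonicity of $\Rightarrow$ together with the defining property $\beta \leq \beta' g$ of $g$. The only points requiring a little care are bookkeeping ones — that the two big meets are indexed by the same set $A(i)$, and the identification $(g\circ m)_i = g_i\circ m_i$ — neither of which is substantive. If one also wants functoriality of $(-)^{(A,\alpha)}$ spelled out (identities and composites), that is inherited verbatim from the $\PresheafHatDefault$ exponential and needs no new verification on memberships.
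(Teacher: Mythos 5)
Your proof is correct and follows essentially the same route as the paper: the paper also establishes $\theta_i(m)\leq\theta'_i(g\circ m)$ by combining $\beta_i\leq\beta'_i g_i$ with modus ponens \eqref{eq:heyting_algebra_modus_ponens} and the adjunction \eqref{eq:heyting_algebra_cartesian_closed} before taking the meet over $A(i)$ — you have merely packaged those two steps as a standalone monotonicity lemma for $\Rightarrow$ in its consequent. Your treatment of naturality in $i$ (inherited from the presheaf exponential, i.e.\ pre- and postcomposition commute) likewise matches the paper's.
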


\begin{proof}
    The mapping $\smash{g^{(A,\alpha)}}$ is clearly natural in $i \in I$.
    Indeed, $\smash{g_i^{(A,\alpha)}}$ is postcomposition by $g$ (\ref{eq:exponent_def_morphisms}), $\smash{\PresheafHatDefault(y(\iota)\times A, B)}$ is precomposition by $y(\iota)\times\id{A}$ (\ref{eq:exponent_def_objects_iota}), and pre- and postcomposition always commute.
    
    Fix an arbitrary $i \in I$.
    The other criteria to verify is that (\ref{eq:exponent_def_morphisms}) $\leq$-commutes, i.e., $\smash{\theta_i \leq \theta'_i g_i^{(A,\alpha)}}$.
    Take some $m : y(i) \times A \to B$.
    We prove $\smash{\theta_i(m) \leq \theta'_i  g_i^{(A,\alpha)} (m)}$.
    By hypothesis $g:(B,\beta) \rightarrow (B', \beta')$, so $\beta_i \leq \beta'_i g_i$.
    For any $a' \in A(i)$, we have $m_i(\id{i},a') \in B(i)$.
    \begin{align*}
        &\Rightarrow
        \forall a' \in A(i): \beta_i m_i (\id{i},a') \leq \beta_i' g_i m_i (\id{i},a') 
        \\
        &\Rightarrow 
        \forall a' \in A(i): \alpha_i(a') \meet \beta_i m_i (\id{i},a') \leq \beta_i' g_i m_i (\id{i},a') 
        \\
        &\stackrel{\smash{\eqref{eq:heyting_algebra_modus_ponens}}}{\Rightarrow}
        \forall a' \in A(i): \alpha_i(a') \meet (\alpha_i(a') \Rightarrow \beta_i m_i (\id{i},a')) \leq \beta'_i g_i m_i (\id{i},a') 
        \\
        &\stackrel{\smash{\eqref{eq:heyting_algebra_cartesian_closed}}}{\Rightarrow}
        \forall a' \in A(i): (\alpha_i (a') \Rightarrow \beta_i m_i (\id{i},a')) \leq (\alpha_i (a') \Rightarrow \beta'_i g_i m_i (\id{i},a'))
        \\
        &\Rightarrow
        \forall a' \in A(i): \textstyle{\bigmeet}_{a \in A(i)} (\alpha_i (a) \Rightarrow \beta_i m_i (\id{i},a)) \leq (\alpha_i (a') \Rightarrow \beta'_i g_i m_i (\id{i},a')) 
        \\
        &\Rightarrow
        \textstyle{\bigmeet}_{a \in A(i)} (\alpha_i (a) \Rightarrow \beta_i m_i (\id{i},a)) \leq \textstyle{\bigmeet}_{a' \in A(i)} (\alpha_i (a') \Rightarrow \beta'_i g_i m_i (\id{i},a')) 
        \\
        &\stackrel{\smash{\mathclap{ \eqref{eq:exponent_def_objects_i}, \eqref{eq:exponent_def_morphisms} }}}{\Rightarrow} 
        \hspace*{1em}
        \theta_i (m) \leq \theta'_i g^{(A,\alpha)}_i (m). 
        \qedhere
    \end{align*}
\end{proof}

\begin{lemma}
    \label{lem:fuzzy_presheaf_exponential_adjunction}
    Given a fuzzy presheaf $(A,\alpha)$, then $- \times (A,\alpha) \dashv -^{(A,\alpha)}$.
    \customqed
\end{lemma}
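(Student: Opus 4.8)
The plan is to exhibit, for each pair of fuzzy presheaves $(C,\gamma)$ and $(B,\beta)$, a bijection
\[
    \FuzzyPresheafDefault\big((C,\gamma) \times (A,\alpha),\; (B,\beta)\big)
    \;\cong\;
    \FuzzyPresheafDefault\big((C,\gamma),\; (B,\beta)^{(A,\alpha)}\big),
\]
natural in $(C,\gamma)$ and $(B,\beta)$, by showing that the usual currying bijection of the presheaf exponential adjunction $- \times A \dashv -^A$ in $\PresheafHatDefault$ (recalled in \cref{ex:exponential_objects_in_set_fuzzyset_presheaf}) restricts to fuzzy-presheaf morphisms. Concretely, for a natural transformation $h : C \times A \to B$ its transpose $\hat h : C \to B^A$ sends $c \in C(i)$ to the natural transformation $y(i) \times A \to B$ whose $j$-component is $(\iota : j \to i,\, a) \mapsto h_j(C(\iota)(c), a)$; conversely a natural transformation $k : C \to B^A$ transposes to $\check k : C \times A \to B$ with $\check k_i(c,a) = k_i(c)_i(\id{i},a)$, which is exactly evaluation at $(\id{i},a)$ as used in \eqref{eq:exponent_def_objects_i}. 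Since a morphism of fuzzy presheaves is a natural transformation of the underlying presheaves satisfying a membership inequality, and the forgetful functor to $\PresheafHatDefault$ is faithful, the naturality squares in $(C,\gamma)$ and $(B,\beta)$, as well as the fact that $\hat{(-)}$ and $\check{(-)}$ are mutually inverse, are inherited for free from the presheaf case; the only genuine content is to check that $h$ is a fuzzy morphism if and only if $\hat h$ is.

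For that, I would unfold both conditions at an arbitrary $i \in I$. By \cref{def:product_functor}, $h : (C,\gamma) \times (A,\alpha) \to (B,\beta)$ is a fuzzy morphism iff $\gamma_i(c) \meet \alpha_i(a) \leq \beta_i h_i(c,a)$ for all $c \in C(i)$ and $a \in A(i)$. On the other hand, using $(\hat h_i(c))_i(\id{i},a) = h_i(C(\id{i})(c),a) = h_i(c,a)$ together with \eqref{eq:exponent_def_objects_i}, the membership value $\theta_i(\hat h_i(c))$ equals $\bigmeet_{a \in A(i)} \big(\alpha_i(a) \Rightarrow \beta_i h_i(c,a)\big)$, so $\hat h : (C,\gamma) \to (B,\beta)^{(A,\alpha)}$ is a fuzzy morphism iff $\gamma_i(c) \leq \alpha_i(a) \Rightarrow \beta_i h_i(c,a)$ for all $c \in C(i)$ and $a \in A(i)$.

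The equivalence of these two families of inequalities is then immediate from the defining Heyting adjunction \eqref{eq:heyting_algebra_cartesian_closed}, applied pointwise in each $\labels(i)$. Hence currying and uncurrying preserve fuzzy morphisms, the bijection restricts, naturality is inherited, and the adjunction follows. I expect the only mildly delicate points to be purely bookkeeping: matching the evaluation map $(\id{i},a)$ of \eqref{eq:exponent_def_objects_i} with the counit of the presheaf-level adjunction, and confirming that $\hat{(-)}$ and $\check{(-)}$ are indeed mutually inverse on the nose (which reduces verbatim to the corresponding statement in $\PresheafHatDefault$ via faithfulness of the forgetful functor); everything else is the one-line Heyting computation above.
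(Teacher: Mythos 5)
Your proposal is correct and matches the paper's proof in all essentials: your transposes $\hat h$ and $\check k$ are exactly the maps $\phi$ and $\psi$ of \cref{def:phi,def:psi}, and your pointwise Heyting argument via \eqref{eq:heyting_algebra_cartesian_closed} is precisely how the paper establishes $\gamma \leq \theta\cdot\phi(h)$ and $\gamma\meet\alpha \leq \beta\cdot\psi(k)$. The only difference is one of economy: where you inherit the various naturality conditions and the mutual-inverse property from the presheaf-level adjunction via faithfulness of the forgetful functor --- legitimate, since the underlying carriers of the fuzzy product and exponential are exactly the presheaf-level ones --- the paper verifies each of these by explicit element chases in \cref{lem:phi_well_defined,lem:psi_well_defined,lem:phi_psi_are_inverses_of_each_other}.
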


As a consequence, \cref{prop:cartesian_closed} is now proved.

\subsection{Locally cartesian closed}
\label{subsec:locally_cartesian_closed}

The goal of this section is to prove the following.

\begin{proposition} \label{prop:locally_cartesian_closed}
    $\FuzzyPresheafDefault$ is locally cartesian closed.
\end{proposition}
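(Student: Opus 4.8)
The plan is to verify the definition of local cartesian closure directly: show that every slice $\FuzzyPresheafDefault/(B,\beta)$ is cartesian closed. Rather than building exponentials in the slice by hand, I would exhibit each such slice as itself a fuzzy presheaf category of the kind already handled, and then invoke \cref{prop:cartesian_closed}.

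First I would lift to the fuzzy setting the classical equivalence $\PresheafDefault/B \equivcat \Presheaf{\elements{B}}$, where $\elements{B}$ is the category of elements of $B$ (small, since $I$ is small and $B$ is set-valued) \cite{MacLane_Moerdijk_1994}: an object $f : A \to B$ of the slice corresponds to the presheaf $\tilde{A}$ on $\elements{B}$ with $\tilde{A}(i,b) = f_i^{-1}(b)$. Given a fuzzy presheaf morphism $f : (A,\alpha) \to (B,\beta)$, I would equip $\tilde A$ with the membership functions $\tilde\alpha_{(i,b)} = \alpha_i\restrict{f_i^{-1}(b)}$; the defining inequality $\alpha_i \leq \beta_i f_i$ says exactly that $\tilde\alpha_{(i,b)}$ takes values in the principal down-set $\labels_\beta(i,b) \defeq \setvbar{x \in \labels(i)}{x \leq \beta_i(b)}$. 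Unwinding morphisms, a map in the slice becomes a morphism of $\labels_\beta$-fuzzy presheaves on $\elements{B}$ (the $\leq$-commutation conditions correspond on the nose), and conversely. This yields $\FuzzyPresheafDefault/(B,\beta) \equivcat \FuzzyPresheaf{\elements{B}}{\labels_\beta}$, generalising the analogous observation for fuzzy sets in \cite{Stout_1993}.

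Second, I would check that each $\labels_\beta(i,b)$ is again a complete Heyting algebra, so that \cref{prop:cartesian_closed} applies to $\FuzzyPresheaf{\elements{B}}{\labels_\beta}$. It is a complete lattice: joins and nonempty meets are computed as in $\labels(i)$ and automatically stay below $\beta_i(b)$, while the empty meet is $\beta_i(b)$ itself, the top of the ideal. Setting $x \Rightarrow_b y \defeq (x \Rightarrow y) \meet \beta_i(b)$ for $x,y \leq \beta_i(b)$ yields a Heyting implication, by a one-line check of (\ref{eq:heyting_algebra_cartesian_closed}). Hence $\FuzzyPresheaf{\elements{B}}{\labels_\beta}$ is cartesian closed by \cref{prop:cartesian_closed}, so its equivalent $\FuzzyPresheafDefault/(B,\beta)$ is too; as $(B,\beta)$ was arbitrary, $\FuzzyPresheafDefault$ is locally cartesian closed.

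The main obstacle I anticipate is bookkeeping rather than insight: spelling out the equivalence $\FuzzyPresheafDefault/(B,\beta) \equivcat \FuzzyPresheaf{\elements{B}}{\labels_\beta}$ — the two functors on objects and morphisms, their functoriality, and the natural isomorphisms witnessing that the composites are identities — while carefully tracking how the membership data and the $\leq$-inequalities transform. The one genuinely new ingredient beyond the known presheaf-slice equivalence is the observation that principal down-sets of a complete Heyting algebra are again complete Heyting algebras, which is precisely what keeps us within the hypotheses of \cref{prop:cartesian_closed}.
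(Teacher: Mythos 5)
Your proposal is correct and follows essentially the same route as the paper: the paper likewise establishes the equivalence $\FuzzyPresheafDefault/(D,\delta) \equivcat \FuzzyPresheaf{\elements{D}}{\tilde{\labels}}$ with $\tilde{\labels}(i,d) = \labels(i)_{\leq \delta_i(d)}$ (its Lemma~\ref{lem:equivalence_of_categories}) and then invokes Proposition~\ref{prop:cartesian_closed} together with preservation of cartesian closedness under equivalence. Your explicit check that principal down-sets of a complete Heyting algebra are again complete Heyting algebras is a point the paper leaves implicit, and is a worthwhile addition.
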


Because we consider slice categories, let us fix an arbitrary fuzzy presheaf $(D,\delta)$ for the rest of this section.
We aim to prove that $\nicefrac{\FuzzyPresheafDefault}{(D,\delta)}$ is cartesian closed.
Here are the details of this category.
\begin{itemize}[$\bullet$]
    \item 
    Its objects are of the form $((A,\alpha),p)$ where $(A,\alpha)$ is a fuzzy presheaf and $p : (A,\alpha) \to (D,\delta)$ is a fuzzy presheaf morphism to the fixed object $(D,\delta)$, hence satisfying $\alpha \leq \delta p$.
    We denote this object by $(A,\alpha,p)$ for simplicity.

    \item
    A morphism $f:(A,\alpha,p) \to (B,\beta,q)$ is a natural transformations $f: A \to B$ satisfying $p = qf$.
    For $p,q$ and $f$ to be also well-defined in the base category $\FuzzyPresheafDefault$, we must also have $\alpha \leq \delta p$, $\beta \leq \delta q$ and $\alpha \leq \beta f$.
\end{itemize}
\[
    \begin{tikzcd}[sep=small]
        A
            \ar[rr, "p"]
            \ar[dr, color=gray, "\alpha"']
        & {}
            \ar[d, "\leq" description, draw=none, color=gray]
        & D
            \ar[dl, color=gray, "\delta"] \\
        & {\color{gray}\labels} & 
    \end{tikzcd}
    \qquad \qquad
    \begin{tikzcd}[sep=tiny]
        &&& D \\
        A &&&& B \\
        \\
        && {\color{gray}\labels}
        \arrow["\delta"'{pos=0.9}, color=gray, from=1-4, to=4-3]
        \arrow["f", crossing over, from=2-1, to=2-5]
        \arrow["p", from=2-1, to=1-4]
        \arrow["q"', from=2-5, to=1-4]
        \arrow["\alpha"', color=gray, from=2-1, to=4-3]
        \arrow["\beta", color=gray, from=2-5, to=4-3]
    \end{tikzcd}
\]
To prove locally cartesian closedness, we extend on the technique
used in \cite[Lemma 9.23]{Awodey_2006}.
We demonstrate that the slice category $\nicefrac{\FuzzyPresheafDefault}{(D,\delta)}$ is equivalent to \textit{another} category of fuzzy presheaves $\FuzzyPresheaf{J}{\tilde{\labels}}$ for some other $J$ and $\tilde{\mathcal{L}}$.
Because cartesian closedness is preserved by categorical equivalence \cite[Ex.~I.4]{MacLane_Moerdijk_1994}, and because $\FuzzyPresheaf{J}{\tilde{\labels}}$ is cartesian closed by \cref{prop:cartesian_closed}, then \cref{prop:locally_cartesian_closed} follows immediately.
The category $J$ is the category of elements of $D$, so let us recall its definition.

\begin{definition}[{\cite[p.\ 203]{Awodey_2006}}]
    %[Category of elements]
    \label{def:category_of_elements}
    Given a presheaf $D \in \PresheafDefault$, its \myemph{category of elements} is denoted $\elements{D}$ or $\int_I D$ and contains the following:
    \begin{itemize}
        \item 
        objects are pairs $(i,d)$ where $i \in I$ and $d \in D(i)$, and
        
        \item
        morphisms $\iota: (j,e) \to (i,d)$ are $I$-morphisms $\iota:j \to i$ such that $D(\iota)(d) = e$.
    \end{itemize}
\end{definition}

\begin{example}
    \label{ex:category_of_elements_of_graph}
    Recall that $\Graph$ it is the presheaf category on 
    $\opcat{I} = \smash{
        \begin{tikzcd}[ampersand replacement=\&]
            E
                \ar[r, "s"{description}, shift left=3pt]
                \ar[r, "t"{description}, shift right=3pt]
            \& V
        \end{tikzcd}
    }
    $.
    Given a graph $D$, the objects of its category of elements are ${\setvbar{(V,v)}{v \in D(V)}} \cup {\setvbar{(E,e)}{e \in D(E)}}$.
    Each edge $e \in D(E)$ induces two $\elements{D}$-morphisms, one going onto the source of $e$, $(E,e) \to (V,s(e))$, and one onto its target $(E,e) \to (V,t(e))$.
    All morphisms of $\elements{D}$ are obtained this way.
\end{example}

\begin{lemma} \label{lem:equivalence_of_categories}
    Let $I$ be a small category and $(\labels(i))_{i \in I}$ a family of complete Heyting algebras.
    For any object $(D,\delta) \in \FuzzyPresheafDefault$, we have an equivalence of categories
    \[
        \quot{\FuzzyPresheafDefault}{(D,\delta)} 
        \equivcat
        \FuzzyPresheaf{\elements{D}}{\tilde{\labels}}
    \]
    where for $(i,d) \in \opcat{\elements{D}}$, $\tilde{\labels}(i,d) \defeq \labels(i)_{\leq \delta_i(d)}$.
    \customqed
\end{lemma}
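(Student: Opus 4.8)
The plan is to lift the classical equivalence $\quot{\PresheafHatDefault}{D} \equivcat \FunctorCat{\Set}{\opcat{\elements{D}}}$ (\cite[Lemma 9.23]{Awodey_2006}) to the fuzzy setting by tracking what happens to the membership data. I would take the underlying presheaf-theoretic bijections as known; the content is to verify that they transport membership functions correctly, and that the correct labelling family on $\elements{D}$ is precisely $\tilde{\labels}(i,d) = \labels(i)_{\leq \delta_i(d)}$.

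First, I would define the comparison functor $\Phi : \quot{\FuzzyPresheafDefault}{(D,\delta)} \to \FuzzyPresheaf{\elements{D}}{\tilde{\labels}}$. On an object $(A,\alpha,p)$, set $\Phi(A,\alpha,p)(i,d) \defeq p_i^{-1}(d) \subseteq A(i)$, with presheaf action on a morphism $\iota : (j,e) \to (i,d)$ of $\elements{D}$ the restriction of $A(\iota) : A(i) \to A(j)$; this restriction lands in $p_j^{-1}(e)$ by naturality of $p$ together with $D(\iota)(d) = e$. For the membership function, observe that $a \in p_i^{-1}(d)$ forces $\alpha_i(a) \leq \delta_i(p_i(a)) = \delta_i(d)$, so $\alpha_i(a) \in \labels(i)_{\leq \delta_i(d)} = \tilde{\labels}(i,d)$; hence $\bar{\alpha}_{(i,d)} \defeq \alpha_i\restrict{p_i^{-1}(d)}$ is well-typed. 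A slice morphism $f : (A,\alpha,p) \to (B,\beta,q)$ restricts fibrewise to $\bar{f}_{(i,d)} : p_i^{-1}(d) \to q_i^{-1}(d)$ because $q_i f_i = p_i$, and the inequality $\alpha_i \leq \beta_i f_i$ restricts to $\bar{\alpha}_{(i,d)} \leq \bar{\beta}_{(i,d)} \bar{f}_{(i,d)}$, so $\Phi(f)$ is a well-defined morphism of $\tilde{\labels}$-fuzzy presheaves.

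Second, I would give the quasi-inverse $\Psi$. For $(B,\beta) \in \FuzzyPresheaf{\elements{D}}{\tilde{\labels}}$, put $A(i) \defeq \coprod_{d \in D(i)} B(i,d)$, with action $A(\iota)(d,b) \defeq \big( D(\iota)(d), B(\iota)(b) \big)$ (reading $\iota$ as the $\elements{D}$-morphism $(j, D(\iota)(d)) \to (i,d)$), projection $p_i(d,b) \defeq d$, and membership $\alpha_i(d,b) \defeq \beta_{(i,d)}(b)$. Since $\beta_{(i,d)}(b) \in \tilde{\labels}(i,d) = \labels(i)_{\leq \delta_i(d)}$, we get $\alpha_i(d,b) \leq \delta_i(d) = \delta_i p_i(d,b)$, so $p : (A,\alpha) \to (D,\delta)$ is a fuzzy-presheaf morphism and $(A,\alpha,p)$ is an object of the slice; on morphisms $\Psi$ is the evident coproduct of components. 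It then remains to exhibit natural isomorphisms between $\Phi\Psi$, $\Psi\Phi$ and the respective identity functors. On underlying presheaves these are exactly the isomorphisms of the non-fuzzy equivalence (e.g.\ $\coprod_{d \in D(i)} p_i^{-1}(d) \isom A(i)$ via $(d,a) \mapsto a$), and each of them manifestly preserves membership values, because both $\Phi$ and $\Psi$ define the new membership function by a mere restriction or relabelling, never altering a value. This yields the equivalence, whence \cref{prop:locally_cartesian_closed} follows as indicated after the lemma.

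The only genuine obstacle is conceptual rather than computational: one must recognise that the slice condition $\alpha_i \leq \delta_i p_i$ says exactly that each fibre $p_i^{-1}(d)$ is a fuzzy set valued in the principal down-set $\labels(i)_{\leq \delta_i(d)}$. Once this is observed, the whole argument is transported essentially verbatim from the presheaf case. (If one additionally wants $\FuzzyPresheaf{\elements{D}}{\tilde{\labels}}$ to meet the hypotheses of \cref{prop:cartesian_closed}, one checks separately that $\labels(i)_{\leq \delta_i(d)}$ is again a complete Heyting algebra, with implication $x \Rightarrow y$ given by $\delta_i(d) \meet (x \Rightarrow y)$ where the right-hand $\Rightarrow$ is that of $\labels(i)$; this is not needed for the equivalence itself.)
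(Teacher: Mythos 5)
Your proposal is correct and follows essentially the same route as the paper's own proof: both construct the functor sending $(A,\alpha,p)$ to the fibrewise fuzzy presheaf $(i,d)\mapsto p_i^{-1}(d)$ with restricted membership, the quasi-inverse via coproducts over $D(i)$ with first projection, and the same componentwise natural isomorphisms, all resting on the observation that $\alpha_i \leq \delta_i p_i$ places each fibre in the principal down-set $\labels(i)_{\leq \delta_i(d)}$. Your closing parenthetical remark, that $\labels(i)_{\leq \delta_i(d)}$ must again be a complete Heyting algebra for \cref{prop:cartesian_closed} to apply to the right-hand category, is a worthwhile point that the paper leaves implicit.
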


By \cref{lem:equivalence_of_categories,prop:cartesian_closed}, then \cref{prop:locally_cartesian_closed} is proved.
By \cref{prop:fuzzy_presheaf_has_finite_limits_and_colimits,prop:fuzzy_presheaf_has_regular_subobject_classifier,prop:locally_cartesian_closed}, the proof of \cref{thm:fuzzy_presheaf_is_a_quasitopos} is complete.

\begin{remark}
    % If $I$ is a finite category, going from presheaf to finite presheaf does not change the result that the category obtained is a topos.
    If $I$ is a finite category, then every finite presheaf category $\FinSet^{\opcat{I}}$ is a topos~\cite[Ex.~5.2.7]{Borceux_1994_vol3}.
    Analogously, if $I$ is finite we also have that every finite fuzzy presheaf category is a quasitopos.
\end{remark}

\section{Fuzzy presheaves are rm-adhesive}
\label{sec:rm_adhesive}

In this section we show that fuzzy presheaves are rm-adhesive. To state and motivate the result we need some definitions.

\smallskip
\begin{mysidepicture}{3.2cm}{0.5ex}{-2ex}{%
    \begin{tikzcd}[row sep={20,between origins},column sep={20,between origins}, ampersand replacement=\&]
        \& F  \ar[rr] \ar[dd] \ar[dl] \& \&  G \ar[dd] \ar[dl] \\
        E \ar[rr, crossing over] \ar[dd] \& \& H \\
          \& B \ar[rr] \ar[dl] \& \&  C \ar[dl] \\
        A \ar[rr] \&\& D \ar[from=uu,crossing over]
    \end{tikzcd}
}
\begin{definition}[{\cite{Lack_Sobocinski_Walukiewicz_2004_Adhesive_categories}}]
    A pushout square is \myemph{Van Kampen (VK)} if, whenever it lies at the bottom of a commutative cube, like $ABCD$ on the diagram, where the back faces $FBAE$, $FBCG$ are pullbacks, then
    \begin{center}
        the front faces are pullbacks $\iff$ the top face is pushout.
    \end{center}
    A pushout square is \myemph{stable} (under pullback) if, whenever it lies at the bottom of such a cube, then
    \begin{center}
        the front faces are pullbacks $\implies$ the top face is pushout.
    \end{center}
\end{definition}
\end{mysidepicture}

\begin{definition}[{\cite{Garner_Lack_2012_On_the_axioms_for_adhesive_and_quasiadhesive_categories}}]
    A category $\cat{C}$ is
    \begin{itemize}
        \item \myemph{adhesive} if pushouts along monomorphisms exist and are VK;
        \item \myemph{rm-adhesive}\footnote{Also known as \myemph{quasiadhesive}.} if pushouts along regular monomorphisms exist and are VK;
        \item \myemph{rm-quasiadhesive} if pushouts along regular monomorphisms exist, are stable, and are pullbacks.
    \end{itemize}
\end{definition}

Toposes are adhesive~\cite[Proposition 9]{Lack_Sobocinski_Walukiewicz_2004_Adhesive_categories} (and thus rm-adhesive), and quasitoposes are rm-quasiadhesive~\cite{Garner_Lack_2012_On_the_axioms_for_adhesive_and_quasiadhesive_categories}. Not all quasitoposes are rm-adhesive. The category of simple graphs is a counterexample~\cite[Corollary 20]{Johnstone_Lack_Sobocinski_2007_Quasitoposes_quasiadhesive_artin_glueing}.

\smallskip

\begin{mysidepicture}{3.7cm}{0.5ex}{-2ex}{%
    \begin{tikzcd}[column sep=12, row sep=12,ampersand replacement=\&,nodes={rectangle,inner sep=0.5mm,outer sep=0}]
        \& K
            \arrow[ld, "u" description]
            \arrow[rrd, "r" description, pos=0.3]
            \arrow[dd, equals]
        \& \& \\
        G_K
            \arrow[dd, "u'" description]
        \& \& \& R
            \arrow[ld, "w" description] \arrow[dd, equals]
        \\
        \& K
            \arrow[ld, "t_K" description]
            \arrow[rrd, "r" description, pos=0.3]
        \& G_R
            \arrow[from=llu, "g_R" description, crossing over, pos=0.25]
        \& \\
        K'
            \arrow[rrd, "r'" description]
        \& \& \& R 
            \arrow[ld, "t_R" description]
        \\
        \& \& R'
            \arrow[from=uu, "w'" description, crossing over]
        \&                                                          
    \end{tikzcd}
}%
\begin{remark}
    Knowing whether a quasitopos $\cat{C}$ is rm-adhesive is considerably relevant for \pbpostrong{}. Every \pbpostrong{} rewrite step defines a commutative cube (shown on the right)
    where the bottom and top faces are pushouts, and the back faces are pullbacks. So if $t_K$ is a regular mono (as often is the case), it follows that the front faces are pullbacks. The termination method for \pbpostrong{} rewriting~\cite{Overbeek_2023_Termination_of_graph_transformation}, recently developed by the second and third author, depends on the front faces being pullbacks, as well as other quasitopos properties. We expect that the dependency will hold for future theory as well.
\end{remark}
\end{mysidepicture}
\medskip

\emph{Coproducts} are a generalisation of the notion of disjoint unions in $\Set$.
However, a notion of \emph{union} can still be defined.
In a quasitopos, there is an elegant description of binary union of regular subobjects \cite[Proposition 10(iii)]{Johnstone_Lack_Sobocinski_2007_Quasitoposes_quasiadhesive_artin_glueing}.
\begin{definition}
    \label{def:binary_union_of_regular_subobjects_in_quasitopos}
    The \emph{binary union} of two regular subobjects $f: A \regmono C$ and $g : B \regmono C$ in a quasitopos is obtained as the pushout of the pullback of $f$ and $g$.
    \begin{equation}
        \label{eq:binary:union:diagram}
        \begin{tikzcd}[column sep=normal, row sep=tiny]
            & A
                \arrow[rd, "f''" description]
                \arrow[rrrd, "f" description, hook, bend left=10]
            & & \\
            D \defeq A \sqcap_C B 
                \arrow[ru, "g'" description]
                \arrow[rd, "f'" description]
            & & A \sqcup_{D} B
                \arrow[rr, "h" description, dotted]
            && C \\
            & B
                \arrow[rrru, "g" description, hook, bend right=10]
                \arrow[ru, "g''" description]
            & &  
        \end{tikzcd}
    \end{equation}
\end{definition}

\begin{proposition}[{\cite[Proposition 2.4]{Garner_Lack_2012_On_the_axioms_for_adhesive_and_quasiadhesive_categories}, \cite[Proposition 3]{Behr_2021_Concurrency_theorems}}]
    \label{proposition:binary:union:monic}
    In a quasitopos, if $f$ or $g$ of~\eqref{eq:binary:union:diagram} is regular monic, then $h$ is monic. \qed
\end{proposition}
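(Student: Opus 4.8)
The plan is to prove that $h$ is monic by a diagram chase with generalised elements, whose only non‑elementary ingredient is that pushouts along regular monomorphisms in a quasitopos are stable under pullback. By the symmetry of the statement, assume $f$ is regular monic. Since regular monomorphisms in a quasitopos coincide with strong ones and strong monomorphisms are pullback‑stable, the leg $f' : D \to B$ of the pullback square defining $D = A \times_C B$ in \cref{def:binary_union_of_regular_subobjects_in_quasitopos} — being a pullback of $f$ — is again a regular monomorphism. Hence the square with legs $g', f', f'', g''$ is a pushout along the regular monomorphism $f'$, and because quasitoposes are rm‑quasiadhesive this pushout square is \emph{stable}: pulling it back along any morphism into $A \sqcup_D B$ yields again a pushout square. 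This is the one structural fact I will lean on.

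Now take $a, b : X \rightrightarrows A \sqcup_D B$ with $h a = h b$; the goal is $a = b$. First I would pull the pushout square back along $a$: by stability this covers $X$ by two subobjects $\iota_A : X_A \to X$ and $\iota_B : X_B \to X$, which are jointly epic as the coprojections of a pushout, and on which $a \iota_A$ factors through $f''$, respectively $a \iota_B$ through $g''$. Then I would iterate, pulling the pushout square back along $b \circ \iota_A$ to cover $X_A$ by jointly epic $X_{AA}, X_{AB}$ on which $b \iota_A$ factors through $f''$, respectively through $g''$; symmetrically $X_B$ is covered by $X_{BA}, X_{BB}$. This leaves four pieces on which to check that $a$ and $b$ agree. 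On $X_{AA}$ both restrictions factor through $f'' : A \to A \sqcup_D B$, say as $f'' a'$ and $f'' b'$; composing with $h$ and using $h f'' = f$ together with $h a = h b$ gives $f a' = f b'$, so $a' = b'$ since $f$ is monic, and the two maps agree there. The piece $X_{BB}$ is identical with $g$ in place of $f$. On the mixed piece $X_{AB}$ we have $a \iota_A = f'' a''$ and $b \iota_A = g'' b''$ with $f a'' = g b''$ (again by composing with $h$), so $(a'', b'')$ factors through the pullback $D = A \times_C B$ via some $d$ with $g' d = a''$ and $f' d = b''$; then $f'' a'' = f'' g' d = g'' f' d = g'' b''$ by commutativity of the pushout square, so again the maps agree; $X_{BA}$ is symmetric. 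Gluing: the four pieces force $a \iota_A = b \iota_A$ and $a \iota_B = b \iota_B$, and joint epimorphy of $\iota_A, \iota_B$ then yields $a = b$.

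The routine part is the bookkeeping of the two nested pullbacks and of the four comparison pieces; the only real content is the twofold use of stability of pushouts along regular monomorphisms — once to split $X$ according to where $a$ lands, once to split each piece according to where $b$ lands — so the main obstacle is simply organising this cleanly rather than any deep step. It is worth noting that the argument never uses that the pushout square is itself a pullback, nor that it is Van Kampen; only stability is needed, which is precisely what rm‑quasiadhesivity provides, and this is why the conclusion holds in every quasitopos even though quasitoposes need not be rm‑adhesive.
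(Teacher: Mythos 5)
Your argument is correct. Note first that the paper does not prove this proposition at all: it is imported verbatim from Garner--Lack and Behr with a bare citation, so your write-up supplies an argument where the paper supplies none. What you give is essentially the standard proof underlying the cited result, and every step checks out: $f'$ is regular monic as a pullback of the regular mono $f$ (equalisers, and likewise strong monos, are pullback-stable); rm-quasiadhesivity of quasitoposes --- which the paper itself quotes --- makes the pushout along $f'$ stable, and instantiating the cube with the pulled-back square on top and the pullback projections as vertical maps is exactly the right way to extract "pulling back the pushout square along $a$ yields a pushout"; the four-piece case analysis is complete, with the two "pure" pieces using that $f$ (resp.\ $g$) is monic and the two "mixed" pieces using the \emph{defining} pullback $D = A \sqcap_C B$ of \eqref{eq:binary:union:diagram} together with commutativity $f''g' = g''f'$; and joint epimorphy of pushout coprojections glues the pieces back together. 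Your closing observation is also accurate: the argument uses only stability of the pushout, not the fact that in a quasitopos this pushout square is itself a pullback, and the only pullback invoked is the one defining $D$. One point worth making explicit rather than leaving implicit in "identical with $g$ in place of $f$": on the piece $X_{BB}$ you need $g$ to be \emph{monic} (plain, not regular), which is licensed here because \cref{def:binary_union_of_regular_subobjects_in_quasitopos} takes both $f$ and $g$ to be subobjects; the "$f$ or $g$ regular" hypothesis only governs which of the two is additionally regular, and your WLOG reduction is fine under that reading.
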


% \begin{proposition}[]
% \label{prop:binary:unions:are:effective}
%     In any quasitopos, binary unions of regular subobjects are effective. That is, the union of two regular subobjects $f: A \regmono C$ and $g : B \regmono C$ can be obtained as the pushout of the pullback of $A \stackrel{f}{\regmono} C \stackrel{g}{\leftregmono} B$. \customqed
% \end{proposition}

To prove  $\FuzzyPresheafDefault$ rm-adhesive, we use the following theorem.

\begin{theorem}[{\cite[Theorem 21]{Johnstone_Lack_Sobocinski_2007_Quasitoposes_quasiadhesive_artin_glueing}}]
    \label{theorem:rm-adhesive:iff:union}
    If $\cat{C}$ is a quasitopos, then $\cat{C}$ is rm-adhesive iff the class of regular subobjects is closed under binary union.
    \qed
\end{theorem}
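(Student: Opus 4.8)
The plan is to isolate the single property separating rm-adhesivity from the weaker rm-quasiadhesivity that every quasitopos already enjoys, and then to show that this property is equivalent to closure of regular subobjects under binary union. Since $\cat{C}$ is a quasitopos it is rm-quasiadhesive \cite{Garner_Lack_2012_On_the_axioms_for_adhesive_and_quasiadhesive_categories}: pushouts along regular monomorphisms exist, are stable under pullback, and are pullbacks. Unwinding the Van Kampen condition against this, the only missing ingredient for rm-adhesivity is the \emph{descent} half of VK, namely that in a cube over a pushout along a regular mono with pullback back faces, a pushout top face forces the front faces to be pullbacks (stability already supplies the converse implication). Thus both directions reduce to tying this descent property to unions. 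Throughout I will use two standing facts about quasitoposes: regular monomorphisms are stable under pullback (already noted in \cref{ex:regular_monomorphisms_in_set_presheaf_fuzzyset}) and under pushout along regular monos; and the poset $\Sub_{\mathrm{reg}}(X)$ of regular subobjects is naturally isomorphic to $\cat{C}(X,\Omega)$ and always carries an internal meet (intersections of regular subobjects are regular), acquiring binary joins, hence becoming a distributive lattice, precisely when regular subobjects are closed under union.

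For the forward direction, assume $\cat{C}$ is rm-adhesive and take regular subobjects $f\colon A \regmono C$ and $g\colon B \regmono C$. Form their intersection $D = A \times_C B$ and the union square of \cref{def:binary_union_of_regular_subobjects_in_quasitopos}, whose legs $D \regmono A$ and $D \regmono B$ are regular (being pullbacks of $g,f$) and whose pushout is $U$ with induced $h\colon U \to C$, already monic by \cref{proposition:binary:union:monic}. Since the union square is a pushout along a regular mono, the VK property makes it bicartesian with regular-mono injections $A \regmono U$ and $B \regmono U$. It then remains to promote $h$ from monic to regular. I would do this by exhibiting the subobject $[h]$ as the join $[A] \join [B]$ computed inside $\Sub_{\mathrm{reg}}(C)$: the descent direction of VK is exactly what forces the join of the two regular subobjects to be represented by a regular mono into $C$, and by construction that mono is $h$, so $h$ is regular.

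For the converse, assume regular subobjects are closed under binary union, so each $\Sub_{\mathrm{reg}}(X)$ is a distributive lattice whose join is the honest regular union. I would prove descent directly in the cube notation of the VK definition, with bottom pushout $S = Q +_P R$ along a regular mono $m\colon P \regmono Q$, pullback back faces, and top pushout $S' = Q' +_{P'} R'$. By stability of regular monos under pullback the pulled-back leg $P' \regmono Q'$ is again regular, so the top square is also a pushout along a regular mono; hence $R' \regmono S'$ is regular and, together with the image of $Q' \to S'$, realises $S'$ as a regular union. Now pull the regular subobject $R \regmono S$ back along $S' \to S$ to a regular subobject $R''$ of $S'$. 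Using the bottom square as a pullback ($Q \times_S R = P$) and a back face as a pullback ($P \times_Q Q' = P'$), a short pasting computation yields the meet identity $R'' \meet (\text{image of } Q') = P'$. Combining this with the covering identity $(\text{image of } Q') \join R' = S'$ and the distributive law in $\Sub_{\mathrm{reg}}(S')$ gives $R'' = R' \join P' = R'$, that is $R' = R \times_S S'$, which is precisely the front face being a pullback; the second front face is symmetric.

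The main obstacle is the descent computation in the converse, and two points carry its weight. First, the distributive law must be applied to the join $(\text{image of } Q') \join R' = S'$; this is legitimate only because union-closure places that join inside the distributive lattice $\Sub_{\mathrm{reg}}(S')$, and it is exactly here that a quasitopos failing union-closure (such as simple graphs) breaks, so the hypothesis is used in an essential and localised way. Second, because the VK definition requires only \emph{one} leg $m$ of the bottom pushout to be a regular mono, the opposite injection $Q \to S$ need not be monic, so the bookkeeping must be carried out through image factorisations rather than literal subobjects; verifying the meet identity $R'' \meet (\text{image of } Q') = P'$ cleanly in this asymmetric setting, via repeated pullback pasting, is the technical heart of the argument.
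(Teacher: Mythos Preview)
The paper does not prove this theorem. It is quoted verbatim from \cite[Theorem~21]{Johnstone_Lack_Sobocinski_2007_Quasitoposes_quasiadhesive_artin_glueing} and closed with a bare \qedsymbol, indicating that the proof is deferred to that reference; the paper then merely \emph{applies} the result in the proof of \cref{thm:fuzzy_presheaf_is_rm_adhesive}. There is therefore no ``paper's own proof'' to compare your proposal against.

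That said, a brief comment on the proposal itself. Your overall architecture---reducing rm-adhesivity to the descent half of VK via rm-quasiadhesivity, and linking descent to distributivity in $\Sub_{\mathrm{reg}}$---is the right shape and matches the strategy of the cited source. Two points are underspecified. In the forward direction, the sentence ``the descent direction of VK is exactly what forces the join \ldots\ to be represented by a regular mono'' is an assertion, not an argument: you have not exhibited a cube to which descent is applied, nor explained how the conclusion that a front face is a pullback yields regularity of $h$. A cleaner route is to note that in an rm-adhesive category, pushouts along regular monos are \emph{effective} (pushout complements along regular monos are unique), and then use that effectiveness to identify $h$ with a regular subobject. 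In the converse, your acknowledgement that $Q \to S$ need not be monic is correct, but the promised ``bookkeeping via image factorisations'' is where the real work lies: the identity $R'' \meet (\text{image of } Q') = P'$ does not follow from mere pullback pasting once you replace a map by its image, because image factorisation does not in general commute with pullback in a quasitopos. The cited paper handles this by working with the \emph{regular} image and exploiting that regular epimorphisms in a quasitopos are stable under pullback; you would need to invoke that explicitly.
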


% \begin{lemma}[{\cite[Lemma 1.5.2]{hosseini2014finite}}]
%     \label{lemma:binary:union:of:monos:is:mono}
%     If the pushout of \eqref{eq:binary:union:diagram} is stable, then $x$ is monic if the other morphisms are ($f$ and $g$ possibly non-regular monic).
%     \qed
% \end{lemma}

\begin{theorem}
    \label{thm:fuzzy_presheaf_is_rm_adhesive}
    If $I$ is a small category and $(\labels(i))_{i \in I}$ is a family of complete Heyting algebras, then $\FuzzyPresheafDefault$ is rm-adhesive.
\end{theorem}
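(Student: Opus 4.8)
The plan is to reduce the statement to \cref{theorem:rm-adhesive:iff:union}. Since $\FuzzyPresheafDefault$ is already known to be a quasitopos by \cref{thm:fuzzy_presheaf_is_a_quasitopos}, it suffices to show that its class of regular subobjects is closed under binary union. So I would fix two regular monomorphisms $f : (A,\alpha) \regmono (C,\gamma)$ and $g : (B,\beta) \regmono (C,\gamma)$, form the pullback $(D,\delta) \defeq (A,\alpha) \sqcap_{(C,\gamma)} (B,\beta)$ and then the pushout $(U,\upsilon) \defeq (A,\alpha) \sqcup_{(D,\delta)} (B,\beta)$ as in \eqref{eq:binary:union:diagram}, and prove that the induced mediating morphism $h : (U,\upsilon) \to (C,\gamma)$ is again a \emph{regular} mono.

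First I would reduce everything to a pointwise statement. By (the proof of) \cref{prop:fuzzy_presheaf_has_finite_limits_and_colimits}, finite limits and colimits of fuzzy presheaves are computed by taking the corresponding limit or colimit of the underlying presheaves in $\PresheafHatDefault$ (hence pointwise in $\Set$) and equipping it with the meet, resp.\ the least, membership function that turns the cone, resp.\ cocone, legs into morphisms. Consequently, forgetting memberships turns \eqref{eq:binary:union:diagram} into the binary-union square of the topos $\PresheafHatDefault$, and since in $\Set$ a pushout of a span of injections is again a span of injections (a special case of \cref{proposition:binary:union:monic} applied in $\PresheafHatDefault$), each component $h_i : U(i) \to C(i)$ is injective. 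By the pointwise characterisation of regular monomorphisms of fuzzy presheaves ($m$ regular iff each $m_i$ is injective and $\alpha_i = \gamma_i m_i$), it therefore only remains to verify that $\upsilon_i = \gamma_i h_i$ for every $i \in I$.

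For the membership computation I would work at a fixed stage $i$, suppressing subscripts, and use that $f,g$ are regular so $\gamma f = \alpha$ and $\gamma g = \beta$, together with $f g' = g f'$ from the pullback $D = A \sqcap_C B$. One may then assume $A,B \subseteq C$ with $D = A \cap B$, with $g',f'$ the inclusions, $U = A \cup B$, and $h : U \hookrightarrow C$ the inclusion (well-defined and injective by the previous paragraph). The membership $\upsilon$ on the pushout is, by construction, the least family with $\alpha \leq \upsilon f''$ and $\beta \leq \upsilon g''$, which is $\alpha$ on $A \setminus D$, $\beta$ on $B \setminus D$, and $\alpha\restrict{D} \join \beta\restrict{D}$ on $D$. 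Comparing with $\gamma h$: on $A \setminus D$ one gets $\gamma f = \alpha$, on $B \setminus D$ one gets $\gamma g = \beta$, and on $D$ one gets $\gamma f\restrict{D} = \alpha\restrict{D}$ and equally $\gamma g\restrict{D} = \beta\restrict{D}$, so on $D$ the two memberships $\alpha\restrict{D}$ and $\beta\restrict{D}$ coincide and the join collapses to this common value $= \gamma h\restrict{D}$. Hence $\upsilon = \gamma h$ pointwise, so $h$ is a regular mono and $(U,\upsilon)$ with $h$ is a regular subobject representing the binary union of $f$ and $g$; \cref{theorem:rm-adhesive:iff:union} then gives that $\FuzzyPresheafDefault$ is rm-adhesive.

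I expect the only genuinely delicate point to be the last one: that on the overlap $D = A \cap B$ the restricted memberships $\alpha\restrict{D}$ and $\beta\restrict{D}$ agree — equivalently, that the pullback membership $\delta$ equals $\alpha g' = \beta f'$ and not merely their meet — which is precisely what forces the join defining $\upsilon$ on $D$ to collapse and makes $h$ membership-preserving. Everything else is bookkeeping once limits and colimits are recognised to be created on underlying presheaves.
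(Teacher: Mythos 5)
Your proposal is correct and follows essentially the same route as the paper: reduce via \cref{theorem:rm-adhesive:iff:union} to closure of regular subobjects under binary union, work pointwise in $\FuzzySetDefault$, get $h$ monic from \cref{proposition:binary:union:monic}, and then verify membership preservation. The only cosmetic difference is in the last step, where you compute the pushout membership $\upsilon$ explicitly as a join and show it collapses on the overlap (using $\alpha g' = \gamma f g' = \gamma g f' = \beta f'$), whereas the paper sandwiches $\upsilon(y)$ between the equal values $\alpha(a)$ and $\gamma(h(y))$ --- both arguments are valid and rest on the same regularity of $f$ and $g$.
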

\begin{proof}
    By \cref{thm:fuzzy_presheaf_is_a_quasitopos}, $\FuzzyPresheafDefault$ is a quasitopos, and so by \cref{theorem:rm-adhesive:iff:union}, it suffices to show that the class of regular subobjects is closed under binary union. By the reasoning given in Example~\ref{ex:regular_monomorphisms_in_set_presheaf_fuzzyset}, it suffices to show this property pointwise, i.e., on the level of $\FuzzySetDefault$.
    
    So let two regular subobjects $f : A \regmono C$ and $g : B \regmono C$ of $\FuzzySetDefault$ be given.
    In~\eqref{eq:binary:union:diagram}, $f'$ and $g'$ are regular monic by (general) pullback stability, and then $f''$ and $g''$ are regular monic by pushout stability in quasitoposes~\cite[Lemma A.2.6.2]{Johnstone_elephant_2002}. 
    Moreover, by \cref{proposition:binary:union:monic}, $h$ is monic. 
    
    Now consider an arbitrary element $y \in A \sqcup_D B$. 
    By general pushout reasoning for $\Set$, $y$ is in the image of $f''$ or $g''$, say w.l.o.g.~$y = f''(a)$ for some $a \in A$.
    Because $f = hf''$ and $f$ is regular (i.e., membership-preserving), $a$ and $h(y)$ have the same membership value. 
    The membership value of $y$ is sandwiched between the ones of $a$ and of $h(y)$ and all three are thus equal.
    This means that $h$ preserves the membership, i.e., is regular.
\end{proof}

\section{Examples and applications}
\label{sec:examples}

In this section, we showcase the wide variety of structures that presheaf categories cover.
We can therefore add labels from a complete Heyting algebra to all of those categories, and have an rm-adhesive quasitopos.

\begin{example}
    \label{ex:graph_categories_as_presheaf_categories}
    We describe multiple categories of graph as presheaf categories $\PresheafDefault$ and specify $\opcat{I}$ for each.
    \begin{itemize}[$\bullet$]%, topsep=2pt, itemsep=4pt]
        \item 
        \emph{Multigraphs}: We allow parallel edges, i.e., with same source and target.
        \begin{itemize}[-]
            \item 
            Directed: denoted $\Graph$ in this paper;
            \(
                \opcat{I} =
                \smash{
                \begin{tikzcd}[ampersand replacement=\&]
                    E
                        \ar[r, "s"{description}, shift left=3pt]
                        \ar[r, "t"{description}, shift right=3pt]
                    \& V.
                \end{tikzcd}
                }
            \)

            \item
            Undirected \cite[§8]{vigna2003graphistopos}:
            each edge $e$ requires another one $\sym(e)$ in the other direction, which is equivalent to having an undirected edge.
            \[
                \opcat{I} =
                \smash{
                \begin{tikzcd}[ampersand replacement=\&]
                    E
                        \ar[loop, "\sym"', distance=1em, in=205, out=155]
                        \ar[r, "s"{description}, bend left = 0, shift left=3pt]
                        \ar[r, "t"{description}, bend right = 0, shift right=3pt]
                    \& V
                \end{tikzcd}
                }
                \quad
                \text{where}
                \qquad
                \begin{aligned}
                    \sym \cdot \sym &= \id{E}, \\
                    s \cdot \sym &= t, \\
                    t \cdot \sym &= s.
                \end{aligned}
            \]

            \item
            Directed reflexive \cite[§8]{vigna2003graphistopos}:
            Each vertex $v$ has a specific loop 
            \smash{\(
                \begin{tikzcd}
                    v 
                    \arrow[loop, "\refl_v"', distance=.8em, in=200, out=160]
                \end{tikzcd}
            \)}
            preserved by graph homomorphisms.
            It is equivalent to omitting those reflexive loops and considering that an edge mapped onto a reflexive loop is now sent to the vertex instead, giving \emph{degenerate} graphs.
            \[
                \opcat{I} =
                \smash{
                \begin{tikzcd}[ampersand replacement=\&]
                    E
                        \ar[r, "s"{description}, bend left = 0, shift left=6pt]
                        \ar[r, "t"{description}, bend right = 0, shift right=6pt]
                    \& V
                        \ar[l, "\refl"{description}]
                \end{tikzcd}
                }
                \quad
                \text{where}
                \qquad
                s \cdot \refl = t \cdot \refl = \id{V}.
            \]

            \item
            Undirected reflexive \cite[§8]{vigna2003graphistopos}: with both $\sym$ and $\refl$ in $\opcat{I}$, we must add the equation $\sym \cdot \refl = \refl$.
        \end{itemize}

        \item
        \emph{(Multi) Hypergraphs}:
        An \emph{hyperedge} can contain any number of vertices.
        \begin{itemize}[-]
            \item 
            For hyperedges to be ordered lists $\langle v_1, \ldots, v_{m} \rangle$ of vertices, let $\opcat{I}$ have objects $V$ and $E_m$ for each $m \in \N_{\geq 1}$.
            Homomorphisms preserve arities.
            \[
                \opcat{I}
                \text{ contains for each $m \in \N_{\geq 1}$:}
                \quad
                \smash{
                \begin{tikzcd}[ampersand replacement=\&, column sep=large]
                    E_m
                        \ar[r, "s_1"{description}, bend left = 0, shift left=6pt]
                        \ar[r, "\cdots"{description}, draw=none]
                        \ar[r, "s_{m}"{description}, bend right = 0, shift right=6pt]
                    \& V
                \end{tikzcd}
                }
            \]

            \item
            For hyperedges to be unordered sets $\set{v_1, \ldots, v_{m}}$, take the convention that all possible ordered list containing the same vertices must exist.
            For that, add $m$ symmetry arrows, i.e., $\opcat{I}$ contains for each $m \in \N_{\geq 1}$:
            \[
                \smash{
                \begin{tikzcd}[ampersand replacement=\&, column sep=large]
                    E_m
                        \ar[loop, "\sym_1"', distance=1em, in=115, out=65]
                        \ar[loop, "\cdots"{description}, distance=1em, in=205, out=155]
                        \ar[loop, "\sym_{m}"', distance=1em, in=295, out=245]
                        \ar[r, "s_1"{description}, bend left = 0, shift left=6pt]
                        \ar[r, "\cdots"{description}, draw=none]
                        \ar[r, "s_{m}"{description}, bend right = 0, shift right=6pt]
                    \& V
                \end{tikzcd}
                }
                \quad
                \text{where}
                \qquad
                \begin{aligned}
                    \sym_j\cdot \sym_j &= \id{E}, \\
                    s_1 \cdot \sym_1 &= s_2, \\
                    s_2 \cdot \sym_1 &= s_1, \\
                    s_i \cdot \sym_1 &= s_i \text{ for } i \neq 1,2, \\
                    \ldots
                \end{aligned}
            \]

            \item
            To consider \emph{$m$-uniform} hyperedges, i.e., having exactly $m$ vertices in all of them, for $m$ fixed, let $\opcat{I}$ have as objects only $E_m$ and $V$ in the previous examples.

            \item
            We can extend the previous hyperedge examples to admit targets \cite[Example 3.4]{Lowe_1993_SPO} by having for each $m,n$ an object $E_{m,n}$ with
            \[
                \opcat{I}(E_{m,n},V) 
                =
                \set{s_0,\ldots,s_{m-1}} 
                \cup
                \set{t_0, \ldots, t_{n-1}}.
            \]

            \item
            Here is an alternative definition of a hypergraph:
            Let $G$ be a hypergraph if it has three sets $G(R), G(E), G(V)$ and two functions $f : G(R) \to G(E)$ and $g : G(R) \to G(V)$.
            We read $r \in G(R)$ as meaning that $g(r)$ is a vertex incident to the hyperedge $f(r)$.
            Note that a vertex can be incident to same hyperedge multiple times and that morphisms do not preserve arities here. 
            \[
                \opcat{I} =
                \smash{
                \begin{tikzcd}[ampersand replacement=\&, sep=small]
                    E
                    \& R
                        \ar[l, "f"']
                        \ar[r, "g"]
                    \& V
                \end{tikzcd}
                }
            \]
        \end{itemize}
    \end{itemize}
\end{example}

\begin{example}
    As pointed out in \cite{Baez_Genovese_Master_Schulman_2021}, the category of Petri nets is not a topos because it is not cartesian closed.
    However, the authors show that the category of \emph{pre-nets}, which are Petri nets where the input and the output of a transition are ordered, is equivalent to a presheaf category.
    % A Petri net can be described categorically as a pair of functions $\langle s,t \rangle : T \to \N[S] \times \N[S]$, where $T$ is a set of transitions, $S$ a set of states, and $\N[S]$ is the underlying set of the free commutative monoid on $S$.
    % Pre-nets are analogously defined as pairs of functions $\langle s,t \rangle : T \to S^* \times S^*$, where this time $S^*$ is the underlying set of the free monoid on $S$.
\end{example}

\begin{example}
    \newcommand{\resources}{\mathbb{R}}
    \newcommand{\mathletter}{\text{\Letter}}
    \newcommand{\processes}{\mathbb{P}}
    \newcommand{\holds}{\mathit{holds}}
    \newcommand{\connections}{\mathbb{C}}
    \newcommand{\myconnect}{\mathit{connect}}
    \newcommand{\mykey}{%
        \begin{tikzpicture}[baseline=-0.5ex]
            \draw [-] (0,0) to ++(2.5mm,0mm) to ++(0,-.5mm);
            \draw [-] (2mm,0) to ++(0,-.5mm);
            \draw [fill=white] (0,0) circle (.6mm);
        \end{tikzpicture}%
    }
    Here is an example of how the fuzzy structure can be used in hypergraph rewriting.
    Assume we are in a distributed system setting, with $\resources \supseteq \{ \mathletter, \mykey \}$ a globally fixed set of resources. A system state $S = \langle \processes, \holds, \connections, \myconnect \rangle$ consists of: a set of processes $\processes$, a resource assignment $\holds : \processes \to \powerset(\resources)$, a set of ternary connections $\connections$, and a connection assignment $\myconnect : \connections \to \processes \times \processes \times \processes$.
    
    Suppose that in this system, a process $p$ can transmit $\mathletter$ to a process $q$ if, for some $C \in \connections$ and $r \in \processes$: (i)~$p$, $q$ and $r$ are distinct; (ii)~$\mathletter \in \holds(p)$; (iii)~$\mathletter \notin \holds(q)$; (iv)~$\mathletter \notin \holds(q)$; (v)~$\myconnect(C) = (p,q,r)$.
    %\begin{itemize}[noitemsep]
    %    \item $p$, $q$ and $r$ are distinct;
    %    \item $\mathletter \in \holds(p)$; 
    %    \item $\mathletter \notin \holds(q)$;
    %    \item $\mykey \in \holds(r)$; and
    %   \item $\myconnect(C) =(p,q,r)$.
    %\end{itemize}
    Executing the transmission changes the state by removing $\mathletter$ from $\holds(p)$, adding $\mathletter$ to $\holds(q)$, removing $\mykey$ from $\holds(r)$ and removing $C$ from $\connections$.
    We can thus think of $r$ as a mediator for the transmission, which provides $\mykey$ as a required resource that is consumed.
    
    We can model states as objects of the fuzzy presheaf $\FuzzyPresheafDefault$ where:
    \(
     \opcat{I} =
        \smash{
        \begin{tikzcd}[cramped, column sep=20, ampersand replacement=\&]
        \connections 
            \arrow[r, "s" description, shift left=5.5pt] 
            \arrow[r, "m" description] 
            \arrow[r, "t" description, shift right=5.5pt]
        \& \processes
        \end{tikzcd}
        }
    \)
    and $\processes$ is labeled from the subset lattice $(\powerset(\resources), \subseteq)$, which is a complete Heyting algebra. Then the transmission described above can be informally defined as a hypergraph transformation rule:
    \begin{center}
    \begin{tikzpicture}[baseline=-3mm,->,node distance=12mm,nodes={rectangle}]
        \begin{scope}[]
              \node (p) {$p^{P \uplus \{\,\text{\raisebox{-.2ex}{\Letter }}\,\}}$};
              \node (q) [right=of p] {$q^{Q\, \setminus \{\,\text{\raisebox{-.2ex}{\Letter}}\,\}}$};
              \node (r) at ($(p)!.5!(q) + (0mm,-6mm)$) {$r^{R \uplus \{\mykey\}}$};
              \draw [->] (p) to (q);
              \draw [-,densely dotted] ($(p)!.5!(q)$) to (r);
        \end{scope}
        \node at (40mm,-3mm) {$\Longrightarrow$}; 
        \begin{scope}[xshift=50mm]
              \node (p) {$p^{P}$};
              \node (q) [right=of p] {$q^{Q \cup \{\,\text{\raisebox{-.2ex}{\Letter}}\,\}}$};
              \node (r) at ($(p.east)!.5!(q.west) + (0mm,-6mm)$) {$r^{R}$};
        \end{scope}
    \end{tikzpicture}
    \end{center}
    which is formally definable with a single \pbpostrong{} rule, so that it is applicable in any context. For extensive details on how this works for directed multigraphs, see \cite[Section 6]{Overbeek_Endrullis_Rosset_2023_PBPO+_Quasitopos}, which extends naturally to hypergraphs.
\end{example}

\begin{example}
    Other concrete examples of modellings using fuzzy multigraphs and fuzzy multigraph rewriting using \pbpostrong{} include:
    \begin{itemize}
        \item Binary decision diagrams reduction~\cite[Section 5]{Overbeek_2023_Tutorial_PBPOplus}.\footnote{To appear at the end of March 2023.}
        \item Processes that consume data from FIFO channels
        %, and store the data locally
        ~\cite[Example 80]{Overbeek_Endrullis_Rosset_2023_PBPO+_Quasitopos}.
        \item Terms and linear term rewriting~\cite[Section 4]{Overbeek_2021_Linear_term_rewriting_and_Termination}.
    \end{itemize}
\end{example}

\section{Fuzzy simple graphs form a quasitopos}
\label{sec:simple_fuzzy_graphs}

In \cref{ex:graph_categories_as_presheaf_categories}, we described different multigraph categories as presheaf categories.
Hence, adding a fuzzy structure to these categories results in quasitoposes by \cref{thm:fuzzy_presheaf_is_a_quasitopos}.
This raises the question whether simple fuzzy graph categories are quasitoposes, and if so, whether it can be shown using \cref{thm:fuzzy_presheaf_is_a_quasitopos}.

First, simple graphs do not arise as presheaf categories, meaning~\cref{thm:fuzzy_presheaf_is_a_quasitopos} cannot be applied.
Here is one way to observe this: presheaves are toposes and in toposes, all monomorphisms are regular.
However, monomorphisms in directed simple graphs \cite[p.\ 315]{Johnstone_Lack_Sobocinski_2007_Quasitoposes_quasiadhesive_artin_glueing} and in undirected simple graphs \cite[Lemma 3.7.1]{Plessas_2011_The_categories_of_graphs} are regular only if they reflect edges,\footnote{%
A graph homomorphism $f : G \to H$ \emph{reflects edges} if, for every $v,w \in G(V)$, every edge between $f_V(v)$ and $f_V(w)$ has an $f$-preimage.}
which is not always the case.
% And as a consequence of that fact: the inexistence of the subobject classifier  \cite{Plessas_2008_Topos_like_categories_of_graphs}.

Nevertheless, we show in this section that directed and undirected simple fuzzy graph categories are quasitoposes, by using a different technique. In more detail,  given a \emph{(Lawvere-Tierney) topology} $\topology$ on a topos, the $\topology$-separated elements form a subcategory which is a quasitopos \cite[Theorem 10.1]{Johnstone_1979_On_a_topological_topos}.
Vigna \cite{vigna2003graphistopos} has used this fact to prove that the (directed/undirected) simple graphs are exactly the $\lnot \lnot$-separated elements of the respective (directed/undirected) multigraph category, where $\lnot \lnot$ is a topology. We extend the approach by Vigna, showing that directed and undirected simple fuzzy graphs are quasitoposes.

Because we start from fuzzy multigraphs, which form a quasitopos, we need to work with the more general definition of a \emph{topology on a quasitopos} (\cref{def:topology_on_quasitopos}).
The result that the separated elements form a quasitopos still holds true with this more general definition of a topology \cite[Theorem 43.6]{Wyler_1991}.
Recall \cref{not:pullback_along,def:subobject} if needed for the notations used.

\begin{remark}
    Alternatively, Johnstone et al.\ show that Artin Glueing~\cite{Carboni_Johnstone_1995_Artin_glueing} can be used to prove that \textit{directed} simple graphs form a quasitopos.
    Given a functor $T : \cat{C} \to \cat{D}$, denote by $\cat{C} \sslash T$ the full subcategory of the slice category $\cat{C} / T$ consisting of the monomorphisms $X \mono TY$.
    Then directed simple graphs can be seen as $\Set \sslash T$ for $TX = X \times X$.
    Because $\Set$ is a quasitopos and $T$ preserves pullbacks, we have by \cite[Theorem 16]{Carboni_Johnstone_1995_Artin_glueing} that $\Set \sslash T$, i.e., directed simple graphs, is a quasitopos.
\end{remark}

% We have a \textit{reflective subcategories}, i.e., it is full and there is an adjunction
% \[
%     \begin{tikzcd}
%         \SimpGraph
%             \ar[r, shift right=1.3, "\text{inclusion}"']
%             \ar[r, phantom, "\scriptscriptstyle\bot"]
%         & \Graph  
%             \ar[l, shift right=1.3,"L"']
%     \end{tikzcd}
%     .
% \]
% The left adjoint $L$, called \emph{reflector}, merges here all parallel edges into one.
% The fact that $\Graph$ is a topos and $L$ preserves finite limits (i.e., is \emph{left exact}) implies that $\SimpGraph$ is a topos
% %if the category is a topos, then so is the reflective subcategory \cite[Theorem 3.7]{Wyler_1975_Are_there_topoi_in_topology}.
% %This criteria is sufficient but not necessary \cite{Schulman_2011_Mathoverflow_When_is_reflective_subcategory_of_topos_a_topos}.

\begin{definition}[{\cite[Def.~41.1]{Wyler_1991}}]
    \label{def:topology_on_quasitopos}
    A \myemph{topology} $\topology$ on a quasitopos $\cat{E}$ is a family of mappings $\MonoClass(A) \to \MonoClass(A)$ for each $A \in \cat{E}$.
    It sends every monomorphism $m : A_0 \mono A$ to another monomorphism $\topology m : A_1 \mono A$ with same codomain $A \in \cat{E}$, such that
    \begin{enumerate}[label=(\roman*)]
        \item
        If $m \leq m'$, then $\topology m \leq \topology m'$.

        \item
        $m \leq \topology m$.

        \item
        $\topology \topology m \simeq \topology m$.

        \item
        For all $f:B \to A$, we have $\topology(\pullbackby{f}m) \simeq \pullbackby{f}(\topology m)$.

        \item
        If $m$ is regular, then so is $\topology m$.
    \end{enumerate}
\end{definition}

Notice that axioms $(i)-(iii)$ says that $\topology$ is a closure operation.
For instance in $\Set$, given a subset $A_0 \subseteq A$, $(ii)$ requires that $A_0 \subseteq \topology(A_0) \subseteq A$.
By $(iv)$ and $(v)$ the closure operation must commute with pullbacks and preserve regularity.

\begin{example}[{\cite[Ex.~41.2]{Wyler_1991}}]
    We give two basic examples of topologies.
    The third example is more advanced; examples of it in a few categories are given in the next lemma.
    \begin{itemize}[topsep=2pt]
        \item
        The \emph{trivial topology}
        $\topology(A_0 \stackrel{m}\mono A) \defeq (A \stackrel{\id{A}}{\longequal} A)$.
        %adds all elements missing.
        
        \item 
        The \emph{discrete topology}
        $\topology(A_0 \stackrel{m}\mono A) \defeq (A_0 \stackrel{m}\mono A)$.
        %adds no element to $A_0$.

        \item
        Let $m : A_0 \mono A$.
        Consider the (necessarily monic) morphism from the initial object $0 \mono A$ and factorise it as (epi mono, regular mono) $\overline{o_A} \circ e : A_0 \epimono A_1 \regmono A$.
        Let $\lnot m$ be the exponential object $(\overline{o_A} : A_1 \regmono A)^{(m:A_0 \mono A)}$ in the slice category $\cat{E}/A$.
        Then $\lnot \lnot$ is a topology called the \emph{double negation}.
    \end{itemize}
\end{example}

\begin{lemma}
    \label{lem:details_notnot_topology}
    We detail the $\lnot \lnot$ topology in a few categories.
    In each situation, we consider subobjects of an object $A$ or $(A,\alpha)$.
    \begin{itemize}[noitemsep]
        \item 
        In $\Set$, 
        \(
            \lnot A_0 = A \setminus A_0.
        \)
        Therefore, $\lnot \lnot = \id{}$.
        % \(
        %     \lnot (A_0 \subseteq A) = (A \setminus A_0 \subseteq A).
        % \)

        \item
        In $\FuzzySetDefault$, 
        \(
            \lnot(A_0,\alpha_0) = (A \setminus A_0, \alpha).
        \)
        Therefore, $\lnot \lnot$ is the identity on the subset but replaces its membership function $\alpha_0$ with $\alpha$, the one of $A$.
        % \[
        %     \lnot\big((A_0,\alpha_0)\subseteq (A,\alpha)\big) = \big((A \setminus A_0, \alpha) \subseteq (A,\alpha) \big).
        % \]
        
        \item
        In $\Graph$,
        % \[
        %     \lnot (G_0 \subseteq G) = 
        %     \begin{cases}
        %         \text{vertex set: } \setvbar{v \in G(V)}{\{v\}\cap G_0 \neq \emptyset} \\
        %         \text{edge set: } \setvbar{e \in G(E)}{\{e\} \cap G_0 \neq \emptyset}.
        %     \end{cases}
        % \]
        $\lnot A_0$ is the largest subgraph of $A$ totally disconnected from $A_0$.
        Hence, $\lnot \lnot A_0$ is the largest subgraph of $A$ induced by $A_0(V)$, i.e., every edge from $A$ with source and target in $A_0$ is added in $\lnot \lnot A_0$. 

        \item
        In $\FuzzyGraph$, $\lnot \lnot$ acts like in $\Graph$ on the underlying graph, i.e., it adds all edges that had source and target in the subgraph, and it replaces the membership function of the subobject with the one of the main object.
    \end{itemize}
\end{lemma}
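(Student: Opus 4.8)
The plan is to first unfold the exponential that defines $\lnot$ into the familiar notion of a Heyting pseudocomplement, and then compute it by hand in each of the four categories. Fix a mono $m : A_0 \mono A$ in a quasitopos $\cat{E}$, so that $m$ is a subobject of the terminal object $\id{A}$ of the slice quasitopos $\cat{E}/A$, and $0 \mono A$ is the initial object of $\cat{E}/A$, i.e.\ the least subobject of $A$. Since the exponential $(0 \mono A)^{m}$ taken in $\cat{E}/A$ is again subterminal, it is a subobject of $A$, and by the exponential adjunction it is the \emph{largest} subobject $A_1 \mono A$ whose intersection (pullback) with $m$ is the initial object; that is, $\lnot m$ is the pseudocomplement of $m$ in $\Sub_{\cat{E}}(A)$, and it is always a regular subobject. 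I would record this observation once and then go through the four items.

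For $\Set$, the lattice $\Sub(A)$ is the Boolean algebra $\powerset(A)$, so $\lnot A_0 = A \setminus A_0$ and hence $\lnot \lnot A_0 = A_0$; together with the fact that every mono in $\Set$ is regular (\cref{ex:regular_monomorphisms_in_set_presheaf_fuzzyset}) this gives $\lnot \lnot = \id{}$. For $\FuzzySetDefault$, a subobject of $(A,\alpha)$ is a subset $A_0 \subseteq A$ equipped with a membership function $\alpha_0 \leq \alpha\restrict{A_0}$, and by the pointwise pullback formula for fuzzy presheaves (\cref{prop:fuzzy_presheaf_has_finite_limits_and_colimits}) the intersection of two such subobjects is the initial object exactly when the underlying subsets are disjoint (the fuzzy set $\emptyset$ being trivially initial). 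So disjointness is a purely set-theoretic condition: the largest subobject disjoint from $(A_0,\alpha_0)$ has underlying set $A \setminus A_0$, and, being largest, it carries the inherited membership $\alpha\restrict{A \setminus A_0}$. Iterating, $\lnot \lnot (A_0,\alpha_0) = (A_0,\alpha\restrict{A_0})$: the subset is unchanged, but its membership is overwritten by $\alpha$.

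For $\Graph$, I would first note that two subgraphs of $A$ with disjoint vertex sets are automatically disjoint as subgraphs, because every edge of a subgraph has both endpoints among that subgraph's vertices. Hence the largest subgraph of $A$ disjoint from $A_0$ is the full subgraph of $A$ on the vertices $A(V) \setminus A_0(V)$, which is $\lnot A_0$; applying $\lnot$ once more yields the full subgraph of $A$ on $A(V) \setminus (A(V) \setminus A_0(V)) = A_0(V)$, i.e.\ the subgraph of $A$ induced by $A_0(V)$, containing $A_0$ together with every $A$-edge between $A_0$-vertices. The $\FuzzyGraph$ case combines the previous two: disjointness of fuzzy subgraphs still only concerns the underlying graphs, so $\lnot \lnot$ acts on the underlying graph exactly as in $\Graph$, while the membership argument from $\FuzzySetDefault$ shows that the membership of the subobject is replaced by that of the ambient object $(A,\alpha)$.

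The step I expect to be the main obstacle is the first one: carefully justifying that the slice-exponential definition of $\lnot$ indeed computes the pseudocomplement in $\Sub(A)$, and in particular that this pseudocomplement is always a regular subobject. Once that is in place, the somewhat counterintuitive feature of $\lnot \lnot$ — discarding the chosen membership function (and, for graphs, adding edges) rather than returning the original subobject — follows immediately, and the four concrete computations are routine.
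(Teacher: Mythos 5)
Your proof is correct, but it takes a genuinely different route from the paper's. The paper's appendix computes $\lnot m$ by unwinding the slice-category exponential \emph{explicitly}, using the concrete formula for exponential objects in $\Set/A$, $\FuzzySetDefault/(A,\alpha)$ (citing Stout) and $\Graph/A$: for instance in $\Set$ the carrier is $\setvbar{(a,h)}{h : m\inv(a) \to \emptyset}$, which collapses to $\setvbar{a \in A}{m\inv(a) = \emptyset}$, and in the fuzzy case the membership value comes out as $\alpha(a) \meet \top = \alpha(a)$ because the big meet ranges over an empty index set. You instead make a single abstract observation --- both $0 \mono A$ and $m$ are subterminal in the slice, so $(0 \mono A)^{m}$ is again a subobject of $A$ and the exponential adjunction identifies it with the pseudocomplement of $m$ in $\Sub(A)$ --- and then reduce all four items to order-theoretic computations of largest disjoint subobjects. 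That reduction is sound (at most one map into a mono over $A$ forces $V^U \to A$ monic, and $W \leq V^U \iff W \meet U \leq V$ then pins it down in $\Sub(A)$), and your four computations agree with the paper's answers, including the key point that disjointness of subgraphs is detected on vertices alone, which is exactly what makes $\lnot\lnot$ add the induced edges. Your approach buys uniformity and makes the ``membership gets overwritten'' phenomenon transparent (the largest subobject on a given carrier inherits $\alpha$); the paper's buys self-containedness given the cited exponential formulas and does not presuppose that $\Sub(A)$-implication coincides with the slice exponential. Two minor points you leave implicit, both harmless here: in all four categories $0 \to A$ is already a regular mono, so the (epi, regular mono) factorisation in the definition of $\lnot$ is trivial (the paper notes this in passing), and the regularity of the pseudocomplement, while true and needed for $\lnot\lnot$ to be a topology, is not part of the statement you are proving.
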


The notions of density and separatedness from standard topology, can also be defined for (quasi)topos topologies.
To illustrate what density is, recall that $\Q$ is called dense in $\R$ because closing it under the standard metric topology, i.e., adding an arbitrary small open interval around each rational number, gives the whole space $\R$.
In standard topology, recall also that every separated space $B$ (a.k.a.\ Hausdorff space), has the property that functions $f : A \to B$ are fully determined by the images on any dense subsets of $A$.

\begin{definition}[{\cite[Def.~41.4, 42.1]{Wyler_1991}}]
    \label{def:dense_subobjects_and_separated_elements}
    Given a topology $\topology$ in a quasitopos $\cat{E}$:
    \begin{itemize}[topsep=2pt]
        \item 
        a monomorphism $A_0 \stackrel{m}{\mono} A$ is \myemph{$\topology$-dense} if $\topology(A_0 \stackrel{m}{\mono} A) = (A \stackrel{\id{A}}{\longequal} A)$,
    \end{itemize}
        
    \begin{mysidepicture}{2.4cm}{-1ex}{-.5ex}{
                \begin{tikzcd}[ampersand replacement=\&, sep=small]
                    A_0 \& \\
                    A \& B
                    \ar[from=1-1, to=2-2, "f"]
                    \ar[from=1-1, to=2-1, tail, "m"']
                    \ar[from=2-1, to=2-2, dotted, "g"']
                \end{tikzcd}
    }
    \begin{itemize}[topsep=0pt]    
        \item
        an object $B$ is called \myemph{$\topology$-separated} if for every $\topology$-dense subobject $m : A_0 \mono A$ and every morphism $f : A_0 \to B$ there exists at most one factorisation $g : A \to B$ of $f$ through $m$.
    \end{itemize}
    \end{mysidepicture}
\end{definition}

The $\lnot \lnot$-separated graphs are precisely the simple graphs.

\begin{lemma}[{\cite[Thms.~3 \& 4]{vigna2003graphistopos}}]
    \label{lem:separated_elements_in_multigraphs_are_simple_graphs}
    In $\Graph$, a subgraph $A_0 \subseteq A$ is $\lnot \lnot$-dense if it contains all the vertices $A_0(V) = A(V)$.
    As a consequence, a graph $B$ is $\lnot \lnot$-separated if it has no parallel edges.
\end{lemma}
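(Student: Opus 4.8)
The plan is to derive both claims from the explicit description of the double-negation closure in $\Graph$ recorded in \cref{lem:details_notnot_topology}: for a subgraph $A_0 \subseteq A$, the subgraph $\lnot\lnot A_0$ has vertex set $A_0(V)$ and contains exactly those edges of $A$ whose source and target both lie in $A_0(V)$ -- equivalently, $\lnot\lnot A_0$ is the full subgraph of $A$ spanned by $A_0(V)$. By \cref{def:dense_subobjects_and_separated_elements}, a subobject $m : A_0 \mono A$ is $\lnot\lnot$-dense precisely when $\lnot\lnot A_0 = A$ as subobjects of $A$. If $A_0(V) = A(V)$, then the full subgraph of $A$ on $A_0(V)$ is all of $A$, so $A_0$ is dense; conversely $\lnot\lnot A_0 = A$ forces the vertex sets to agree. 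This proves the first claim, and in fact the ``only if'' direction as well, which is what the second claim needs.

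For the consequence I would argue both implications directly. \textbf{No parallel edges implies separated.} Suppose $B$ has no parallel edges, let $m : A_0 \mono A$ be any $\lnot\lnot$-dense subobject -- so $A_0(V) = A(V)$ by the first part -- and let $f : A_0 \to B$ be arbitrary. If $g, g' : A \to B$ both satisfy $gm = f = g'm$, then $g_V = g'_V$ since they agree on $A_0(V) = A(V)$, and for every edge $e \in A(E)$ the edges $g_E(e)$ and $g'_E(e)$ of $B$ share the source $g_V(s(e))$ and the target $g_V(t(e))$, hence coincide because $B$ has no parallel edges. Thus $g = g'$, i.e.\ $B$ is $\lnot\lnot$-separated.

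\textbf{Separated implies no parallel edges.} Contrapositively, suppose $B$ has distinct edges $e_1 \neq e_2$ with a common source $v$ and a common target $w$. Take $A$ to be $\set{s \to t}$ (one edge together with its two endpoints), or the one-vertex graph with a single loop in case $v = w$, and let $A_0 \subseteq A$ be the discrete subgraph on the same vertices; by the first part $A_0 \mono A$ is $\lnot\lnot$-dense. Let $f : A_0 \to B$ send the vertex (or vertices) to $v$ and $w$. Then $f$ admits two distinct factorisations $g_1, g_2 : A \to B$ through $m$, namely the homomorphisms sending the unique edge of $A$ to $e_1$, respectively $e_2$. Hence $B$ is not $\lnot\lnot$-separated.

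The substantive work is hidden in \cref{lem:details_notnot_topology} itself -- computing $\lnot m$ as an exponential in the slice $\Graph/A$ and iterating to obtain the above description of $\lnot\lnot A_0$ -- since once that description is available the present lemma is just an unwinding of the definitions of $\lnot\lnot$-density and $\lnot\lnot$-separatedness. The only delicate point in the argument above is the degenerate case where the witnessing parallel edges are loops, which is why the construction allows $A$ to be the single-loop graph.
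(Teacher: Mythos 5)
Your proof is correct. Note that the paper does not prove this lemma itself---it is imported verbatim from Vigna (Thms.~3 \& 4), with only the computation of the $\lnot\lnot$ closure (\cref{lem:details_notnot_topology}) carried out in the appendix. Your argument reconstructs exactly the intended reasoning: density unwinds to $\lnot\lnot A_0 = A$, which by the description of $\lnot\lnot A_0$ as the induced subgraph on $A_0(V)$ is equivalent to $A_0(V) = A(V)$; separatedness then reduces to the observation that a homomorphism into a graph without parallel edges is determined by its vertex map. You correctly flag the one point where care is needed, namely that the second claim requires the \emph{converse} of the first as stated (dense $\Rightarrow$ all vertices present), and you supply it. The converse implications you prove in both parts are not claimed by the lemma but do hold (they are part of Vigna's statements), and the loop case in your witness construction is handled properly. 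No gaps.
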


Similarly, the $\lnot \lnot$-separated fuzzy graphs are the simple fuzzy graphs.

\begin{lemma}
    \label{lem:separated_elements_in_fuzzy_multigraphs_are_fuzzy_simple_graphs}
    In the category of (directed and undirected) fuzzy graphs, a subgraph $(A_0,\alpha_0) \subseteq (A,\alpha)$ is $\lnot \lnot$-dense if it contains all the vertices $A_0(V) = A(V)$.
    Hence, a fuzzy graph $(B,\beta)$ is $\lnot \lnot$-separated if it has no parallel edges.
    \customqed
\end{lemma}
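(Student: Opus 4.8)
The plan is to transfer Vigna's argument (\cref{lem:separated_elements_in_multigraphs_are_simple_graphs}) through the explicit description of $\lnot\lnot$ in $\FuzzyGraph$ supplied by \cref{lem:details_notnot_topology}. Recall from that lemma that $\lnot\lnot$ acts on the underlying graph exactly as in $\Graph$ — it keeps the vertex set of the subgraph, adjoins every edge of $A$ whose source and target both lie in $A_0(V)$, and resets the membership function to the restriction of $\alpha$. Hence $\lnot\lnot(A_0,\alpha_0)$ has carrier all of $A$ if and only if $A_0(V) = A(V)$, and in that case its membership function is exactly $\alpha$; so $(A_0,\alpha_0) \mono (A,\alpha)$ is $\lnot\lnot$-dense precisely when $A_0(V) = A(V)$, independently of $\alpha_0$. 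This establishes the density claim and, in particular, shows that $\lnot\lnot$-density is insensitive to the membership data of the subobject.

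For the separatedness claim I would argue both directions. If $(B,\beta)$ has no parallel edges, let $m : (A_0,\alpha_0) \mono (A,\alpha)$ be $\lnot\lnot$-dense and $f : (A_0,\alpha_0) \to (B,\beta)$ a morphism. By the density criterion $A_0(V) = A(V)$, so any two factorisations $g_1, g_2 : (A,\alpha) \to (B,\beta)$ of $f$ through $m$ agree on all vertices. For an edge $e$ of $A$, the edges $g_1(e)$ and $g_2(e)$ of $B$ share source $g_i(s(e))$ and target $g_i(t(e))$, hence coincide since $B$ has no parallel edges; thus $g_1 = g_2$ as graph morphisms, a fortiori as fuzzy graph morphisms, and $(B,\beta)$ is $\lnot\lnot$-separated.

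Conversely, if $(B,\beta)$ has distinct parallel edges $e_1,e_2$ with $s(e_1)=s(e_2)$ and $t(e_1)=t(e_2)$, take $A$ to be the graph with a single edge between two vertices, $\alpha$ constantly $\bot$, and $A_0 \subseteq A$ its discrete subgraph (all vertices, no edges) with $\alpha_0$ constantly $\bot$; then $m : (A_0,\alpha_0) \mono (A,\alpha)$ is $\lnot\lnot$-dense by the criterion above, and it is a monomorphism of $\FuzzyGraph$ since the membership condition $\alpha_0 \leq \alpha\restrict{A_0}$ is trivially met. The map $f$ sending the two vertices to $s(e_1)$ and $t(e_1)$ is a fuzzy graph morphism (the membership inequality is vacuous), and it admits two distinct factorisations through $m$, one sending the edge to $e_1$ and one to $e_2$ — again the membership conditions are vacuous. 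Hence $(B,\beta)$ is not $\lnot\lnot$-separated. The undirected case is entirely analogous: \cref{lem:details_notnot_topology} applies verbatim, density is again equivalent to $A_0(V)=A(V)$, and ``no parallel edges'' now means an undirected edge is determined by its unordered pair of endpoints, so the same two arguments go through with a single undirected edge in the counterexample.

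I do not expect a genuine obstacle here; the one point requiring care — and the only place the fuzzy setting differs from Vigna's — is the observation that $\lnot\lnot$ forgets the membership function of the subobject, so $\lnot\lnot$-density depends only on the underlying vertex set; once that is extracted from \cref{lem:details_notnot_topology}, the proof reduces cleanly to the unlabelled case, the remaining subtlety being merely the bookkeeping of choosing $\bot$-valued membership functions in the counterexample so that all maps involved are legitimate fuzzy graph morphisms.
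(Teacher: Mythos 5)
Your proof is correct and follows essentially the same route as the paper's: reduce to Vigna's argument for multigraphs via the explicit description of $\lnot\lnot$ in $\FuzzyGraph$ from \cref{lem:details_notnot_topology}, observing that density ignores the subobject's membership data and that separatedness only demands \emph{at most} one factorisation, so the membership functions never obstruct the uniqueness argument. You additionally prove the converse of the separatedness claim with an explicit counterexample, which goes beyond what the lemma asserts but is harmless (and legitimate, since in this section the label posets are complete Heyting algebras and hence have a $\bot$).
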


\begin{corollary}
    Directed and undirected simple fuzzy graphs form quasitoposes.
\end{corollary}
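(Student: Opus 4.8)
The plan is to invoke the machinery already assembled in the excerpt, specifically the theorem that the $\topology$-separated elements of a quasitopos form a quasitopos \cite[Theorem 43.6]{Wyler_1991}, together with the identification of simple fuzzy graphs as separated elements. First I would recall that fuzzy (directed or undirected) multigraphs form a quasitopos: this is an instance of \cref{thm:fuzzy_presheaf_is_a_quasitopos}, since by \cref{ex:graph_categories_as_presheaf_categories} both directed and undirected multigraphs are presheaf categories, and the label posets (the relevant complete Heyting algebras) are assumed in the hypothesis. Then I would apply the $\lnot\lnot$ topology on this quasitopos, which is available by the general construction recalled just before \cref{lem:details_notnot_topology}.

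The key step is then \cref{lem:separated_elements_in_fuzzy_multigraphs_are_fuzzy_simple_graphs}: the $\lnot\lnot$-separated objects of the fuzzy (directed/undirected) multigraph quasitopos are precisely the simple fuzzy graphs. Granting that lemma, the full subcategory of $\lnot\lnot$-separated objects is exactly the category of simple fuzzy graphs, and by \cite[Theorem 43.6]{Wyler_1991} this subcategory is a quasitopos. This proves the corollary for both the directed and the undirected case simultaneously, since the argument is uniform in the choice of base (directed vs.\ undirected) multigraph site.

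I expect the main obstacle to live entirely inside \cref{lem:separated_elements_in_fuzzy_multigraphs_are_fuzzy_simple_graphs} (whose proof is deferred to the appendix), rather than in the corollary itself. Concretely, one must check that the description of the $\lnot\lnot$ topology on fuzzy graphs given in \cref{lem:details_notnot_topology} — adding all edges with source and target in the subgraph, and overwriting the subobject's membership function with that of the ambient object — is correct, and in particular that the membership-overwriting behaviour does not spoil the computation of $\lnot\lnot$-density: a fuzzy subgraph is $\lnot\lnot$-dense iff it contains all vertices, exactly as in $\Graph$, because density only depends on the closure reaching the identity subobject, and the membership function of $A$ itself is its own. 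From there the separatedness characterisation (no parallel edges) follows just as in \cref{lem:separated_elements_in_multigraphs_are_simple_graphs} by a routine factorisation-uniqueness argument. So at the level of the corollary, there is no real obstacle: it is a one-line consequence of \cref{thm:fuzzy_presheaf_is_a_quasitopos}, the cited Wyler theorem, and \cref{lem:separated_elements_in_fuzzy_multigraphs_are_fuzzy_simple_graphs}.
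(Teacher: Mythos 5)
Your proposal matches the paper's proof exactly: the corollary is obtained as an immediate consequence of \cref{lem:separated_elements_in_fuzzy_multigraphs_are_fuzzy_simple_graphs} together with \cite[Theorem 43.6]{Wyler_1991}, with the fuzzy multigraph categories being quasitoposes by \cref{thm:fuzzy_presheaf_is_a_quasitopos}. Your additional remarks correctly locate the real work in the deferred lemma rather than in the corollary itself.
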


\begin{proof}
    Immediate by \cref{lem:separated_elements_in_fuzzy_multigraphs_are_fuzzy_simple_graphs} and \cite[Theorem 43.6]{Wyler_1991}.
\end{proof}

\section{Conclusion}
\label{sec:conclusion}

In this paper, we introduced the concept of fuzzy presheaves and proved that they form rm-adhesive quasitoposes.
Furthermore, we showed that simple fuzzy graphs, both directed and undirected, also form quasitoposes.

There are several directions for future work.
The alternative definition of fuzzy presheaves from \cref{rem:alternate_definition_of_fuzzy_presheaf}, with poset morphisms, can be explored further.
For graphs, it means that the membership of an edge gives a lower-bound for or determines the membership of its source and target.

To obtain fuzzy presheaves, we have added a pointwise fuzzy structure to presheaves.
More generally, different notions of fuzzy categories have been defined \cite{Sugeno_1983_Fuzzy_sets_and_systems,Syropoulos_2020_A_modern_introduction_to_fuzzy_mathematics}.
It is natural to wonder if a fuzzy (quasi)topos is also a quasitopos.

Finally, a more abstract question is whether having a fuzzy structure is a particular instance of a more abstract categorical construction.
For instance, one could express an inequality in a poset $\mathcal{L}$ via a natural transformations and then say that a diagram is $\leq$-commuting when it commutes up to this natural transformation.
Another possibility would be to consider poset-enriched categories for comparing morphisms via inequality.
Fuzzy structures do look like slice or comma categories but do not precisely fall under their scope.
For comma categories, Artin Glueing gives a nice criteria for obtaining new quasitoposes~\cite{Johnstone_Lack_Sobocinski_2007_Quasitoposes_quasiadhesive_artin_glueing}.
We wonder if more abstract uniform theorems for obtaining quasitoposes can be expressed and proved.

\subsubsection*{Acknowledgments}
We thank Helle Hvid Hansen for discussions and valuable suggestions.
The authors received funding from the Netherlands Organization for Scientific Research (NWO) under the Innovational Research Incentives Scheme Vidi (project.\ No.\ VI.Vidi.192.004).

\newpage

%\printbibliography
\bibliographystyle{includes/splncs04}
\bibliography{main}

\newpage

\section{Appendix}

% \begin{remark}
%     \label{rem:regular_monomorphisms_in_heyting_algebras_are_identities}
%     In a category such as a poset, a lattice, or a Heyting algebra, isomorphic objects are necessarily equal (this is called being \textit{skeletal}) and each $\Hom$-set contains at most one morphism (this is called being \textit{thin}).
%     Because of those properties, the class of regular monomorphisms coincide with the class of identity morphisms.
%     Indeed, suppose $f : A \to B$ is an equaliser of $g_1,g_2 : B \rightrightarrows C$, so in particular $g_1f = g_2f$.
%     By thinness, $g_1 = g_2$.
%     Thus, $\id{B}: B \to B$ also satisfies $g_1 \id{B} = g_2 \id{B}$.
%     By the universal property, there exists a unique $h : B \to A$ such that $fh = \id{B}$.
%     By virtue of being skeletal/by antisymmetry, 
%     $B \leq A \leq B$ implies $A = B$,
%     and therefore by thinness, $f = g = \id{A} = \id{B}$.
%     \[\begin{tikzcd}
%     	A & B & C \\
%     	B
%     	\arrow["{g_2}"', shift right=1, from=1-2, to=1-3]
%     	\arrow["{g_1}", shift left=1, from=1-2, to=1-3]
%     	\arrow["f", from=1-1, to=1-2]
%     	\arrow["h"', dotted, from=2-1, to=1-1]
%     	\arrow["{\id{B}}"', equal, from=2-1, to=1-2]
%     \end{tikzcd}\]
% \end{remark}

\subsection{Proof of \cref{lem:chi_A_1)is_natural_2)makes_a_pullback_square_3)is_unique}}

\begin{lemma}
    Given $j \xrightarrow{\iota} i$ and a sieve $A \stackrel{n}{\rightarrowtail} y(i)$,
    \begin{equation}
        \label{eq:formula_Omega(iota)(n)=y(j)_iff   _iota_in_A(j)}
        \Omega(\iota)(n) = y(j) \iff j \xrightarrow{\iota} i \in A(j).
    \end{equation}
\end{lemma}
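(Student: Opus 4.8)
The plan is to unfold the definition of $\Omega(\iota)$ and then reduce the claimed equivalence to a one-line consequence of $A$ being a subpresheaf.

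First I would recall that $\Omega = \Sub_{\hat{I}}(y(-))$, so for a morphism $\iota : j \to i$ in $I$ the function $\Omega(\iota) : \Omega(i) \to \Omega(j)$ sends the sieve $n : A \mono y(i)$ to the pullback $\pullbackby{y(\iota)} n$ of $n$ along $y(\iota) : y(j) \to y(i)$. Since limits in $\hat{I}$ are computed pointwise and, by \cref{def:yoneda_embedding}, the component $y(\iota)_k : I(k,j) \to I(k,i)$ is postcomposition by $\iota$, identifying each $A(k)$ with a subset of $I(k,i)$ (taking $n$ to be an inclusion, as we may) gives the explicit pointwise description
\[
    (\Omega(\iota)(n))(k) = \setvbar{\kappa : k \to j}{\iota \circ \kappa \in A(k)} \ \subseteq\ I(k,j)
\]
for every $k \in I$. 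In particular $\Omega(\iota)(n) = y(j)$ holds if and only if this subset is all of $I(k,j)$ for every $k$, i.e.\ if and only if $\iota \circ \kappa \in A(k)$ for all $k \in I$ and all $\kappa : k \to j$.

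For the implication ``$\Leftarrow$'' I would assume $\iota \in A(j)$ and use that $A$ is a subpresheaf of $y(i)$: the restriction map $y(i)(\kappa) : I(j,i) \to I(k,i)$ is precomposition by $\kappa$, and $A$ is closed under it, so $\iota \circ \kappa = y(i)(\kappa)(\iota) \in A(k)$ for every $\kappa : k \to j$; by the reformulation above this yields $\Omega(\iota)(n) = y(j)$. The implication ``$\Rightarrow$'' is then immediate by instantiating $k := j$ and $\kappa := \id{j}$ in that same reformulation, giving $\iota = \iota \circ \id{j} \in A(j)$.

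I do not expect a genuine obstacle here: the only point requiring care is keeping the variance straight — that $\Omega(\iota)$ is pullback \emph{along} $y(\iota)$, that $y(\iota)_k$ \emph{post}composes by $\iota$, whereas the presheaf action of $y(i)$ \emph{pre}composes — after which the statement is a direct unwinding together with a single use of closure of $A$ under the presheaf action.
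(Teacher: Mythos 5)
Your proposal is correct and follows essentially the same route as the paper's proof: both compute the pullback $\Omega(\iota)(n)$ pointwise as $\setvbar{\kappa : k \to j}{\iota \circ \kappa \in A(k)}$, reduce the claim to the equivalence $\iota \in A(j) \iff \forall k, \forall \kappa : k \to j,\ \iota \circ \kappa \in A(k)$, and settle that equivalence by closure of the subpresheaf $A$ under precomposition in one direction and by instantiating $k = j$, $\kappa = \id{j}$ in the other. No gaps.
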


\begin{proof}
    Observe first the following:
    \begin{equation}
        \label{eq:formula_iota_in_A(j)_iff_forall_k_iotakappa_in_A(k)}
        j \xrightarrow{\iota} i \in A(j) \iff \forall k \in I, \forall k \xrightarrow{\kappa} j : \ \  k \xrightarrow{\kappa} j \xrightarrow{\iota} i \in A(k). 
    \end{equation}
    The direction $(\Rightarrow)$ follows from the definition of $A$: as a subfunctor of $y(i)$ its effect on morphisms is precomposition.
    Hence, $A(\kappa)(\iota) = \iota \circ \kappa$ needs to be in $A(k)$ for well-definedness.
    Conversely, $(\Leftarrow)$ follows from taking $k=j$ and $\kappa=\id{j}$.
    
    Name $A_0$ the pullback of $n$ along $y(\iota)=\iota \circ -$ and take $k \in I$.
    Observe that the pullback is
    \begin{align*}
        A_0(k)
        &= \setvbar{\big(k \xrightarrow{\kappa} j, a \in A(k)\big)} {k \xrightarrow{\kappa} j \xrightarrow{\iota} i = n_k(a)} \\
        &\isom \setvbar{k \xrightarrow{\kappa} j} {k \xrightarrow{\kappa} j \xrightarrow{\iota} i \in \ima(n_k) \isom A(k)}
    \end{align*}
    Therefore,
    \begin{align*}
        \Omega(\iota)(n) = A_0 = y(j)
        &\iff \forall k \in I : \ \ A_0(k) = I(k,j) \\
        &\iff \forall k \in I, \forall k \xrightarrow{\kappa} j : \ \ k \xrightarrow{\kappa} j \xrightarrow{\iota} i \in A(k) \\
        &\stackrel{(\ref{eq:formula_iota_in_A(j)_iff_forall_k_iotakappa_in_A(k)})}{\iff}
        j \xrightarrow{\iota} i \in A(j).   
    \end{align*}
\end{proof}

\begin{proof}[Proof of \cref{lem:chi_A_1)is_natural_2)makes_a_pullback_square_3)is_unique}]
    \begin{enumerate}
        \item 
        Take $\iota : j \to i$ in $I$.
        We show the following commute
        \[\begin{tikzcd}[ampersand replacement=\&]
        	{B(i)} \& {\Sub_{\PresheafHatDefault}(I(-,i))} \\
        	{B(j)} \& {\Sub_{\PresheafHatDefault}(I(-,j))}
        	\arrow["{B(\iota)}"', from=1-1, to=2-1]
        	\arrow["{\Sub_{\PresheafHatDefault}(y(\iota))}", from=1-2, to=2-2]
        	\arrow["{(\chi_A)_i}", from=1-1, to=1-2]
        	\arrow["{(\chi_A)_j}"', from=2-1, to=2-2]
        \end{tikzcd}\]
        Take $b \in B(i)$.
        We are comparing presheaves $\opcat{I} \to \Set$, hence take also $k \in I$.
        \begin{align*}
            \Big( \Sub_{\PresheafHatDefault}(y(\iota)) \circ (\chi_A)_i \Big) (b)(k)
            &\stackrel{\text{def.~}\chi_A}{=}
            \Sub_{\PresheafHatDefault}(y(\iota)) \big(\setvbar{\kappa:k \to i}{B(\kappa)(b) \in m_k(A_k)}\big) \\
            &\stackrel{\text{def.~}\Sub_{\PresheafHatDefault}}{=}
            \{(k \xrightarrow{\rho} j, k \xrightarrow{\kappa} i) ~\vert~ B(\kappa)(b) \in m_k(A_k) \text{ and } \\
            & \hspace*{11.6em}  k \xrightarrow{\rho}j \xrightarrow{\iota} i = k \xrightarrow{\kappa}i \} \\
            &\stackrel{\hphantom{\text{def.~}\chi_A}}{\isom} 
            \setvbar{\rho : k \to j}{B(\rho)\big(B(\iota)(b)\big) \in m_k(A_k)} \\
            &\stackrel{\text{def.~}\chi_A}{=}
            \Big( (\chi_A)_j \circ B(\iota) \Big) (b)(k) 
        \end{align*}

        \item
        The pullback of $\chi_A$ and $\true$ is, for $i \in I$:
        \begin{align*}
            &\setvbar{\big(b \in B(i), \cdot \in 1 \big)}
            {(\chi_A)_i(b) = \true_i(\cdot)} \\
            &\isom
            \setvbar{b \in B(i)}
            {\forall j \in I: \ (\chi_A)_i(b)(j) = \true_i(\cdot)(j) = y(i)(j) = I(j,i)} \\
            &=
            \setvbar{b \in B(i)}
            {\forall j \in I: \ \setvbar{j \xrightarrow{\iota} i}{B(\iota)(b) \in A(j)} = I(j,i)} \\
            &=
            \setvbar{b \in B(i)}
            {\forall j \in I, \forall j \xrightarrow{\iota} i: \ B(\iota)(b) \in A(j)} \\
            &= A(j).
        \end{align*}
        The last equality hold since: $(\supseteq)$ follows from $B(\iota)(b) = A(\iota)(b)$ when $b \in A(i)$ since $A$ is a subobject of $B$, and $(\subseteq)$ by taking $\iota = \id{i}$.

        \item
        Suppose another natural transformation $\nu : B \to \Omega$ such that the pullback of $\nu$ and $\true$ gives $A \subseteq B$ and $! : A \to 1$.
        Consider the element $B(\iota)(b)$ in $B(j)$ and the condition for it to be in the subobject $A(j)$:
        \begin{align*}
            B(\iota)(b) \in A(j) 
            &\iff
            \nu_j\big(B(\iota)(b)\big) = \true_j(\cdot)
            \tag{pullback} \\
            &\iff 
            \Omega(\iota)(\nu_i(b)) = y(j)
            \tag{$\nu$ nat.~and $\true$ def.} \\
            &\iff \iota \in \nu_i(b)(j).
            \tag{by (\ref{eq:formula_Omega(iota)(n)=y(j)_iff   _iota_in_A(j)})}
        \end{align*}
        If we take back the definition of $\chi_A$, we have thus
        \begin{align*}
            (\chi_A)_i(b)(j)
            &= \setvbar{j \xrightarrow{\iota} i}{B(\iota)(b) \in A_j}
            \tag{def.~\ref{eq:def_chi_A}} \\
            &\isom \setvbar{j \xrightarrow{\iota} i}{\iota \in \nu_i(b)(j)}
            \tag{equivalence above} \\
            &\isom \nu_i(b)(j).
        \end{align*}
        Hence, $\chi_A$ and $\nu$ are equal, proving the uniqueness.
    \end{enumerate}
\end{proof}

\subsection{Proof of \cref{lem:fuzzy_presheaf_exponential_adjunction}}

\noindent We exhibit a natural isomorphism to show the exponentiation adjunction.
\begin{center}
    \begin{tikzcd}
        \Hom\Big((C,\gamma) \times (A,\alpha), (B,\beta)\Big)
            \ar[r, shift left=.4em, "\phi", "\isom"'] %"\stackrel{\phi}{\isomorph}"'
        & \Hom\Big((C,\gamma), (B,\beta)^{(A,\alpha)} \Big)
            \ar[l, shift left=.4em, "\psi"]
    \end{tikzcd}
\end{center}

\begin{definition}%[$\phi$]
    \label{def:phi}
    Let $\phi$ be defined as follows.
    \[
        \biggl(
        \begin{tikzcd}[ampersand replacement=\&, row sep = 0.15em, column sep = tiny]
        	{C \times A} \&\& B \\
        	\& {\color{gray}\labels}
        	\arrow["h", from=1-1, to=1-3]
        	\arrow["{\gamma \meet \alpha}"', color=gray, from=1-1, to=2-2]
        	\arrow["\beta", color=gray, from=1-3, to=2-2]
        \end{tikzcd}
        \biggr)
        \quad
        \xmapsto{\phi}
        \quad
        \biggl(
        \begin{tikzcd}[ampersand replacement=\&, row sep = 0.15em, column sep = tiny]
        	C \&\& \PresheafHatDefault(y(-) \times A, B) \\
        	\& {\color{gray}\labels}
        	\arrow["\phi(h)", from=1-1, to=1-3]
        	\arrow["\gamma"', color=gray, from=1-1, to=2-2]
        	\arrow["\theta", color=gray, from=1-3, to=2-2]
        \end{tikzcd}
        \biggr)
    \]
    For $i,j \in I$ and $c \in C(i)$, let $\phi(h)_i(c)_j: y(i)(j) \times A(j) \rightarrow B(j)$ be
    \begin{equation}
        (j \xrightarrow{\iota} i, a) \mapsto h_j( C(\iota)(c), a). \label{eq:def_phi}
    \end{equation}
\end{definition}

\begin{lemma}
    \label{lem:phi_well_defined}
    The natural transformation $\phi$ from \cref{def:phi} is well-defined:
    \begin{enumerate}
        \item
        $\phi$ is natural in $C$ and $B$,

        \item
        $\phi(h)_i(c)$ is natural in $j \in I$, for any $i \in I$ and $c \in C(i)$, and
        % \customqed

        \item
        $\gamma \leq \theta \cdot \phi(h)$.
    \end{enumerate}
\end{lemma}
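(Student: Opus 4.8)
The plan is to verify the three listed conditions in turn, observing that conditions (1) and (2) are the routine bookkeeping that accompanies the presheaf exponential, and that condition (3) is the only place where the Heyting-algebra structure actually does any work. I would begin with (2), since it is needed just to guarantee that each $\phi(h)_i(c)$ lies in the carrier $\PresheafHatDefault(y(i)\times A,B)$, i.e.\ is a genuine natural transformation in $j$.

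For (2): fix $i\in I$, $c\in C(i)$, and $\kappa:k\to j$ in $I$, and chase $(j\xrightarrow{\iota}i,a)$ through the naturality square for $\phi(h)_i(c):y(i)\times A\to B$. One route gives $B(\kappa)\big(h_j(C(\iota)(c),a)\big)$, which by naturality of $h:C\times A\to B$ equals $h_k\big(C(\kappa)(C(\iota)(c)),A(\kappa)(a)\big)$; the other route gives $h_k\big(C(\iota\kappa)(c),A(\kappa)(a)\big)$; these agree because $C$ is a functor, so $C(\iota\kappa)=C(\kappa)\circ C(\iota)$. No membership data enters here. For (1), the naturality of $\phi(h)$ as a presheaf morphism $C\to\PresheafHatDefault(y(-)\times A,B)$ (in the index $i$), and the naturality of $\phi$ in $C$ and in $B$, are all of the same flavour: unwind the defining formula \eqref{eq:def_phi}, use the naturality of the morphism being pre- or postcomposed, and use functoriality of $C$. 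Concretely, for $f:(C',\gamma')\to(C,\gamma)$ both $\phi\big(h\circ(f\times\id{A})\big)$ and $\phi(h)\circ f$ send $c'\in C'(i)$ and $(j\xrightarrow{\iota}i,a)$ to $h_j(f_j(C'(\iota)(c')),a)$, using naturality of $f$; for $g:(B,\beta)\to(B',\beta')$, both $\phi(g\circ h)$ and $g^{(A,\alpha)}\circ\phi(h)$ — with $g^{(A,\alpha)}$ being postcomposition by $g$ according to \eqref{eq:exponent_def_morphisms} — send $(j\xrightarrow{\iota}i,a)$ to $g_j\big(h_j(C(\iota)(c),a)\big)$.

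For (3), the one genuinely fuzzy step, I would show $\gamma_i(c)\leq\theta_i\big(\phi(h)_i(c)\big)$ for every $i\in I$ and $c\in C(i)$. Unfolding $\theta$ via \eqref{eq:exponent_def_objects_i} and using $\phi(h)_i(c)_i(\id{i},a)=h_i(C(\id{i})(c),a)=h_i(c,a)$, the inequality becomes $\gamma_i(c)\leq\bigmeet_{a\in A(i)}\big(\alpha_i(a)\Rightarrow\beta_i(h_i(c,a))\big)$, and, the meet being the greatest lower bound, this is equivalent to $\gamma_i(c)\leq\alpha_i(a)\Rightarrow\beta_i(h_i(c,a))$ for each $a\in A(i)$. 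By the defining adjunction \eqref{eq:heyting_algebra_cartesian_closed} of the Heyting algebra $\labels(i)$, this last inequality is in turn equivalent to $\gamma_i(c)\meet\alpha_i(a)\leq\beta_i(h_i(c,a))$; but the left-hand side is exactly $(\gamma\meet\alpha)_i(c,a)$, so it holds precisely because $h:(C\times A,\gamma\meet\alpha)\to(B,\beta)$ is a morphism of fuzzy presheaves.

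I do not expect a real conceptual obstacle here: the difficulty is purely notational, namely keeping the three layers of indexing straight in (1) and (2) — the Yoneda variable $j$, the $C$-action $C(\iota)$, and the components of the natural transformation $h$ — whereas step (3) collapses to a one-line application of \eqref{eq:heyting_algebra_cartesian_closed}.
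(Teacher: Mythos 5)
Your proposal is correct and follows essentially the same route as the paper's proof: a diagram chase for (2) using naturality of $h$ and functoriality of $C$, the standard pre-/postcomposition computation for (1), and for (3) the combination of $\gamma\meet\alpha\leq\beta h$ with the Heyting adjunction \eqref{eq:heyting_algebra_cartesian_closed} and the identity $\phi(h)_i(c)_i(\id{i},a)=h_i(c,a)$. The only cosmetic differences are that you check naturality in $C$ and in $B$ separately where the paper does a single combined computation, and you run (3) backwards through equivalences where the paper chains forward implications.
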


\begin{proof}
    \begin{enumerate}
        \item 
        We verify that $\phi$ is natural in $C$ and $B$.
        Take $f: (C',\gamma') \to (C,\gamma)$ and $g: (B,\beta) \to (B', \beta')$.
        
        \noindent\begin{tabularx}{\textwidth}{@{}XX@{}}
            \begin{equation}
                \label{eq:f}
                \begin{tikzcd}[scale cd =.8, ampersand replacement=\&]
                	{C'} \& {} \& C \\
                	\& {\color{gray}\labels}
                	\arrow["{\gamma'}"', color=gray, from=1-1, to=2-2]
                	\arrow["f", from=1-1, to=1-3]
                	\arrow["\gamma", color=gray, from=1-3, to=2-2]
                        \arrow["\leq" description, draw=none, color=gray, from=1-2, to=2-2]
                \end{tikzcd}
            \end{equation} &
            \begin{equation}
                \label{eq:g}
                \begin{tikzcd}[scale cd =.8, ampersand replacement=\&]
                	B \& {} \& {B'} \\
                	\& {\color{gray}\labels}
                	\arrow["\beta"', color=gray, from=1-1, to=2-2]
                	\arrow["g", from=1-1, to=1-3]
                	\arrow["{\beta'}", color=gray, from=1-3, to=2-2]
                        \arrow["\leq" description, draw=none, color=gray, from=1-2, to=2-2]
                \end{tikzcd}
            \end{equation}
        \end{tabularx}
        We want to prove that the following commute:
        \[\begin{tikzcd}[column sep = huge]
        	{\Hom \big( (C,\gamma) \times (A,\alpha),(B,\beta) \big)} & {\Hom \big( (C,\gamma), (B,\beta)^{(A,\alpha)} \big)} \\
        	{\Hom \big( (C',\gamma') \times (A,\alpha),(B',\beta') \big)} & {\Hom \big( (C',\gamma'), (B',\beta')^{(A,\alpha)} \big)}
        	\arrow["{g \circ - \circ (f \times \id{A})}"', from=1-1, to=2-1]
        	\arrow["{g^{(A,\alpha)} \circ - \circ f}", from=1-2, to=2-2]
        	\arrow["{\phi_{(C,\gamma), (B,\beta)}}", from=1-1, to=1-2]
        	\arrow["{\phi_{(C',\gamma'), (B',\beta')}}"', from=2-1, to=2-2]
        \end{tikzcd}\]
        We check it.
        Take $i \in I, c' \in C'(i), j \in I, j \xrightarrow{\iota} i$ and $a \in A(j)$:
        \begin{align*}
            \phi \big( gh(f\times\id{A})_i (c') \big)_j (\iota, a)
            &= \big( gh(f\times\id{A}) \big)_j (C'(\iota)(c'), a)
            \tag{by \eqref{eq:def_phi}} \\
            &= g_j h_j \big( f_j \cdot C'(\iota) (c'), a \big)
            \\
            &= g_j h_j \big( C(\iota)(f_i(c')), a \big)
            \tag{$f$ nat.} \\
            &= g_j \Big( \phi(h)_i \big(f_i(c') \big) \Big)_j (\iota, a)
            \tag{by \eqref{eq:def_phi}} \\
            &= \Big(  g \cdot \phi(h)_i \big( f_i(c') \big) \Big)_j (\iota, a)
            \\
            &= g^{(A,\alpha)}_i \Big( \phi(h)_i \big( f_i(c') \big) \Big)_j (\iota, a)
            \tag{by \eqref{eq:exponent_def_morphisms}} \\
            &= \big( (g^{(A,\alpha)} \cdot \phi(h) \cdot f)_i (c') \big)_j (\iota, a)
        \end{align*}
    
        \item
        We verify that $\phi(h)_i(c)$ is natural in $j$: given $k \xrightarrow{\kappa} j$
        \[\begin{tikzcd}[scale cd=.75, row sep = .5em]
        	{y(i)(j) \times A(j)} &&&& {B(j)} \\
        	& {(j \xrightarrow{\iota}i,a)} && {h_j\big(C(\iota)(c),a\big)} \\
        	\\
        	\\
        	&&& {B(\kappa)\Big( h_j \big(C(\iota)(c),a\big)  \Big)} \\
        	& {\left(k \xrightarrow{\kappa} j \xrightarrow{\iota}i, A(\kappa)(a)\right)} && {h_k \big( C(\iota\kappa)(c), A(\kappa)(a) \big)} \\
        	{y(i)(k) \times A(k)} &&&& {B(k)}
        	\arrow["{y(i)(\kappa) \times A(\kappa)}"', from=1-1, to=7-1]
        	\arrow["{B(\kappa)}", from=1-5, to=7-5]
        	\arrow["{\big(\phi(h)_i(c)\big)_k}"', from=7-1, to=7-5]
        	\arrow["{\big(\phi(h)_i(c)\big)_j}", from=1-1, to=1-5]
        	\arrow[maps to, from=2-2, to=6-2]
        	\arrow[maps to, from=2-2, to=2-4]
        	\arrow[maps to, from=2-4, to=5-4]
        	\arrow[maps to, from=6-2, to=6-4]
        	\arrow["\text{nat.~} h",equal, from=5-4, to=6-4]
        \end{tikzcd}\]
        
        \item
        Since $h:(C\times A, \gamma \meet \alpha) \to (B,\beta)$, we have $\gamma \meet \alpha \leq \beta h$.
        \begin{align*}
            &\Rightarrow
            \forall i \in I, \forall c \in C(i), \forall a \in A(i): \gamma_i(c) \meet \alpha_i(a) \leq \beta_i h_i (c,a)
            \\
            &\stackrel{\smash{\eqref{eq:heyting_algebra_cartesian_closed}}}{\Rightarrow}
            \forall i \in I, \forall c \in C(i), \forall a \in A(i): \gamma_i(c) \leq \big( \alpha_i(a) \Rightarrow \beta_i h_i (c,a) \big)
            \\
            &\stackrel{\smash{\eqref{eq:def_phi}}}{\Rightarrow}
            \forall i \in I, \forall c \in C(i), \forall a \in A(i): \gamma_i(c) \leq \big( \alpha_i(a) \Rightarrow \beta_i (\phi(h)_i(c))_i (\id{i},a) \big) 
            \\
            &\Rightarrow
            \forall i \in I, \forall c \in C(i): \gamma_i(c) \leq \textstyle{\bigmeet}_{a \in A(i)} \big( \alpha_i(a) \Rightarrow \beta_i (\phi(h)_i(c))_i (\id{i},a) \big)
            \\
            &\stackrel{\smash{\eqref{eq:exponent_def_morphisms}}}{\Rightarrow}
            \forall i \in I, \forall c \in C(i): \gamma_i(c) \leq \theta_i \big( \phi(h)_i(c) \big)
            \\
            &\Rightarrow
            \gamma \leq \theta \cdot \phi(h)
            \qedhere
        \end{align*}
    \end{enumerate}
\end{proof}

\begin{definition}%[$\psi$] 
    \label{def:psi}
    Let $\psi$ be defined on $i \in I$ as follows.
    \[
        \biggl(
        \begin{tikzcd}[ampersand replacement=\&, row sep = 0.15em, column sep = tiny]
        	C \&\& \PresheafHatDefault(y(-) \times A, B) \\
        	\& {\color{gray}\labels}
        	\arrow["k", from=1-1, to=1-3]
        	\arrow["\gamma"', color=gray, from=1-1, to=2-2]
        	\arrow["\theta", color=gray, from=1-3, to=2-2]
        \end{tikzcd}
        \biggr)
        \quad
        \xmapsto{\psi}
        \quad
        \biggl(
        \begin{tikzcd}[ampersand replacement=\&, row sep = 0.15em, column sep = tiny]
        	{C \times A} \&\& B \\
        	\& {\color{gray}\labels}
        	\arrow["\psi(k)", from=1-1, to=1-3]
        	\arrow["{\gamma \meet \alpha}"', color=gray, from=1-1, to=2-2]
        	\arrow["\beta", color=gray, from=1-3, to=2-2]
        \end{tikzcd}
        \biggr)
    \]
    For $i \in I, c \in C(i)$ and $a \in A(i)$, let
    \begin{equation}
        \psi(k)_i(c, a) \defeq k_i(c)_i (\id{i},a).
        \label{eq:def_psi}
    \end{equation}
\end{definition}

\begin{lemma}
    \label{lem:psi_well_defined}
    The natural transformation $\psi$ from \cref{def:psi} is well-defined:
    \begin{enumerate}
        \item
        $\psi$ is natural in $C$ and $B$
        
        \item
        \label{item:psi_well_defined_1}
        $\psi(k)$ is natural in $i \in I$, and
        % \customqed

        \item
        \label{item:psi_well_defined_2}
        $\gamma \meet \alpha \leq \beta \cdot \psi(k)$.
    \end{enumerate}
\end{lemma}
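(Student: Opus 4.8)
The plan is to establish the three conditions by the same elementary diagram chases already used for \cref{lem:phi_well_defined}, simply read in the dual direction, relying throughout on the definition \eqref{eq:def_psi} of $\psi$ and on the description of the exponential from \cref{def:exponent_functor}.

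For part~(1), I would fix $f : (C',\gamma') \to (C,\gamma)$ and $g : (B,\beta) \to (B',\beta')$ and chase a morphism $k : (C,\gamma) \to (B,\beta)^{(A,\alpha)}$ around the square whose right vertical is $g \circ - \circ (f \times \id{A})$ and whose left vertical is $g^{(A,\alpha)} \circ - \circ f$. Evaluating both composites at $i \in I$, $c' \in C'(i)$ and $a \in A(i)$, they unfold via \eqref{eq:def_psi} to expressions built from $g_i$, $k_i$, $f_i$ and a component at $i$; naturality of $f$ together with the description of $g^{(A,\alpha)}$ as postcomposition by $g$ from \eqref{eq:exponent_def_morphisms} makes the two sides coincide.

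For part~(2), fix $k$ and $\iota : j \to i$ in $I$ and verify $\psi(k)_j \circ (C(\iota) \times A(\iota)) = B(\iota) \circ \psi(k)_i$ on a pair $(c,a) \in C(i) \times A(i)$. By \eqref{eq:def_psi} the left-hand side is $k_j(C(\iota)(c))_j(\id{j}, A(\iota)(a))$. Naturality of $k$ as a natural transformation into the exponential carrier --- whose action on $\iota$ is precomposition by $y(\iota) \times \id{A}$, see \eqref{eq:exponent_def_objects_iota} --- rewrites $k_j(C(\iota)(c))$ as $k_i(c) \circ (y(\iota) \times \id{A})$, and since $y(\iota)_j(\id{j}) = \iota$ the expression reduces to $k_i(c)_j(\iota, A(\iota)(a))$. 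Then naturality of the natural transformation $k_i(c) : y(i) \times A \to B$ at $\iota$, using that $(y(i)(\iota) \times A(\iota))(\id{i}, a) = (\iota, A(\iota)(a))$, turns this into $B(\iota)(k_i(c)_i(\id{i},a)) = B(\iota)(\psi(k)_i(c,a))$, as required. I expect this step to be the only one with any content: keeping the two nested uses of naturality straight --- one for $k$ itself, one for its component $k_i(c)$ --- is the main (mild) obstacle, everything else being bookkeeping.

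For part~(3), I would run the Heyting-algebra computation of the $\phi$ case in reverse. From $\gamma \leq \theta \cdot k$ and \eqref{eq:exponent_def_objects_i} one gets, for every $i \in I$, $c \in C(i)$ and $a \in A(i)$, that $\gamma_i(c) \leq \big( \alpha_i(a) \Rightarrow \beta_i k_i(c)_i(\id{i},a) \big)$; meeting both sides with $\alpha_i(a)$ and applying \eqref{eq:heyting_algebra_modus_ponens} yields $\gamma_i(c) \meet \alpha_i(a) \leq \beta_i k_i(c)_i(\id{i},a) = \beta_i \psi(k)_i(c,a)$, i.e.\ $\gamma \meet \alpha \leq \beta \cdot \psi(k)$. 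None of the three parts presents a genuine difficulty. Once this lemma and \cref{lem:phi_well_defined} are available, it remains only to check by a direct computation on \eqref{eq:def_phi} and \eqref{eq:def_psi} that $\phi$ and $\psi$ are mutually inverse and natural in $C$ and $B$, which establishes \cref{lem:fuzzy_presheaf_exponential_adjunction}.
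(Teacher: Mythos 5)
Your proposal is correct and, for parts~(2) and~(3), runs exactly the computation in the paper's proof: the same two nested naturality arguments (for $k$ and for $k_i(c)$, via \eqref{eq:exponent_def_objects_iota}) in part~(2), and the same Heyting-algebra manipulation in part~(3), where your use of modus ponens \eqref{eq:heyting_algebra_modus_ponens} is interchangeable with the paper's direct appeal to the adjunction \eqref{eq:heyting_algebra_cartesian_closed}. The only divergence is part~(1), which the paper dispatches by citing that $\psi$ is the inverse of the already-natural $\phi$ (\cref{lem:phi_psi_are_inverses_of_each_other}), whereas you do a direct and equally valid chase; your version is self-contained and avoids the forward reference, at the cost of a small amount of extra bookkeeping.
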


\begin{proof}
    \begin{enumerate}
        \item
        This follows from $\phi$ being natural in $C$ and $A$ and $\psi$ being the inverse of $\phi$ as proved below in \cref{lem:phi_psi_are_inverses_of_each_other}.
        
        \item 
        We verify that $\psi(k)$ is natural in $i$: given $j \xrightarrow{\iota} j$
        \[\begin{tikzcd}[scale cd=.75, row sep = .5em]
        	{C(i) \times A(i)} &&&& {B(i)} \\
        	& {(c,a)} && {k_i(c)_i(\id{i},a)} \\
        	\\
        	\\
        	&&& {B(\iota) \big( k_i(c)_i (\id{i},a) \big)} \\
        	& {\big( C(\iota)(c) , A(\iota)(a) \big)} && {k_j\big( C(\iota)(c) \big)_j \big(\id{j}, A(\iota)(a) \big)} \\
        	{C(j) \times A(j)} &&&& {B(j)}
        	\arrow["{C(\iota) \times A(\iota)}"', from=1-1, to=7-1]
        	\arrow["{B(\iota)}", from=1-5, to=7-5]
        	\arrow["{\psi(k)_i}"', from=7-1, to=7-5]
        	\arrow["{\psi(k)_j}", from=1-1, to=1-5]
        	\arrow[maps to, from=2-2, to=6-2]
        	\arrow[maps to, from=2-2, to=2-4]
        	\arrow[maps to, from=2-4, to=5-4]
        	\arrow[maps to, from=6-2, to=6-4]
        	\arrow["(\ast)",equal, from=5-4, to=6-4]
        \end{tikzcd}\]
        where $(\ast)$ is the following reasoning:
        \begin{align*}
            B(\iota)\big( k_i(c)_i (\id{i},a) \big) 
            &= k_i(c)_j \cdot \big( y(i)(\iota) \times A(\iota) \big) (\id{i},a)
            \tag{$k_i(c)$ nat.} \\
            &= k_i(c)_j \big( \iota, A(\iota)(a) \big)
            \tag{Def.~\ref{def:contravariant_hom_functor}} \\
            &= \big( k_i(c) \cdot (y(\iota) \times \id{A}) \big)_j (\id{j}, A(\iota)(a))
            \tag{Def.~\ref{def:yoneda_embedding}} \\
            &= \big( \PresheafHatDefault(y(\iota) \times A,B) \cdot k_i(c) \big)_j (\id{j}, A(\iota)(a)) 
            \tag{by \eqref{eq:exponent_def_objects_iota}} \\
            &= k_j \big( C(\iota)(c) \big)_j \big(\id{j}, A(\iota)(a) \big)
            \tag{$k$ nat.}
        \end{align*}

        \item
        Since $k:(C,\gamma) \to ( \PresheafHatDefault(y(-) \times A,B) , \theta )$, we have $\gamma \leq \theta k$.
        \begin{align*}
            &\Rightarrow
            \forall i \in I, \forall c \in C(i): \gamma_i(c) \leq \theta_i (k_i(c))
            \\
            &\stackrel{\smash{\eqref{eq:exponent_def_objects_i}}}{\Rightarrow}
            \forall i \in I, \forall c \in C(i): \gamma_i(c) \leq \textstyle{\bigmeet}_{a \in A(i)} \left( \alpha_i(a) \Rightarrow \beta_i \cdot k_i(c)_i (\id{i},a) \right)
            \\
            &\Rightarrow
            \forall i \in I, \forall c \in C(i), \forall a \in A(i): \gamma_i (c) \leq \big( \alpha_i (a) \Rightarrow \beta_i \cdot k_i(c)_i (\id{i},a) \big)
            \\
            &\stackrel{\smash{\eqref{eq:heyting_algebra_cartesian_closed}}}{\Rightarrow}
            \forall i \in I, \forall c \in C(i), \forall a \in A(i): \gamma_i (c) \meet \alpha_i(a) \leq \beta_i \cdot k_i(c)_i (\id{i},a)
            \\
            &\stackrel{\smash{\eqref{eq:def_psi}}}{\Rightarrow}
            \forall i \in I, \forall c \in C(i), \forall a \in A(i): \gamma_i (c) \meet \alpha_i(a) \leq \beta_i \cdot \psi(k)_i (c,a)
            \\
            &\Rightarrow
            \gamma \meet \alpha \leq \beta \cdot \psi(k)
            \qedhere
        \end{align*}
    \end{enumerate}
\end{proof}

\begin{lemma}
    \label{lem:phi_psi_are_inverses_of_each_other}
    Both $\phi$ and $\psi$ are inverses of each other.
    % \customqed
\end{lemma}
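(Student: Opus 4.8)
The plan is to verify directly, at the level of the underlying functions, that the two composites $\psi \circ \phi$ and $\phi \circ \psi$ are identities. By \cref{lem:phi_well_defined,lem:psi_well_defined} we already know that $\phi$ and $\psi$ are well-defined morphisms of fuzzy presheaves: they are natural in $C$ and $B$ and they respect the membership functions. Consequently nothing further about the fuzzy structure is needed here, and it suffices to show that the underlying set maps $\phi$ and $\psi$ are mutually inverse bijections between the relevant hom-sets.

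For $\psi \circ \phi = \id$, I would fix $h : (C,\gamma) \times (A,\alpha) \to (B,\beta)$, an object $i \in I$, and elements $c \in C(i)$, $a \in A(i)$, and simply unfold the definitions \eqref{eq:def_psi} and \eqref{eq:def_phi}:
\[
    \psi(\phi(h))_i(c,a) = \phi(h)_i(c)_i(\id{i},a) = h_i\big(C(\id{i})(c),a\big) = h_i(c,a),
\]
the last equality being functoriality of $C$ (namely $C(\id{i}) = \id{C(i)}$). Hence $\psi(\phi(h)) = h$. This direction is immediate once the definitions are written out.

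For $\phi \circ \psi = \id$, I would fix $k : (C,\gamma) \to (B,\beta)^{(A,\alpha)}$, objects $i,j \in I$, a morphism $\iota : j \to i$, $c \in C(i)$ and $a \in A(j)$, and compute, again unfolding \eqref{eq:def_phi} and then \eqref{eq:def_psi},
\[
    \phi(\psi(k))_i(c)_j(\iota,a) = \psi(k)_j\big(C(\iota)(c),a\big) = k_j\big(C(\iota)(c)\big)_j(\id{j},a).
\]
Now naturality of $k$ at $\iota$ gives $k_j \circ C(\iota) = \PresheafHatDefault(y(\iota)\times A, B) \circ k_i$, and by \eqref{eq:exponent_def_objects_iota} the functor action on the right is precomposition by $y(\iota)\times\id{A}$, so $k_j(C(\iota)(c)) = k_i(c) \circ (y(\iota)\times\id{A})$. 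Evaluating its $j$-component at $(\id{j},a)$ and using $y(\iota)_j(\id{j}) = \iota\circ\id{j} = \iota$ (\cref{def:yoneda_embedding}) yields exactly $k_i(c)_j(\iota,a)$. Hence $\phi(\psi(k))_i(c)_j(\iota,a) = k_i(c)_j(\iota,a)$, i.e.\ $\phi(\psi(k)) = k$.

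The step that demands the most care is the $\phi\circ\psi$ direction: one has to juggle three nested indices — the outer index $i$ of $k$, the Yoneda/presheaf index $j$ of the component $k_i(c)$, and the inner evaluation argument $(\id{j},a)$ — and invoke naturality of $k$ at precisely the right level, together with the explicit description of how the exponential presheaf $(B,\beta)^{(A,\alpha)}$ acts on morphisms. Once both identities are established, the claim follows; as a byproduct this also completes item~(1) of \cref{lem:psi_well_defined}, since $\psi = \phi^{-1}$ and $\phi$ is natural in $C$ and $B$.
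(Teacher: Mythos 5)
Your proposal is correct and follows essentially the same route as the paper's proof: both directions are verified by direct unfolding of \eqref{eq:def_phi} and \eqref{eq:def_psi}, with the $\phi\circ\psi$ direction hinging on naturality of $k$ together with the description \eqref{eq:exponent_def_objects_iota} of the exponential's action as precomposition by $y(\iota)\times\id{A}$. Your closing remark that item~(1) of \cref{lem:psi_well_defined} follows from $\psi=\phi^{-1}$ also matches how the paper organises that dependency.
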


\begin{proof}%[Proof of \cref{lem:phi_psi_are_inverses_of_each_other}]
    We first verify that $\psi \phi = \id{}$.
    Take $(C \times A, \gamma \meet \alpha) \xrightarrow{h} (B,\beta)$.
    For $i \in I, c \in C(i)$ and $a \in A(i)$, we have
    \begin{align*}
        \psi(\phi(h))_i (c,a)
        &= \big( \phi(h)_i(c) \big)_i (\id{i},a)
        \tag{by \eqref{eq:def_psi}} \\
        &= h_i \big( C(\id{i})(c) , a \big)
        \tag{by \eqref{eq:def_phi}} \\
        &= h_i (c,a)
    \end{align*}
    
    We now verify that $\phi \psi = \id{}$.
    Take $(C,\gamma) \xrightarrow{k} (\PresheafHatDefault(y(-) \times A, B) , \theta)$.
    For $i \in I, c \in C(i), j \in I, j \xrightarrow{\iota} i$ and $a \in A(j)$, we have
    \begin{align*}
        \big( \phi(\psi(k))_i (c) \big)_j (\iota,a)
        &= \psi(k)_j \big( C(\iota)(c) , a \big)
        \tag{by \eqref{eq:def_phi}} \\
        &= k_j \big( C(\iota)(c) \big)_j (\id{j}, a)
        \tag{by \eqref{eq:def_psi}} \\
        &= \big( k_j \cdot C(\iota) (c) \big)_j (\id{j}, a)
        \\
        &= \big( \PresheafHatDefault( y(\iota) \times A , B ) \cdot k_i (c) \big)_j (\id{j}, a)
        \tag{$k$ nat.} \\
        &= \big( k_i(c) \cdot (y(\iota) \times \id{A}) \big)_j (\id{j}, a)
        \tag{by \eqref{eq:exponent_def_objects_iota}} \\
        &= k_i (c)_j \cdot (y(\iota) \times \id{A})_j (\id{j},a)
        \\
        &= k_i (c)_j (\iota, a)
        \tag{Def.~\ref{def:contravariant_hom_functor}}
    \end{align*}
\end{proof}

\subsection{Proof of \cref{lem:equivalence_of_categories}}

To prove \cref{lem:equivalence_of_categories}, we construct a functor in each direction and then prove that they are quasi-inverses.

\begin{definition}
    [Functor F]
    \label{def:F}
    We define the mapping
    \[
        F:\quot{\FuzzyPresheafDefault}{(D,\delta)}
        \to
        \FuzzyPresheaf{\elements{D}}{\tilde{\labels}}
    \]
    by $F(A,\alpha,p) \defeq (\tilde{A},\tilde{\alpha})$, where $\tilde{A}(i,d) \defeq p_i\inv(d)$,
    \begin{align}
        \tilde{A} \big(\iota: (j,e) \to (i,d) \big)
        \defeq
        A(\iota)\restrict{p_i\inv(d)} : p_i\inv(d) &\to p_j\inv(e), 
        \nonumber \\
        \tilde{\alpha}_{(i,d)} 
        \defeq 
        \alpha_i\restrict{p_i\inv(d)} : p_i \inv(d) &\to \labels(i)_{\leq \delta_i(d)},
        \label{eq:def_tilde_alpha}
    \intertext{and $F(f:(A,\alpha,p) \to (B,\beta,q)) \defeq \tilde{f} : (\tilde{A},\tilde{\alpha}) \to (\tilde{B},\tilde{\beta})$, where}
        \tilde{f}_{(i,d)} \defeq f_i\restrict{p_i\inv(d)} : p_i \inv(d) &\to q_i \inv(d).
        \label{eq:def_tilde_f}
    \end{align}

    %%%% Alternative writing style:
    % On objects, let $F(A,\alpha,p) \defeq (\tilde{A},\tilde{\alpha})$, where
    % \begin{itemize}[$\bullet$]
    %     \item 
    %     $\tilde{A} : \opcat{\elements{D}} \to \Set$ is defined
    %     \begin{itemize}[-]
    %         \item
    %         on $(i,d \in D(i))$ as $p_i\inv(d)$, and 

    %         \item
    %         on $\iota : (j,e) \to (i,d)$ as $A(\iota) : p_i\inv(d) \to p_j\inv(e).$
    %     \end{itemize}

    %     \item
    %     $\tilde{\alpha}_{(i,d)}$ is defined as
    %     \begin{equation} \label{eq:def_tilde_alpha}
    %         \tilde{A}(i,d) = p_i \inv(d) \xrightarrow{\alpha_i} \tilde{\labels}(i,d) = \labels(i)_{\leq \delta_i(d)}.
    %     \end{equation} 
    % \end{itemize}

    %%%% Alternative writing style 2:
    % \begin{equation} \label{eq:def_tilde_A}
    %     \left\{
    %     \begin{aligned}
    %         \tilde{A} : \opcat{\elements{D}} \quad &\longrightarrow \quad \Set \\
    %         (i, d \in D(i)) \quad &\longmapsto \quad p_i\inv(d), \\
    %         (j,e) \xrightarrow{\iota} (i,d) \quad &\longmapsto \quad p_i\inv(d) \xrightarrow{A(\iota)} p_j\inv(e).
    %     \end{aligned}
    %     \right.
    % \end{equation}
    % \begin{equation} \label{eq:def_tilde_alpha}
    %     \tilde{A}(i,d) = p_i \inv(d) \xrightarrow{\tilde{\alpha}_{(i,d)} \defeq \alpha_i} \tilde{\labels}(i,d) = \labels(i)_{\leq \delta_i(d)}.
    % \end{equation}
    % On morphisms, let $F(f) \defeq \tilde{f}$, where 
    % \begin{equation} \label{eq:def_tilde_f}
    %     \tilde{A}(i,d) = p_i \inv(d) \xrightarrow{\tilde{f}_{(i,d)} \defeq f_i} \tilde{B}(i,d) = q_i \inv(d).
    % \end{equation}
\end{definition}

\begin{lemma}
    \label{lem:F_is_well_defined}
    $F$ is a well-defined functor.
    % \customqed
\end{lemma}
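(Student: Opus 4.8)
The plan is to check, clause by clause, that $F$ lands in $\FuzzyPresheaf{\elements{D}}{\tilde{\labels}}$ on objects and on morphisms, and then that it is functorial. Every verification is pointwise and, after restricting to fibres $p_i\inv(d)$, reduces to a property already available: naturality of the relevant morphism, the defining equation of morphisms in $\elements{D}$, or the $\leq$-commutation inequalities built into the slice category.

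First I would show that $(\tilde{A},\tilde{\alpha})$ is an object of $\FuzzyPresheaf{\elements{D}}{\tilde{\labels}}$. The only non-routine point is that for a morphism $\iota : (j,e) \to (i,d)$ of $\elements{D}$ the map $A(\iota)$ genuinely restricts to a map $p_i\inv(d) \to p_j\inv(e)$: for $a \in p_i\inv(d)$, naturality of $p : A \to D$ gives $p_j(A(\iota)(a)) = D(\iota)(p_i(a)) = D(\iota)(d)$, which equals $e$ precisely because $\iota$ is a morphism of $\elements{D}$. Functoriality of $\tilde{A}$ (preservation of identities, and the contravariant law $\tilde{A}(\iota\kappa) = \tilde{A}(\kappa)\circ\tilde{A}(\iota)$) is then immediate, since the restriction of a composite is the composite of restrictions and $A : \opcat{I} \to \Set$ is already a functor. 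That $\tilde{\alpha}$ is a legitimate membership function amounts to checking that $\alpha_i(a) \leq \delta_i(d)$ whenever $p_i(a) = d$, so that $\tilde{\alpha}_{(i,d)}$ really takes values in $\labels(i)_{\leq \delta_i(d)}$; this is exactly the inequality $\alpha \leq \delta p$ witnessing that $p$ is a fuzzy presheaf morphism. No compatibility of $\tilde{\alpha}$ with the action of $\tilde{A}$ is needed, since membership functions of fuzzy presheaves carry no naturality requirement.

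Next I would show that $\tilde{f}$ is a morphism $(\tilde{A},\tilde{\alpha}) \to (\tilde{B},\tilde{\beta})$. The codomain of $\tilde{f}_{(i,d)}$ is correct because $p = qf$: if $p_i(a) = d$ then $q_i(f_i(a)) = p_i(a) = d$, so $f_i(a) \in q_i\inv(d)$. Naturality of $\tilde{f}$ over $\elements{D}$ is the restriction to fibres of the naturality squares of $f : A \to B$, and the membership inequality $\tilde{\alpha}_{(i,d)} \leq \tilde{\beta}_{(i,d)}\tilde{f}_{(i,d)}$ is the restriction of $\alpha \leq \beta f$ (the order on $\labels(i)_{\leq \delta_i(d)}$ being inherited from $\labels(i)$, so the inequality descends). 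Finally, $F$ preserves identities because $(\id{A})_i$ restricts to $\id{p_i\inv(d)}$ on each fibre, and preserves composition because $(gf)_i\restrict{p_i\inv(d)} = g_i\restrict{q_i\inv(d)} \circ f_i\restrict{p_i\inv(d)}$ for composable slice morphisms.

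I do not expect a real obstacle: the single place calling for a moment's care is the well-definedness of the restricted action $A(\iota)\restrict{p_i\inv(d)}$ as a map into $p_j\inv(e)$, which needs both the equation $D(\iota)(d) = e$ from the definition of $\elements{D}$ and the naturality of $p$; everything else is transport of structure along fibrewise restriction.
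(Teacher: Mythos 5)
Your proposal is correct and follows essentially the same route as the paper: a pointwise, fibrewise verification that $\tilde{f}_{(i,d)}$ lands in $q_i\inv(d)$ via $p = qf$, that naturality and the membership inequality $\tilde{\alpha} \leq \tilde{\beta}\tilde{f}$ descend by restriction from those of $f$, and that functoriality is immediate. You are in fact slightly more thorough than the paper's proof, which concentrates on the morphism part and leaves implicit the object-level checks you spell out (that $A(\iota)$ restricts to a map $p_i\inv(d) \to p_j\inv(e)$ using naturality of $p$ and $D(\iota)(d)=e$, and that $\tilde{\alpha}_{(i,d)}$ takes values in $\labels(i)_{\leq \delta_i(d)}$ by $\alpha \leq \delta p$).
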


\begin{proof}%[Proof of \cref{lem:F_is_well_defined}]
    Take $i \in I, d \in D(i)$ and $a \in p_i \inv (d)$.
    For $\tilde{f}_{(i,d)}$ to be well-defined, we first verify that
    \begin{equation} \label{eq:f_i_sends_preimage_on_preimage}
        f_i(p_i \inv(d)) \subseteq q_i \inv (d).
    \end{equation}
    This holds since
    \begin{align*}
        a \in p_i \inv (d) 
        &\iff
        p_i (a) = d \\
        &\iff
        q_i f_i (a) = d 
        \tag{$p=qf$}\\
        &\iff
        f_i(a) \in q_i \inv (d).
    \end{align*}
    Second, the naturality of $\tilde{f}$ result from the naturality of $f$, since the restrictions do not change anything.
    We also check that $\tilde{\alpha} \leq \tilde{\beta} \tilde{f}$:
    \begin{align*}
        \tilde{\alpha}_{(i,d)} (a)
        &= \alpha_i \restrict{\scaleto{p_i^{-1}  (d)}{8pt}} (a)
        \tag{def.~$\tilde{\alpha}$ \eqref{eq:def_tilde_alpha}} \\
        &\leq (\beta f)_i \restrict{\scaleto{p_i \inv (d)}{8pt}} (a)
        \tag{def.~$f$} \\
        &= \beta_i \restrict{\scaleto{q_i \inv (d)}{8pt}} f_i \restrict{\scaleto{p_i \inv (d)}{8pt}} (a)
        \tag{by \eqref{eq:f_i_sends_preimage_on_preimage}} \\
        &= \tilde{\beta}_{(i,d)} \tilde{f}_{(i,d)} (a).
        \tag{defs.~$\tilde{\beta}$ \eqref{eq:def_tilde_alpha} and $\tilde{f}$ \eqref{eq:def_tilde_f}}
    \end{align*}
    Lastly, the functoriality of $F$ is clear from its definition, and the proof is complete.
\end{proof}

\begin{definition}[Functor G]
    \label{def:G}
    We define the mapping
    \[
        G:
        \FuzzyPresheaf{\elements{D}}{\tilde{\labels}}
        \to
        \quot{\FuzzyPresheafDefault}{(D,\delta)}
    \]
    by $G(\tilde{A}, \tilde{\alpha}) \defeq (A,\alpha, p)$, where $A(i) \defeq \smash{\bigsqcup_{d \in D(i)}} \tilde{A}(i,d)$,
    \[
        A(\iota : j \to i) \defeq D(\iota) \times \tilde{A}(\iota) : \bigsqcup_{d \in D(i)} \tilde{A}(i,d) \to \bigsqcup_{e \in D(j)} \tilde{A}(j,e)
    \]
    \begin{equation}
        \begin{tikzcd}[row sep = tiny, ampersand replacement=\&]
            {A(i) = \bigsqcup_{d \in D(i)} \tilde{A}(i,d)}
                \ar[rd, color=gray, "{ \alpha_i \defeq (d,\tilde{a}) \mapsto \tilde{\alpha}_{(i,d)} (\tilde{a})}"']
                \ar[rr, "p_i \defeq \pi_1"]
            \& \& D(i)
                \ar[dl, color=gray, "\delta_i"]
            \\
            \& {\color{gray}\labels(i)} \&
        \end{tikzcd}
        \label{eq:def_alpha_p} %\\
    \end{equation}
    and $G(\tilde{f}) \defeq f$, where
    \begin{equation}
        f_i : 
        {\displaystyle \bigsqcup_{d \in D(i)} \tilde{A}(i,d)}
        \to 
        {\displaystyle \bigsqcup_{d \in D(i)} \tilde{B}(i,d)}
        :
        (d , \tilde{a})
        \mapsto
        (d, \tilde{f}_{(i,d)}(\tilde{a})).
        \label{eq:def_f}
    \end{equation}

    %%%%% Alternative definition of $f_i$
    % {f_i \defeq \bigsqcup_{d \in D(i)} (\inl_d \cdot \tilde{f}_{(d,i)}) }

    %%%%% Alternative writing of the definition of $G$$
    % \begin{equation} \label{eq:def_A}
    %     \left\{
    %     \begin{aligned}
    %         A : \opcat{I} \quad &\longrightarrow \quad \Set \\
    %         i \quad &\longmapsto \quad \bigsqcup_{d \in D(i)} \tilde{A}(i,d), \\
    %         j \xrightarrow{\iota} i \quad &\longmapsto \quad \bigsqcup_{d \in D(i)} \tilde{A}(i,d) \xrightarrow{D(\iota) \times \tilde{A}(\iota)} \bigsqcup_{e \in D(j)} \tilde{A}(j,e).
    %     \end{aligned}
    %     \right.
    % \end{equation}
    % \begin{equation} \label{eq:def_alpha_p}
    %     \begin{tikzcd}
    %         {A(i) = \bigsqcup_{d \in D(i)} \tilde{A}(i,d)}
    %             \ar[rd, color=gray, "{ \alpha_i \defeq (d,\tilde{a}) \mapsto \tilde{\alpha}_{(i,d)} (\tilde{a})}"']
    %             \ar[rr, "p_i \defeq \pi_1"]
    %         && D(i)
    %             \ar[dl, color=gray, "\delta_i"]
    %         \\
    %         & {\color{gray}\labels(i)} &
    %     \end{tikzcd}
    % \end{equation}
    % On morphisms, let $G(\tilde{f}) \defeq f$, where 
    % \begin{equation} \label{eq:def_f}
    %     \left\{
    %     \begin{tikzcd}[row sep = .2em, column sep = 8em]
    %         {\displaystyle \bigsqcup_{d \in D(i)} \tilde{A}(i,d)}
    %             \arrow[r, "{f_i \defeq \bigsqcup_{d \in D(i)} (\inl_d \cdot \tilde{f}_{(d,i)}) }"]
    %         & \tilde{B}(i,d) = q_i \inv(d) \\
    %         (d , \tilde{a}) 
    %             \arrow[r, mapsto]   
    %         &(d, \tilde{f}_{(i,d)}(\tilde{a}))
    %     \end{tikzcd}
    %     \right.
    % \end{equation}
\end{definition}

\begin{lemma}
    \label{lem:G_is_well_defined}
    $G$ is a well-defined functor.
    % \customqed
\end{lemma}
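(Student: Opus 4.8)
The plan is to verify, one ingredient at a time, the data produced by $G$: that $G(\tilde{A},\tilde{\alpha}) \defeq (A,\alpha,p)$ is an object of the slice category, that $G(\tilde{f}) \defeq f$ is a morphism of the slice category, and that $G$ respects identities and composition. Everything is a routine unwinding of the formulas in \cref{def:G}; the only place where one must be slightly attentive is the functoriality of $A$, where the notation $\tilde{A}(\iota)$ secretly refers to $\tilde{A}$ applied to the $\elements{D}$-morphism $\iota : (j, D(\iota)(d)) \to (i,d)$ sitting over the chosen summand $\tilde{A}(i,d)$.

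First I would show that the underlying functor $A$ is a well-defined presheaf on $I$. On identities, $A(\id{i}) = D(\id{i}) \times \tilde{A}(\id{i}) = \id{D(i)} \times \id{} = \id{A(i)}$. For composition, take $k \xrightarrow{\kappa} j \xrightarrow{\iota} i$ in $I$ and a summand $\tilde{A}(i,d)$ of $A(i)$. Since $D$ is a presheaf, $D(\iota\kappa) = D(\kappa) D(\iota)$, so the first coordinates of $A(\iota\kappa)$ and of $A(\kappa) A(\iota)$ agree; and since $\tilde{A}$ is a presheaf on $\elements{D}$ while $(k, D(\iota\kappa)(d)) \xrightarrow{\kappa} (j, D(\iota)(d)) \xrightarrow{\iota} (i,d)$ is a composable pair in $\elements{D}$, the second coordinates agree as well. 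Hence $A(\iota\kappa) = A(\kappa) A(\iota)$. The membership $\alpha_i$ is a genuine map $A(i) \to \labels(i)$ because $\tilde{\alpha}_{(i,d)}$ takes values in $\labels(i)_{\leq \delta_i(d)} \subseteq \labels(i)$.

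Next I would check that $p \defeq \pi_1$ is a fuzzy-presheaf morphism $(A,\alpha) \to (D,\delta)$, so that $(A,\alpha,p)$ is an object of $\quot{\FuzzyPresheafDefault}{(D,\delta)}$. Naturality is immediate since $A(\iota)$ acts as $D(\iota)$ on first coordinates, so both $D(\iota) \circ p_i$ and $p_j \circ A(\iota)$ send $(d,\tilde{a})$ to $D(\iota)(d)$; and $\alpha \leq \delta p$ holds pointwise because $\alpha_i(d,\tilde{a}) = \tilde{\alpha}_{(i,d)}(\tilde{a}) \leq \delta_i(d) = \delta_i(p_i(d,\tilde{a}))$, again using the codomain restriction on $\tilde{\alpha}_{(i,d)}$.

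Finally, for a morphism $\tilde{f} : (\tilde{A},\tilde{\alpha}) \to (\tilde{B},\tilde{\beta})$ I would verify that $f \defeq G(\tilde{f})$, given by $f_i(d,\tilde{a}) = (d, \tilde{f}_{(i,d)}(\tilde{a}))$, is a morphism $(A,\alpha,p) \to (B,\beta,q)$ in the slice. It preserves first coordinates, so $q_i f_i = p_i$ holds on the nose. Its naturality unwinds to the identity $\tilde{B}(\iota)\,\tilde{f}_{(i,d)} = \tilde{f}_{(j,D(\iota)(d))}\,\tilde{A}(\iota)$, which is exactly the naturality square of $\tilde{f}$ at the $\elements{D}$-morphism $\iota : (j, D(\iota)(d)) \to (i,d)$. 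The inequality $\alpha \leq \beta f$ follows from $\tilde{\alpha} \leq \tilde{\beta}\tilde{f}$, namely $\alpha_i(d,\tilde{a}) = \tilde{\alpha}_{(i,d)}(\tilde{a}) \leq \tilde{\beta}_{(i,d)}\tilde{f}_{(i,d)}(\tilde{a}) = \beta_i(f_i(d,\tilde{a}))$. Preservation of identities and composition by $G$ is then immediate from the formula for $f_i$, since $G$ acts componentwise in $d \in D(i)$. The main obstacle is thus purely organisational — keeping the $\elements{D}$-indices straight so that the presheaf and naturality axioms on $\elements{D}$ transfer cleanly to the corresponding axioms on $I$.
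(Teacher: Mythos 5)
Your proposal is correct and follows essentially the same route as the paper's proof: a routine pointwise verification that $p$ is natural with $\alpha \leq \delta p$, that $f$ satisfies $p = qf$, naturality, and $\alpha \leq \beta f$, with functoriality immediate from the componentwise formula. You are in fact slightly more thorough than the paper, which leaves the check that $A$ itself is a presheaf (identities and composition via the composable pairs in $\elements{D}$) implicit.
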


\begin{proof}%[Proof of \cref{lem:G_is_well_defined}]
    We first check that $p$ is natural in $i \in I$:
    \begin{center}\begin{tikzcd}[column sep=large]
        (d,\tilde{a})
            \ar[r, mapsto, "D(\iota) \times \tilde{A}(\iota)"]
            \ar[d, mapsto, "p_i = \pi_1"']
        & \big( D(\iota)(d), \tilde{A}(\iota)(\tilde{a}) \big)
            \ar[d, mapsto, "p_j = \pi_1"]
        \\
        d
            \ar[r, mapsto, "D(\iota)"']
        & D(\iota)(d)
    \end{tikzcd}\end{center}
    We then check that $\alpha \leq \delta p$.
    Take $i \in I$ and $(d,\tilde{a}) \in A(i)$.
    Since
    \[
        \alpha_i(d,\tilde{a}) \stackrel{\text{Def.~}\ref{def:G}}{=} \tilde{\alpha}_{(i,d)} (\tilde{a}) \in \tilde{\labels}(i,d) = \labels(i)_{\leq \delta_i(d)},
    \]
    it follows that, $\alpha_i(d,\tilde{a}) \leq \delta_i(d) = \delta_i p_i (d, \tilde{a})$, as desired.
    Moreover, $p = qf$ holds immediately from the definition.
    We also check that $\alpha \leq \beta f$: for $i \in I$ and $(d, \tilde{a}) \in A(i)$ we have
    \begin{align*}
        \alpha_i (d, \tilde{a})
        &= \tilde{\alpha}_{(i,d)} (\tilde{a})
        \tag{def.~$\alpha$ \eqref{eq:def_alpha_p}} \\
        &\leq \tilde{\beta}_{(i,d)} \big( \tilde{f}_{(i,d)} (\tilde{a}) \big)
        \tag{def.~$\tilde{f}$} \\
        &= \beta_i \big(d, \tilde{f}_{(i,d)} (\tilde{a}) \big)
        \tag{def.~$\beta$ \eqref{eq:def_alpha_p}} \\
        &= \beta_i f_i (d, \tilde{a}).
        \tag{def.~$f$ \eqref{eq:def_f}}
    \end{align*}
    Lastly, the functoriality of $G$ is clear from its definition, and the proof is complete.
\end{proof}

\begin{lemma}
    \label{lem:F_and_G_are_quasi_inverses}
    $F$ and $G$ from \cref{def:F,def:G} are quasi-inverses.
    % \customqed
\end{lemma}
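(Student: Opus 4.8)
The plan is to produce natural isomorphisms $\eta : GF \Rightarrow \id{}$ and $\epsilon : FG \Rightarrow \id{}$; in both cases the component will be, underlyingly, the canonical bijection between a set and the coproduct of its fibres, carried along the presheaf and fuzzy structure. The argument is pure bookkeeping, so below I only indicate the data and the checks.

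First I would treat $GF$. Fix a slice object $(A,\alpha,p)$ and write $GF(A,\alpha,p) = (A',\alpha',p')$. By \cref{def:F,def:G} we have $A'(i) = \bigsqcup_{d \in D(i)} p_i\inv(d)$, and since the fibres of $p_i$ partition $A(i)$, the map $\eta_{(A,\alpha,p),i} : \bigsqcup_{d \in D(i)} p_i\inv(d) \to A(i)$, $(d,a) \mapsto a$, is a bijection. I would then verify: (1) it is natural in $i \in I$, i.e.\ it intertwines $A'(\iota) = D(\iota)\times\tilde A(\iota)$ with $A(\iota)$, which is immediate because $\tilde A(\iota) = A(\iota)\restrict{p_i\inv(d)}$; (2) it is a morphism in $\quot{\FuzzyPresheafDefault}{(D,\delta)}$, i.e.\ $p \circ \eta_{(A,\alpha,p)} = p'$, since $p_i(\eta_i(d,a)) = p_i(a) = d = \pi_1(d,a)$; (3) it preserves membership strictly, since $\alpha_i(\eta_i(d,a)) = \alpha_i(a) = \tilde\alpha_{(i,d)}(a) = \alpha'_i(d,a)$ by \eqref{eq:def_tilde_alpha} and \eqref{eq:def_alpha_p}, where both sides lie in $\labels(i)_{\leq\delta_i(d)}$. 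Hence each $\eta_{(A,\alpha,p)}$ is an isomorphism (its inverse preserves membership by the same equalities). Finally, naturality of $\eta$ in $(A,\alpha,p)$: for $f : (A,\alpha,p) \to (B,\beta,q)$ the relevant square reduces, using \eqref{eq:def_tilde_f} and \eqref{eq:def_f}, to $f_i(\eta^A_i(d,a)) = f_i(a) = \eta^B_i(d, f_i(a)) = \eta^B_i\big(GF(f)_i(d,a)\big)$.

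Next I would treat $FG$. Fix $(\tilde A,\tilde\alpha) \in \FuzzyPresheaf{\elements{D}}{\tilde\labels}$ and write $G(\tilde A,\tilde\alpha) = (A,\alpha,p)$, so $p_i\inv(d) = \set{d}\times\tilde A(i,d)$. Thus $FG(\tilde A,\tilde\alpha)$ sends $(i,d)$ to $\set{d}\times\tilde A(i,d)$, and the second projection $\epsilon_{(\tilde A,\tilde\alpha),(i,d)} : \set{d}\times\tilde A(i,d) \to \tilde A(i,d)$, $(d,\tilde a)\mapsto\tilde a$, is a bijection. The checks mirror the previous case: naturality in $(i,d) \in \elements{D}$ follows from the definitions of the actions of $F$ and $G$ on morphisms; membership is preserved strictly because, by \eqref{eq:def_alpha_p}, the membership of $(d,\tilde a)$ in $A(i)$ is exactly $\tilde\alpha_{(i,d)}(\tilde a) = \tilde\alpha_{(i,d)}(\epsilon(d,\tilde a))$; and naturality of $\epsilon$ in $(\tilde A,\tilde\alpha)$ unwinds via \eqref{eq:def_f} and \eqref{eq:def_tilde_f} to $\tilde f_{(i,d)}(\epsilon^{\tilde A}(d,\tilde a)) = \tilde f_{(i,d)}(\tilde a) = \epsilon^{\tilde B}\big(FG(\tilde f)_{(i,d)}(d,\tilde a)\big)$.

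Having natural isomorphisms in both directions, $F$ and $G$ are quasi-inverses. I expect no genuine difficulty here; the one point requiring a little care is consistently tracking the sliced label posets $\labels(i)_{\leq\delta_i(d)}$, so that every membership comparison is literally an equality — this is exactly what makes $\eta$ and $\epsilon$ \emph{iso}morphisms of fuzzy presheaves rather than mere morphisms. The most tedious verifications are the naturality squares in $i$ (for $\eta$) and in $(i,d)$ (for $\epsilon$), but these are routine once the actions of $F$ and $G$ on morphisms are spelled out.
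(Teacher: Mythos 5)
Your proposal is correct and follows essentially the same route as the paper's proof: the paper likewise exhibits the two natural isomorphisms componentwise as the partitioning/de-partitioning bijection $(d,a)\mapsto a$ for $GF$ and the second projection $\set{d}\times\tilde A(i,d)\to\tilde A(i,d)$ for $FG$, and verifies the same compatibility with $p$, the same strict equality of membership values, and the same naturality squares in the object and in $i$ (resp.\ $(i,d)$). The only difference is that the paper writes out the naturality diagrams explicitly where you indicate them as routine checks.
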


\begin{proof}%[Proof of \cref{lem:F_and_G_are_quasi_inverses}]
    We want to construct two natural isomorphisms.
    \begin{enumerate}
        \item
        The first one is $\sigma : GF \xlongrightarrow{\isom} \id{ \nicefrac{\scriptstyle \FuzzyPresheafDefault}{\scriptstyle (D,\delta)}}$.
        Given $(A,\alpha,p)$, let $(\tilde{A},\tilde{\alpha}) \defeq F(A,\alpha)$ and $(A', \alpha', p') \defeq G(\tilde{A}, \tilde{\alpha})$.
        For $i \in I$, let $\sigma_{(A,\alpha,p),i}$ and $\sigma_{(A,\alpha,p),i}\inv$ be the partitioning and de-partitioning of $A(i)$:
        \begin{align*}
            A'(i) = \bigsqcup_{d \in D(i)} p_i \inv (d) 
            \qquad &\stackrel[\mathclap{\sigma_{(A,\alpha,p),i} \inv}]{\mathclap{\sigma_{(A,\alpha,p),i}}}{\rightleftarrows} \qquad 
            A(i), \\
            (d,a) \qquad &\mapsto \qquad a, \\
            (p_i(a), a) \qquad &\mapsfrom \qquad a.
        \end{align*}
        Observe that indeed $p' = p  \sigma_{(A,\alpha,p)}$: for $i \in I$ and $(d \in D(i), a \in p_i \inv (d))$, we have
        \[
            p_i ( \sigma_{(A,\alpha,p),i} (d,a)) = p_i (a) = d = \pi (d,a) = p_i' (d,a).
        \]
        We also have $\alpha' \leq \alpha \sigma_{(A,\alpha,p)}$, as we even have equality: for $i \in I$ and $(d \in D(i), a \in p_i \inv (d))$, we have:
        \[
            \alpha'_i (d,a) = \alpha_i (a) = \alpha_i \sigma_{(A,\alpha,p),i}(d,a).
        \]
        We have $\sigma$ natural in $(A,\alpha,p)$: given $f:(A,\alpha,p) \to (B,\beta,q)$
        \[
            \begin{tikzcd}
            	{GF(A,\alpha,p)} & {(A,\alpha,p)} \\
            	{GF(B,\beta,q)} & {(B,\beta,q)}
            	\arrow["{GF(f)}"', from=1-1, to=2-1]
            	\arrow["f", from=1-2, to=2-2]
            	\arrow["{\sigma_{(A,\alpha,p)}}", from=1-1, to=1-2]
            	\arrow["{\sigma_{(B,\beta,q)}}"', from=2-1, to=2-2]
            \end{tikzcd}
            \quad
            \stackrel[\text{with $i$}]{\text{instantiated}}{\Rightarrow}
            \quad
            \begin{tikzcd}
            	{\bigsqcup_{d \in D(i)} p_i\inv(d)} & {A(i)} \\
            	{\bigsqcup_{d \in D(i)} q_i\inv(d)} & {B(i)}
            	\arrow["{\bigsqcup_{d \in D(i)} \inl_d f_i}"', from=1-1, to=2-1]
            	\arrow["{f_i}", from=1-2, to=2-2]
            	\arrow["{\pi_2}", from=1-1, to=1-2]
            	\arrow["{\pi_2}"', from=2-1, to=2-2]
            	\arrow["\circlearrowleft", phantom, from=1-1, to=2-2]
            \end{tikzcd}
        \]
        and $\sigma_{(A,\alpha,p)}$ natural in $i$: given $j \iota i$
        \[\begin{tikzcd}
        	{\bigsqcup_{d \in D(i)} p_i\inv(d)} & {A(i)} \\
        	{\bigsqcup_{e \in D(j)} p_j\inv(e)} & {A(j)}
        	\arrow["{D(\iota) \times A(\iota)}"', from=1-1, to=2-1]
        	\arrow["{A(\iota)}", from=1-2, to=2-2]
        	\arrow["{\pi_2}", from=1-1, to=1-2]
        	\arrow["{\pi_2}"', from=2-1, to=2-2]
        	\arrow["\circlearrowleft", phantom, from=1-1, to=2-2]
        \end{tikzcd}\]
        
        \item
        The second one is $\tau : FG \xlongrightarrow{\isom} \id{(\Presheaf{\elements{D}}, \tilde{\labels})}$.
        Given $(\tilde{A},\tilde{\alpha})$, let $(A,\alpha,p) \defeq G(\tilde{A},\tilde{\alpha})$ and $(\tilde{A}', \tilde{\alpha}')$.
        For $(i,d) \in \opcat{\elements{D}}$:
        \begin{align*}
            (\tilde{A}', \tilde{\alpha}') (i,d)
            &=  \Big( p_i\inv(d) \xrightarrow{\alpha_i} \labels(i)_{\leq \delta_i(d)} \Big)
            \tag{def.~$F$} \\
            %&= \setvbar{(e,\tilde{a}) \in \bigsqcup_{e \in D(i)} \tilde{A}(i,e)}{p_i(e,\tilde{a}) = \pi_1 (e,\tilde{a}) = d} \xrightarrow{(e,\tilde{a}) \mapsto \tilde{\alpha}_{(i,e)} (\tilde{a})} \tilde{\labels}(i,d)
            % \tag{def.~$G$}\\
            &= \Big( \set{d} \times \tilde{A}(i,d) \xrightarrow{\tilde{\alpha}_{(i,d)} \cdot \pi_2} \tilde{\labels(i,d)} \Big)
            \tag{def.~$G$}
        \end{align*}
        So let 
        \[
            \begin{tikzcd}[column sep = large]
                \set{d} \times \tilde{A}(i,d)
                && \tilde{A}(i,d) \\
                & {\color{gray}\tilde{\labels}(i,d)}
                \arrow["\tau_{ (\tilde{A}, \tilde{\alpha}), (i,d) } \defeq \pi_2", shift left, from=1-1, to=1-3]
                \arrow["\tau_{ (\tilde{A}, \tilde{\alpha}), (i,d) }\inv \defeq \set{d} \times \id{ \tilde{A}(i,d) }", shift left, from=1-3, to=1-1]
                \arrow["\tilde{\alpha}_{(i,d)} \cdot \pi_2"', color=gray, from=1-1, to=2-2]
                \arrow["\tilde{\alpha}_{(i,d)}", color=gray, from=1-3, to=2-2]
            \end{tikzcd}
        \]
        We have $\tau$ natural in $(\tilde{A}, \tilde{\alpha}):$ given $\tilde{f} : (\tilde{A}, \tilde{\alpha}) \to (\tilde{B}, \tilde{\beta})$
        \[
            \begin{tikzcd}
            	{FG(\tilde{A},\tilde{\alpha})} & {(\tilde{A},\tilde{\alpha})} \\
            	{FG(\tilde{B},\tilde{\beta})} & {(\tilde{B},\tilde{\beta})}
            	\arrow["{FG(\tilde{f})}"', from=1-1, to=2-1]
            	\arrow["\tilde{f}", from=1-2, to=2-2]
            	\arrow["{\tau_{(\tilde{A},\tilde{\alpha})}}", from=1-1, to=1-2]
            	\arrow["{\tau_{(\tilde{B},\tilde{\beta})}}"', from=2-1, to=2-2]
            \end{tikzcd}
            \quad
            \stackrel[\text{with $(i,d)$}]{\text{instantiated}}{\Rightarrow}
            \quad
            \begin{tikzcd}
            	{\set{d} \times \tilde{A}(i,d)} & {\tilde{A}(i,d)} \\
            	{\set{d} \times \tilde{B}(i,d)} & {\tilde{B}(i,d)}
            	\arrow["{\id{\set{d}} \times \tilde{f}_{(i,d)}}"', from=1-1, to=2-1]
            	\arrow["{\tilde{f}_{(i,d)}}", from=1-2, to=2-2]
            	\arrow["{\pi_2}", from=1-1, to=1-2]
            	\arrow["{\pi_2}"', from=2-1, to=2-2]
            	\arrow["\circlearrowleft", phantom, from=1-1, to=2-2]
            \end{tikzcd}
        \]
        and $\tau_{(\tilde{A},\tilde{\alpha})}$ natural in $(i,d)$: given $(j,e) \xrightarrow{\iota} (i,d)$
        \[\begin{tikzcd}
        	{\set{d} \times \tilde{A}(i,d)} & {\tilde{A}(i,d)} \\
        	{\set{e} \times \tilde{A}(j,e)} & {\tilde{A}(j,e)}
        	\arrow["{D(\iota) \times \tilde{A}(\iota)}"', from=1-1, to=2-1]
        	\arrow["{\tilde{A}(\iota)}", from=1-2, to=2-2]
        	\arrow["{\pi_2}", from=1-1, to=1-2]
        	\arrow["{\pi_2}"', from=2-1, to=2-2]
        	\arrow["\circlearrowleft", phantom, from=1-1, to=2-2]
        \end{tikzcd}\]
    \end{enumerate}
\end{proof}

\subsection{Proof of \cref{sec:simple_fuzzy_graphs}}

\begin{proof}[Proof of \cref{lem:details_notnot_topology}]
    We do the reasoning for each category.
    \begin{itemize}
        \item 
        In $\Set$, given a set $A$ the morphism from the initial object $0_A : \emptyset \mono A$ is the empty function.
        It factors as $\overline{0_A} \cdot e: \emptyset \epimono \emptyset \regmono A$, hence $\overline{0_A}$ is still the empty function $\emptyset$.
        Using the formula for exponential objects in the slice category $\Set/A$ \cite[p.~81]{Stout_1993}, we have for $m : A_0 \mono A$:
        \begin{align*}
            \lnot m
            &= (\emptyset: \emptyset \mono A)^{(m : A_0 \mono A)}
            \tag{def.~$\lnot$} \\
            &= \pi_1 : \setvbar{(a,h)}{a \in A, h : m\inv(a) \to \emptyset\inv(a)} \mono A
            \tag{formula} \\
            &= \inclusion : \setvbar{a \in A}{m \inv (a) = \emptyset} \mono A.
            \tag{$h$ must be $\emptyset$}
        \end{align*}
        When $A_0 \subseteq A$, then $\lnot A_0 = \set{a \notin A} = A \setminus A_0$, as desired.

        \item
        In $\FuzzySetDefault$, the initial object is $(\emptyset, \emptyset \xrightarrow{\emptyset} \mathcal{L})$.
        Given a fuzzy set $(A,\alpha)$, the factorisation of the morphism $0_A$ from the initial object does not change anything, as in $\Set$, and thus $0_A = \overline{0_A} = (\emptyset, \emptyset) \mono (A, \alpha)$.
        By definition, 
        \[
            \lnot \big((A_0,\alpha_0) \stackrel{m}{\mono} (A,\alpha) \big) = \big( (\emptyset,\emptyset) \stackrel{\emptyset}{\mono} (A,\alpha) \big)^{\big((A_0,\alpha_0) \stackrel{m}{\mono} (A,\alpha)\big)}.
        \]
        Using the formula for exponential objects in the slice category $\FuzzySetDefault/(A,\alpha)$ \cite[p.~81]{Stout_1993}, we obtain the same carrier set as in $\Set$, and the membership function is 
        \begin{align*}
            \xi(a,h) &= \alpha(a) \meet \textstyle{\bigmeet_{a' \in m\inv(a)} \big(\alpha(a') \Rightarrow \emptyset h (a') \big)} 
            \tag{formula} \\
            &= \alpha(a) \meet \top
            \tag{big meet is empty, so by convention is $\top$} \\
            &= \alpha(a).
        \end{align*}
        \[
        \begin{tikzcd}
            \setvbar{(a,h)}{h : m\inv(a) \to \emptyset\inv(a)} & & A \\
            & \labels
            \ar[from=1-1, to=1-3, "\pi_1"]
            \ar[from=1-1, to=2-2, "\xi"']
            \ar[from=1-3, to=2-2, "\alpha"]
        \end{tikzcd}
        \]
        We can simplify the carrier set, as in $\Set$.
        When $A_0 \subseteq A$, we obtain thus $\lnot (A_0,\alpha_0) = (A \setminus A_0, \alpha)$, as desired.

        \item
        In $\Graph$, given a graph $A$, we similarly have $0_A = \overline{0_A} = \emptyset \mono A$.
        Take $m : A_0 \mono A$.
        We have
        \begin{align*}
            \lnot m
            &= (\emptyset \stackrel{\emptyset}{\mono} A)^{(A_0 \stackrel{m}{\mono} A)}
            \tag{def.~$\lnot$} \\
            &=
            \begin{cases}
                \text{vertex set:} \setvbar{(v,h)}{v \in A(V), h:\set{\cdot} \times m\inv(v) \to \emptyset\inv(v)}  \\
                \text{edge set:} \setvbar{(e,k)}{e \in A(E), k:\set{\cdot \to \cdot} \times m\inv(e) \to \emptyset\inv(e)}
            \end{cases}
            \stackrel{\inclusion}{\mono} A
            \tag{formula} \\
            &=
            \begin{cases}
                \text{vertex set:} \setvbar{v \in A(V)}{m\inv(v) = \emptyset} \\
                \text{edge set:} \setvbar{e \in A(E)}{m\inv(e) = \emptyset} 
            \end{cases}
            \stackrel{\inclusion}{\mono} A
            \tag{simplifying}
        \end{align*}
        Notice that this gives the largest subgraph of $A$ totally disconnected from $A_0$.
        Indeed, consider $A_0 \subseteq A$ for simplification.
        Then $m \inv(v) = \emptyset \iff \set{v} \cap A_0 = \emptyset$, i.e., $v \notin A_0(V)$.
        However, $m \inv(e) \neq \emptyset \iff \set{e} \cap A_0 = \emptyset$, does \textit{not} mean that $e \notin A_0$.
        Its source or target can be in $A_0$ and thus in the intersection.
        Hence,
        \[
            \lnot \lnot m = 
            \begin{cases}
                \text{vertex set:} \setvbar{v \in A(V)}{m\inv(v) \neq \emptyset} \\
                \text{edge set:} \setvbar{e \in A(E)}{m\inv(e) \neq \emptyset} \ \ 
            \end{cases}
            \stackrel{\inclusion}{\mono} A,
        \]
        is the subgraph $A_0$ where all edges with source and target in $A_0$ are added.
        
        \item
        In $\FuzzyGraph$, given $(A_0,\alpha_0) \subseteq (A,\alpha)$, the $\lnot \lnot$ operation is a combination of both previous cases.
        On the underlying subgraph $A_0$, it gives the largest subgraph of $A$ totally disconnected from $A_0$.
        On the membership function, it replaces it with the one $\alpha$ of the main object.
        \qedhere
    \end{itemize}
\end{proof}

\begin{proof}[Proof of \cref{lem:separated_elements_in_fuzzy_multigraphs_are_fuzzy_simple_graphs}]
    The reasoning is the same as in \cref{lem:separated_elements_in_multigraphs_are_simple_graphs}.
    One might think that that the membership functions is problematic, for instance because of situations like that:
    \[
        \begin{tikzcd}[ampersand replacement=\&, sep=small]
            \set{\cdot^{0.2}} \& \\
            \set{\cdot^{0.6}} \& \set{\cdot^{0.4}}
            \ar[from=1-1, to=2-1, tail]
            \ar[from=1-1, to=2-2]
            \ar[from=2-1, to=2-2, dotted, "\nexists"']
        \end{tikzcd}
    \]
    However, the definition of separated elements in \cref{def:dense_subobjects_and_separated_elements} says that \textit{for all} dense subobjects $B_0 \mono B$, there must exist at most one factorisation.
    Hence, it is always possible to consider the case where the membership values in $B_0$ and $B$ are the same, making it the same situation as \cref{lem:separated_elements_in_multigraphs_are_simple_graphs}, and hence the same reasoning.
\end{proof}

\end{document}